\documentclass[11pt]{article} 

\usepackage[a4paper,left=2.5cm,right=2.5cm,top=2.5cm,bottom=3.0cm]{geometry}
\usepackage[dvips]{epsfig}
\usepackage{amsfonts}
\usepackage{moreverb}

\parskip         0ex

\usepackage{amsthm}

\newtheorem{theorem}{Theorem}[section]
\newtheorem{lemma}{Lemma}[section]
\newtheorem{corollary}{Corollary}[section]

\newtheorem{proposition}{Proposition}[section]

\usepackage{graphicx}
\usepackage{xcolor}
\usepackage{enumitem}
\usepackage{amssymb}
\usepackage{amsmath}

\usepackage[noend, noline, ruled, linesnumbered]{algorithm2e} 
\SetAlFnt{\normalsize}
\DontPrintSemicolon
\SetKwComment{Comment}{$\triangleright$\ }{}
\SetKwProg{Def}{def}{:}{}
\SetArgSty{textnormal}
\SetKwRepeat{Do}{do}{while}
\SetKwComment{Comment}{$\triangleright$\ }{}

\begin{document}

\title{\bf Exploring Wedges of an Oriented Grid\\by an Automaton with Pebbles}

\author{
Subhash Bhagat\footnotemark[1]
\and
Andrzej Pelc\footnotemark[1]\footnotemark[2]
}
\date{ }
\maketitle
\def\thefootnote{\fnsymbol{footnote}}

\footnotetext[1]{Department of Mathematics, Indian Institute of Technology Jodhpur, India. {\tt sbhagat@iitj.ac.in}}

\footnotetext[2]{D\'{e}partement d'informatique, Universit\'{e} du Qu\'{e}bec en Outaouais, Canada. {\tt pelc@uqo.ca}\\ Supported in part by NSERC discovery grant 2018-03899 and by
the Research Chair in Distributed Computing of the
Universit\'{e} du Qu\'{e}bec en Outaouais.}

\centerline{\today}


\maketitle

\begin{abstract}
A mobile agent, modeled as a deterministic finite automaton, navigates in the infinite anonymous oriented grid $\mathbb{Z} \times \mathbb{Z}$.   
It has to explore a given infinite subgraph of the grid by visiting all of its nodes. We focus on the simplest subgraphs, called {\em wedges}, spanned by all nodes of the grid located between two
half-lines in the plane, with common origin. Many wedges turn out to be impossible to explore by an automaton that cannot mark nodes of the grid. Hence, we study the following question:
Given a wedge $W$, what is the smallest number $p$ of (movable) pebbles for which there exists an automaton that can explore $W$ using $p$ pebbles? Our main contribution is a complete solution of this problem. For each wedge $W$ we determine this minimum number $p$, show an automaton that explores it using $p$ pebbles and show that fewer pebbles are not enough.
We show that this smallest number of pebbles can vary from 0 to 3, depending on the angle between half-lines limiting the wedge and depending on whether the automaton can cross these half-lines or not. 
\vspace{1ex}

  \noindent
  {\bf Keywords:}  finite automaton, exploration, plane, grid, pebble, wedge,
                  mobile agent.  
\end{abstract}

\thispagestyle{empty}
\newpage
\setcounter{page}{1}

\baselineskip    0.2in
\parskip         0.0in
\parindent       0.3in

\section{Introduction}

\subsection{The background}
Graph exploration by mobile agents is a well-researched task with many practical applications. Communication networks may be explored in order to search for a data item located at an unknown node.
A mobile robot may need to explore a contaminated mine in order to decontaminate it. Often the mobile agent is a simple device, and hence it is natural to model it as a finite automaton. Exploration of subgraphs of the infinite oriented grid $\mathbb{Z} \times \mathbb{Z}$ is of particular interest, as it has an immediate application in exploring terrains in the plane. Indeed, an automaton that can explore a subgraph $G$ of the grid embedded in the plane, with unit distance between adjacent nodes, will see all points of the planar terrain spanned by $G$ if it is equipped with vision of radius 1.

\subsection{The model and the problem}

A mobile agent, modeled as a deterministic finite automaton, navigates in the infinite anonymous oriented grid $\mathbb{Z} \times \mathbb{Z}$. Nodes of the grid do not have labels but ports at each node are labeled $N,E,S,W$, according to the orientation. The agent has a compass that enables it to see the ports $N,E,S,W$ at every visited node. Directions corresponding to ports $N,E,S,W$ are called, respectively, North, East, South, West.
Lines East-West of the grid are called {\em horizontal} and lines North-South are called {\em vertical}.

We are given two half-lines $H_1$ and $H_2$ in the plane,  with common origin $O$.
More precisely, the input to the problem are two unit vectors with real coordinates, corresponding to these half-lines.
Let $W(H_1,H_2)$ denote the (closed) subset of the plane consisting of all points of the plane between $H_1$ and $H_2$, clockwise from $H_1$ to $H_2$, with these half-lines included.
The {\em wedge} corresponding to these half-lines is defined as the subgraph of the grid spanned by all nodes of the grid in $W(H_1,H_2)$. 
The lines $H_1$ and $H_2$ are called {\em boundaries} of the wedge.
If the clockwise angle from $H_1$ to $H_2$ is $0<\alpha \leq 2\pi$, the wedge is called an $\alpha$-wedge.
$\alpha$-wedges for $\alpha<\pi$ are called {\em small} and $\alpha$-wedges for $\alpha\geq \pi$ are called {\em large}.
For given (unit vectors of) half-lines $H_1$ and $H_2$ we want to construct a finite automaton exploring the wedge corresponding to these half-lines. This means that every node of the wedge must be eventually visited by the agent.

The agent starts at some node of the wedge, chosen by an adversary. It may move from a node of the grid to an adjacent node, but some ports at any node may be {\em blocked}: this means that the agent cannot choose this particular port. Non-blocked ports are called {\em free}.
Consider a wedge $W$ corresponding to half-lines $H_1$ and $H_2$. The half-line $H_i$, for $i=1,2$, is called a {\em wall}, if for every edge $e=\{u,v\}$ of the grid, such that $u $ is a node of $W$, $v$ is not a node of $W$, and $H_i$ intersects the closed segment $[u,v]$, the ports at $u$ and at $v$ corresponding to edge $e$ are blocked. Intuitively, the agent cannot cross a wall to leave or re-enter the wedge.
Only a wall can block the agent.
The half-line $H_i$, for $i=1,2$, is called {\em free}, if for every edge $e=\{u,v\}$ of the grid, such that $u $ is a node of $W$, $v$ is not a node of $W$, and $H_i$ intersects the closed segment $[u,v]$, the ports at $u$ and at $v$ corresponding to edge $e$ are free. Intuitively, the agent can cross a free boundary half-line to leave or re-enter the wedge.
A free boundary of a wedge only limits the wedge but does not restrict the moves of the agent.
Entering any node, the agent learns which ports are blocked and which are free.

We consider three types of wedges: {\em walled wedges}, both of whose half-lines are walls, {\em semi-walled wedges}, for which the half-line $H_1$ is a wall and the half-line $H_2$ is free, and {\em free wedges}, both of whose half-lines are  free. In particular, if the wedge is free, this means that the agent has to explore this wedge but navigates in the grid with all ports at all nodes free.
Since in the case of free wedges the agent cannot distinguish the common origin $O$ of its boundaries from any other point of the plane, in the task of exploration of a free wedge we assume that the origin $O$ is at the starting node of the agent. For all other types of wedges, the point $O$ can be arbitrary, and the only restriction is that the agent starts at a (adversarially chosen) node of the grid inside the wedge.

We will see that the type of wedge to be explored, as well as the size of the angle between its boundaries, determine the feasibility of its exploration by an automaton. We will also see that some wedges are impossible to explore by an automaton that cannot mark visited nodes. Hence we allow the agent to use {\em movable pebbles}, and the goal is to explore a given wedge by an automaton using as few pebbles as possible. The agent has $p$ pebbles available to it and it starts carrying all of them. When entering a node, it sees if there is a pebble at this node or not. It may drop a pebble if there is no pebble at the current node and the agent carries a pebble, it may pick a pebble, if there is a pebble at the current node, or it may perform neither of these two actions. We now give a formalization of the above intuitive description.

Let $Ports=\{N,E,S,W\}$, $Free=2^{Ports}$, and $P=\{0,1,\dots,p\}$.
The mobile agent is formalized as a finite deterministic Mealy automaton
${\cal A}=(X,Y,Q,\delta,\lambda,S_0)$. 

 $X=Free \times \{e,f\} \times P$ is the input alphabet, 
$Y=\{N,E,S,W\} \times \{e,f\} \times P$ is the output alphabet. $Q$ is a finite set of states
with a special state $S_0$ called initial.
$\delta:Q\times X \to Q$ is the state transition function, and $\lambda:Q \times X \to
Y$ is the output function.

The meaning of the input and output symbols is the following.  At each step of its functioning, the agent is at some node of the grid.
It sees the set $F\in Free $ of free ports at the current node,
and carries some number $x \in P$ of pebbles. Moreover, the current node is either empty (no pebble), denoted by $e$, or full (has a pebble) denoted by $f$.
The input $I\in X$ gives the automaton information about these facts. The agent is in some state $S$.
 Given the state $S$ and the input $I$,
the agent outputs the symbol $\lambda(S,I)\in \{N,E,S,W\} \times \{e,f\} \times P$ with the following meaning. The first term indicates the port through which the agent moves. The second term determines whether the agent leaves the current node empty or  full, and the third term indicates with how many pebbles the agent transits to the adjacent node. Since the agent must choose a free port at each node, and can only either leave the current node intact and not change the number of carried pebbles, or pick a pebble from a full node leaving it empty (in this case the number of carried pebbles increases by 1), or drop a pebble on an empty node leaving it full, if it carried at least one pebble (in this case the number of carried pebbles decreases by 1), we have the following restrictions on the possible values of the output function $\lambda$:

$\lambda (S,F,\cdot,\cdot)$ must be $(d, \cdot,\cdot)$ with $d\in F$.

$\lambda(S,e,0)$ must be $(\cdot, e,0)$

$\lambda(S,e,z)$ must be either $(\cdot,e,z)$ or $(\cdot,f,z-1)$

 $\lambda(S,f,z)$ must be either $(\cdot,f,z)$ or $(\cdot,e,z+1)$

 Seeing the input symbol $I$ and being in a current state $S$, the agent makes the changes indicated by the output function (it possibly changes the filling of the current node as indicated, possibly changes the number of carried pebbles as indicated, and goes to the indicated
 adjacent node), and transits to state $\delta(S,I)$.
 The agent starts with $p$ pebbles at an empty node in the initial state $S_0$ (hence its initial input symbol is $(\cdot,e,p)$).

\subsection{Our results}

For every pair of half-lines $H_1$ and $H_2$ with common origin $O$, we determine the minimum number $p$ of pebbles sufficient to explore the corresponding wedge $W$
by a finite deterministic automaton, depending on the type and size of the wedge.
We do this by showing a finite deterministic automaton that explores this wedge with $p$ pebbles (depending on the type and size of the wedge) and proving that $p-1$ pebbles are not enough for this task.

The difficulties of the positive results are mostly geometric. Since the agent can make only vertical and horizontal moves, and the slopes of the wedge boundaries can be arbitrary, visiting all grid nodes of the wedge often requires complicated navigating techniques to avoid missing nodes close to the walls which the agent cannot cross. These techniques that guarantee exploration of a given wedge, regardless of the initial position of the agent in the wedge, are the main methodological novelty of the paper. Such difficulties did not exist in the exploration of the obstacle-free grid considered in \cite{ELSUW}.  On the other hand, the difficulties of the negative results are due to considerations regarding bounded memory of the agent and the optimal number of pebbles: it has  to be proved that an insufficient number of pebbles does not permit an agent modeled by an automaton to ``remember'' sufficiently much information, and consequently the agent must enter a loop of states and miss some nodes of the wedge.

We show that for small wedges, the minimum number of pebbles is (cf. Fig. \ref{table}):
\begin{itemize}
\item
0, if the wedge is walled,
\item
1, if the wedge is semi-walled,
\item
2, if the wedge is free.
\end{itemize}

Then we show that for large wedges, the minimum number of pebbles is:
\begin{itemize}
\item
1, if the wedge is walled,
\item
2, if the wedge is semi-walled,
\item
3, if the wedge is free.
\end{itemize}

\begin{figure}[h]
   \vspace*{-.194in}
     \centering
    \includegraphics[scale = .8]{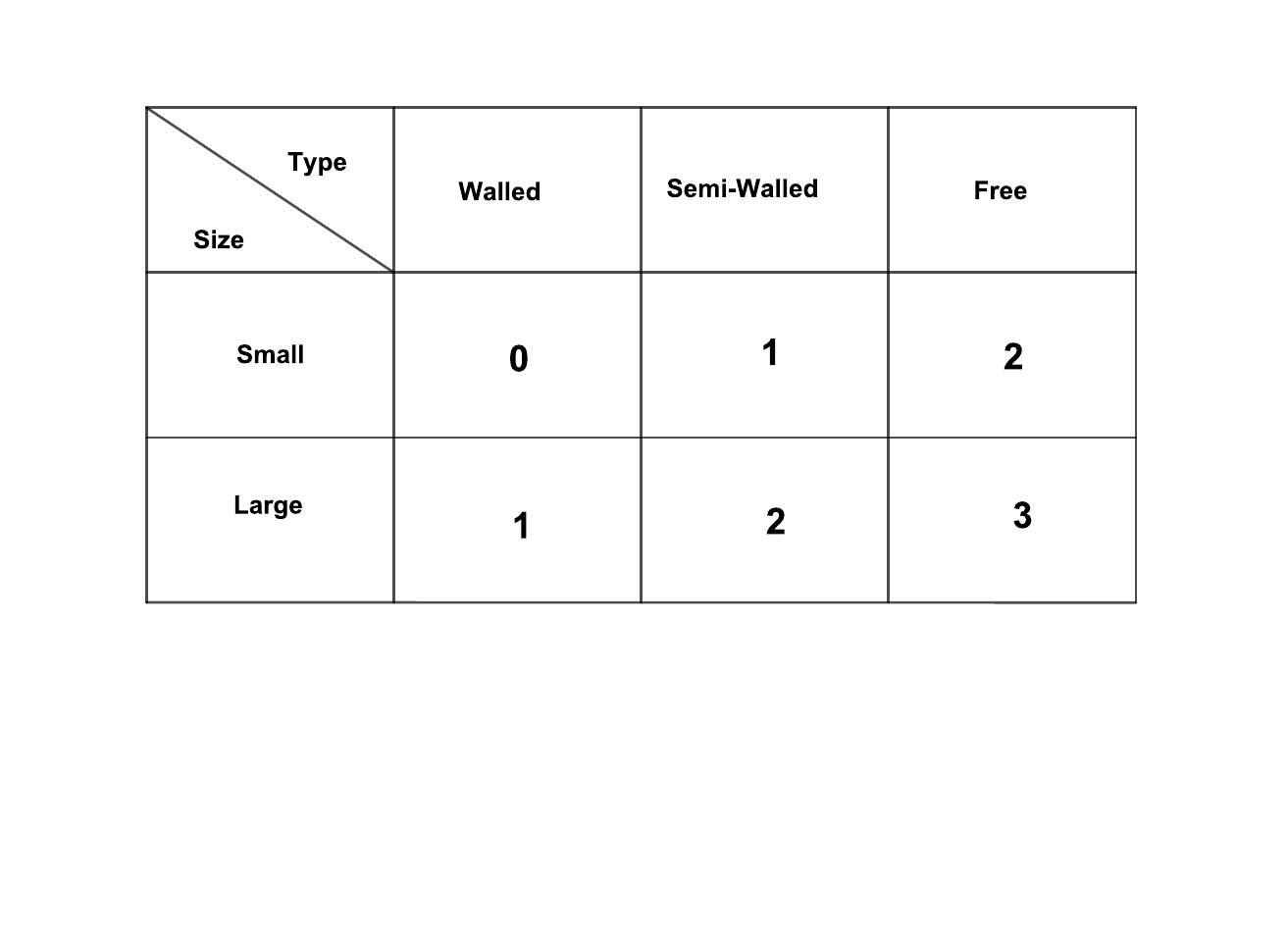}
   \vspace*{-.8in}
    \caption{Summary of our results: in each box, the minimum number of pebbles sufficient to explore a wedge depending on its type and size}
    \label{table}
\end{figure}

We observe that a small wedge is not always a connected subgraph of the grid, hence the first of the six above statements should be read as follows:  every {\em connected} small wedge can be explored by an automaton with no pebbles. In our solution, the automaton explores the connected component of its starting node in case when the wedge is disconnected. (Clearly, disconnected walled wedges cannot be entirely explored, regardless of the number of pebbles). For small semi-walled or free wedges the agent may cross the free boundary half-line (going beyond the wedge) and the corresponding results hold without any restriction: the wedge can always be explored with the stated number of pebbles. Large wedges are always connected subgraphs of the grid, hence there is no restriction either.

The last of our six results says that the minimum number of pebbles to explore a large free wedge is three. Its positive part, i.e., showing how to explore any free wedge using an automaton with three pebbles, is a strengthening
of the result from \cite{ELSUW} which says that 4 (semi-synchronous) deterministic automata can explore the grid. Indeed, three of the automata can simulate pebbles, and the entire grid is a large free wedge. 


Our approach should be contrasted with that of the classic papers \cite{BK78,B78,H81} that dealt with exploring subgraphs of the grid by agents with pebbles. While in these papers the explored subgraphs were {\em finite} and the task was to construct an automaton exploring all subgraphs from a given class, with as few pebbles as possible, we consider a given {\em infinite} subgraph of the grid (a wedge) and want to construct an automaton with as few pebbles as possible that explores this given subgraph. Of course, for any given {\em finite} graph it is trivially possible to explore it by a (sufficiently large) automaton without pebbles.

A finite deterministic automaton can remember a finite number of bits by encoding them in its states. Following the practice in the literature on automata navigating in graphs, and in order to increase readability, we present our positive results by designing exploration algorithms that need only remember a constant number of bits and thus can be executed by such automata, rather than formally constructing an automaton by defining its output and state transition functions.


\subsection{Related work}

Exploration of various environments by mobile agents
has been studied for many years
(cf.~\cite{H89,RKSI}). The literature of this domain can be divided into two parts, according to the environment where the agents operate:
it can be either a geometric terrain, possibly with obstacles, or a network modeled as a graph in which the agents move along edges.

In the plane, a related problem is that of pattern formation \cite{DFSY,DPV,SY}. Robots, modeled as points moving in the plane have to form a pattern given as input. This task has been mostly studied in the context of  asynchronous oblivious robots having full visibility of other robots positions.

The graph setting can be further specified in two different
ways. In \cite{AH,BFRSV,BS,DP,FI04} the mobile agent explores strongly
connected directed graphs and it can move only from tail to head
 of a directed edge, not vice-versa. In
\cite{ABRS,BRS2,B78,DFKP,DJMW,DKK,PaPe,R79-80} the explored
graph is undirected and the agent can traverse edges in both
directions. Graph exploration scenarios can be also classified from a different point of view. 
It is either assumed that nodes of the
graph have unique labels which the agent can recognize (as in,
e.g.,~\cite{DP,DKK,PaPe}), or it is assumed that nodes are anonymous
(as in, e.g.,~\cite{BFRSV,BS,B78,CDK,R79-80}). In our case, we work with wedges of  the infinite anonymous grid, hence it is
an undirected anonymous graph scenario.
Two main efficiency measures are adopted in papers dealing with
graph exploration. One is the completion time of this task, measured
by the number of edge traversals, (cf., e.g.,~\cite{PaPe}), and the other is the
memory size of the agent, measured either in bits or by the number of
states of the finite automaton modeling the agent (cf.,
e.g., \cite{DFKP,FIPPP,FI04}). In the present paper we are not concerned with minimizing the memory size but we assume
that this memory is finite. However, we want to minimize the number of pebbles used by the agent.

The capability of an agent to explore anonymous undirected graphs has
been studied in, e.g., \cite{BK78,B78,DFKP,FIPPP,K79,R79-80}. 
In this context, the explored graphs were finite and the task was to construct a single automaton that explores a given class of these graphs.
In particular, it was shown in \cite{R79-80} that no finite automaton can
explore all cubic planar graphs (in fact no finite set of finite
automata can cooperatively perform this task). Budach \cite{B78} proved that a single automaton cannot explore all mazes, i.e., finite connected subgraphs of the infinite grid.
Hoffmann \cite{H81} proved that one pebble does not help to do it. By contrast, Blum and Kozen \cite{BK78} showed that this task can be accomplished by two cooperating automata or by a single automaton with two pebbles.
More recently, it was shown in \cite{DHK} that an agent with $\Theta(\log\log n)$ distinguishable pebbles and bits of memory can explore any bounded-degree graph with at most $n$ nodes, and that these bounds are tight. 
The size of port-labeled graphs which cannot be explored by a given automaton was
investigated in \cite{FIPPP}.

Recently many authors studied the problem of exploring the infinite anonymous oriented grid by cooperating agents modeled as either deterministic or probabilistic automata. Such agents are sometimes called ants. It was shown in \cite{ELSUW} that 3 randomized or 4 deterministic automata can accomplish this task. Then matching lower bounds were proved: the authors of \cite{CELU} showed that 2 randomized automata are not enough for exploring the grid, and the authors of \cite{BUW} proved that 3 deterministic automata are not enough for this task. On the other hand, an old result from \cite{BS77} shows that an automaton with 7 pebbles can explore any connected subgraph of the infinite grid.
 
\section{Small wedges}

\subsection{Walled wedges}

In this section we show that for any small connected walled wedge there exists an automaton that explores it without any pebbles.
We present two exploration algorithms: as a warm-up we formulate a simpler algorithm that works for {\em acute} wedges ($\alpha$-wedges with $\alpha<\pi/2$) and then a more difficult algorithm that works for  {\em obtuse} wedges ($\alpha$-wedges with $\pi/2 \leq \alpha<\pi$).

\subsubsection{Acute wedges}

We start with the observation that an acute wedge may be disconnected (see Fig. \ref{fig-1}). Disconnected walled wedges cannot be entirely explored by an agent, regardless of the number of pebbles.
Hence our algorithm will entirely explore only connected acute walled wedges.

\begin{figure}[h]
   \vspace*{-.194in}
     \centering
    \includegraphics[scale = .6]{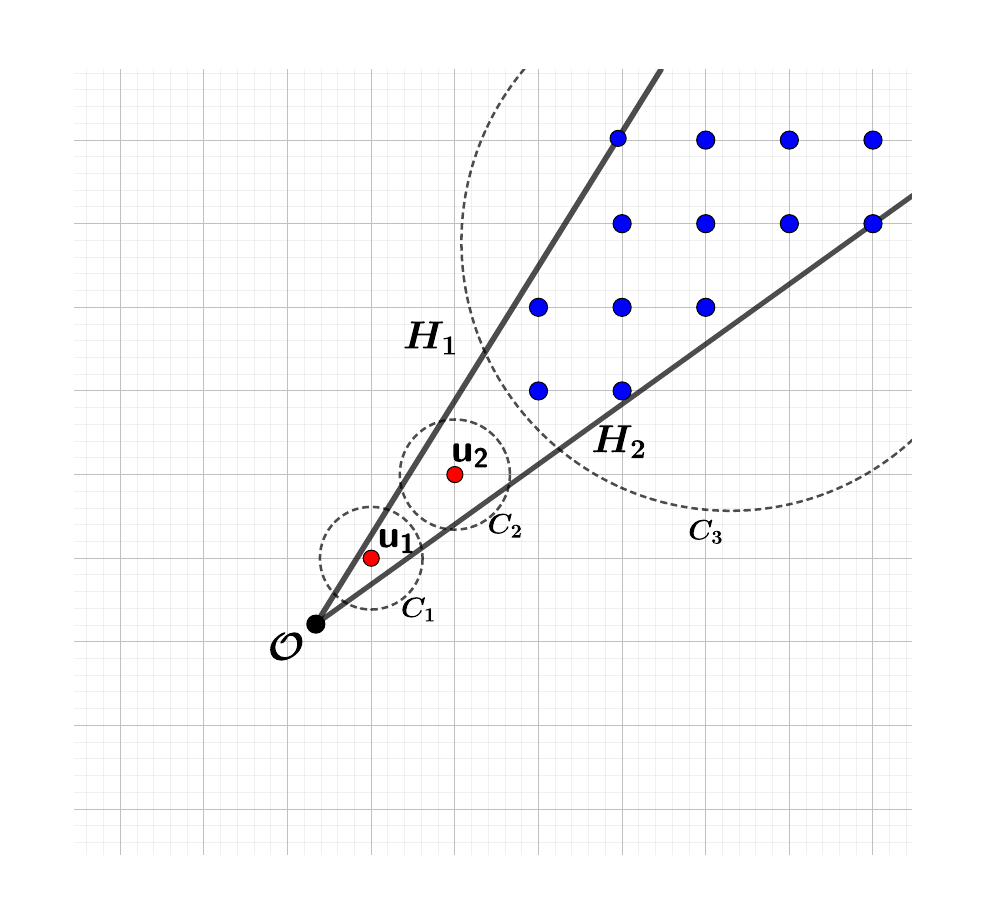}
   \vspace*{-.2in}
    \caption{A disconnected acute wedge with three connected components $C_1$, $C_2$ and $C_3$}
    \label{fig-1}
\end{figure}

Let $W$ be an acute wedge corresponding to half-lines $H_1$ and $H_2$ with common origin $O$. Observe that, regardless of the position of $H_1$ and $H_2$, both these half-lines can be intersected either by a horizontal or by a vertical line of the grid. Indeed, if both boundaries of $W$ are either North or both are South of the origin $O$ then both of them are intersected by a horizontal line, otherwise  both of them are intersected by a vertical line. Without loss of generality, assume that both boundaries are North of the origin $O$, and hence can be intersected by a horizontal line. The algorithm for the other cases is similar.
\begin{figure}[h]
   \vspace*{-.194in}
     \centering
    \includegraphics[scale = .5]{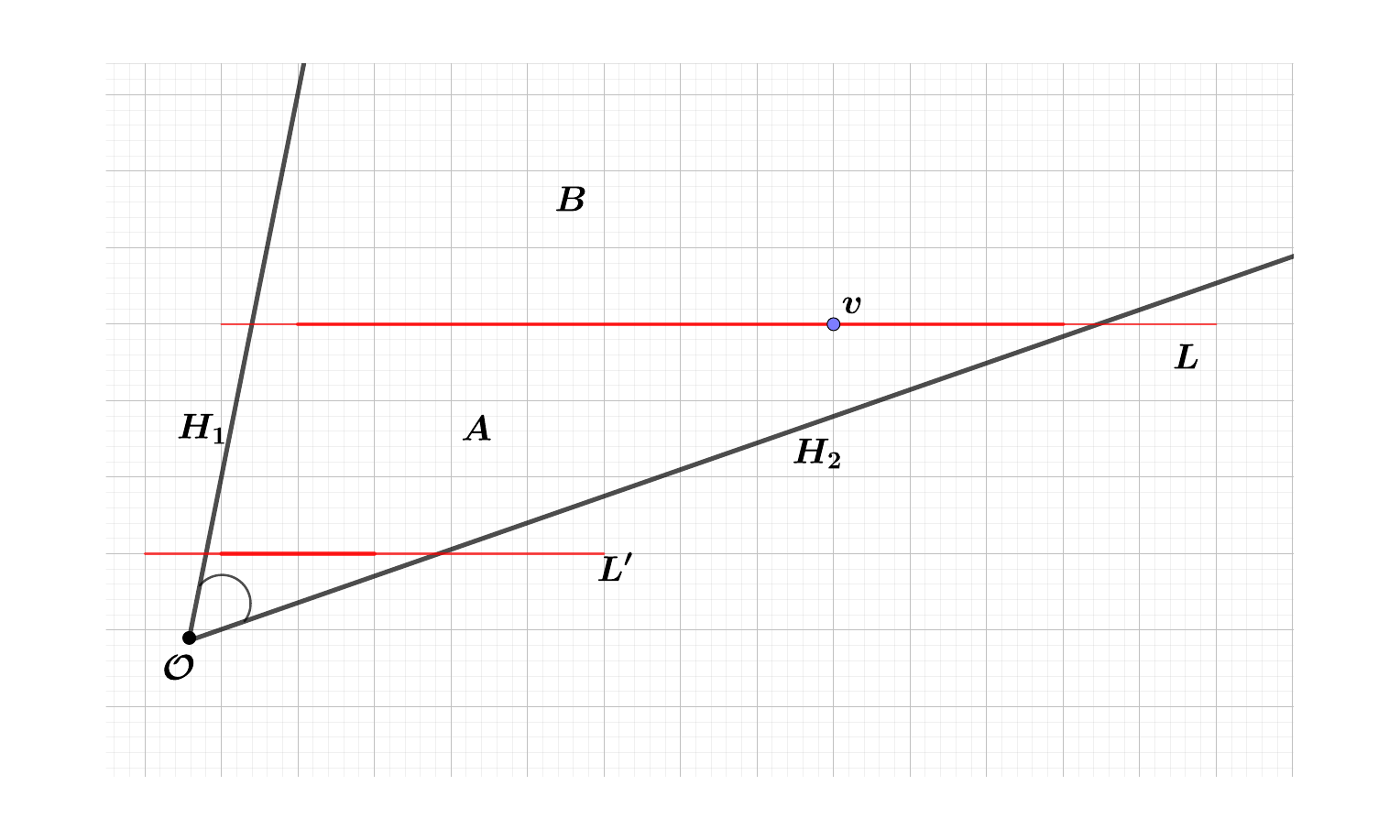}
   \vspace*{-.2in}
    \caption{An illustration of the Algorithm {\tt Explore Acute Walled}}
    \label{fig-2}
\end{figure}

The high-level idea of the algorithm is the following. The agent starts at some node $v$ of the wedge and explores the segment of the horizontal line $L$ containing $v$,  that is contained in the wedge (see Fig. \ref{fig-2}). It does this by first going West until finding a blocked port $W$ and then going East until finding a blocked port $E$. Then it explores the (finite) part $A$ of the wedge South of $L$ in phases. In each phase it explores the  segment of the current horizontal line included in the wedge, and then transits to the horizontal line immediately South from the current one. This is done until there is no free port $S$ from any node of the currently visited segment. At this point the agent is at a horizontal line $L'$ and starts the trip North, again in phases. In each phase it explores the segment of the current horizontal line, contained in the wedge, as before. In the next phase it transits to the horizontal line immediately North of it, explores the segment and so on, indefinitely. Notice that at the beginning of its trip North, the agent re-explores already explored segments in the part $A$ of the wedge, but it has to do these repetitions, as it cannot remember the number of segments previously explored on its way South of $L$.

We will use the following procedures. The first one explores the current horizontal segment and checks if some port $S$ at a node of the explored segment is free.

\vspace*{0.5cm}
{\bf Procedure} {\tt Explore Horizontal}

\hspace*{1cm}$open-south :=false$

\hspace*{1cm}{\bf repeat}

\hspace*{2cm}take port $W$ 

\hspace*{1cm}{\bf until} port $W$ is free;

\hspace*{1cm}{\bf repeat}

\hspace*{2cm}take port $E$ 

\hspace*{1cm}{\bf until} port $E$ is free;

\hspace*{1cm}{\bf if} the port $S$ at some node is free, {\bf then} $open-south :=true$

\vspace*{0.5cm}

The second procedure executes the trip South of $L$.

\vspace*{0.5cm}
{\bf Procedure} {\tt Explore South}

\hspace*{1cm}{\tt Explore Horizontal}

\hspace*{1cm}{\bf while}  $open-south =true$ {\bf do}

\hspace*{2cm}{\bf while} port $S$ blocked take port $W$;

\hspace*{2cm}take port $S$

\hspace*{2cm}{\tt Explore Horizontal}

\vspace*{0.5cm}

The third procedure executes the trip North, starting at the southern-most segment explored by procedure  {\tt Explore South}. Now there is no need to check if it is possible to go North from the currently explored segment because it is always the case. The trip North is executed indefinitely, covering first the finite part $A$ and then the infinite part $B$.

\vspace*{0.5cm}
{\bf Procedure} {\tt Explore North}

\hspace*{1cm} {\bf repeat forever}

\hspace*{2cm}{\tt Explore Horizontal}

\hspace*{2cm}{\bf while} port $N$ blocked {\bf do} 

\hspace*{3cm}take port $W$;

\hspace*{2cm}take port $N$

\vspace*{0.5cm}

Now the exploration algorithm for acute walled wedges can be succinctly formulated as follows.

\vspace*{0.5cm}

{\bf Algorithm} {\tt Explore Acute Walled}

\hspace*{1cm}{\tt Explore South}

\hspace*{1cm}{\tt Explore North}

\begin{proposition}\label{acute}
For every connected acute walled wedge there exists an automaton that explores it without pebbles.
\end{proposition}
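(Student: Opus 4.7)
The plan is to prove correctness of Algorithm \texttt{Explore Acute Walled} in the WLOG configuration where both boundaries $H_1, H_2$ lie North of $O$; the three other orientations are handled by symmetric algorithms. The geometric skeleton of the argument rests on three facts that I would first state and verify: (i) the wedge region is convex (intersection of two half-planes), so its intersection with any horizontal grid line $y=n$ is an interval, and the wedge nodes on that line form a contiguous horizontal segment $\Sigma_n$; (ii) since both $H_1, H_2$ are walls and cross the horizontal line $y=n$ on either side of $\Sigma_n$ for every $n$ strictly above the apex, the westernmost node of $\Sigma_n$ has a blocked port $W$ and the easternmost has a blocked port $E$; (iii) only finitely many rows contain wedge nodes below the starting row $L$, because the wedge narrows to and terminates near $O$.

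Next I would verify Procedure \texttt{Explore Horizontal}. Starting from any node of $\Sigma_n$, the first \texttt{repeat} moves West along $\Sigma_n$ (all intermediate ports $W$ are free because consecutive nodes of $\Sigma_n$ are connected through the wedge) and terminates exactly when the westernmost node is reached, by (ii); the second \texttt{repeat} symmetrically reaches the easternmost node. So every node of $\Sigma_n$ is visited in a single scan, and the final assignment to \emph{open-south} is made based on complete information about the segment.

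I would then verify \texttt{Explore South} by induction on the iteration count. Assuming that at the start of the $k$-th iteration the agent sits at the easternmost node of $\Sigma_{n-k+1}$ having fully explored $\Sigma_n, \Sigma_{n-1}, \dots, \Sigma_{n-k+1}$, if \emph{open-south} is true then the set of nodes of $\Sigma_{n-k+1}$ with free port $S$ is a nonempty contiguous subsegment (by convexity and (ii)), so walking West reaches such a node without leaving $\Sigma_{n-k+1}$; taking port $S$ lands the agent on some node of $\Sigma_{n-k}$, which is a single contiguous segment by (i), and the next \texttt{Explore Horizontal} fully explores it. By (iii) this recursion terminates at the bottom row, say $\Sigma_{n-K}$, where all ports $S$ are blocked by the walls and \emph{open-south} becomes false. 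I would verify \texttt{Explore North} analogously: each outer iteration starts with the agent at the easternmost node of some $\Sigma_m$, and after \texttt{Explore Horizontal} and the northward transition ends at a node of $\Sigma_{m+1}$. Here the step ``while port $N$ blocked take port $W$'' is in fact vacuous, because in an upward-opening convex wedge the horizontal cross-sections satisfy $\Sigma_m \subseteq \Sigma_{m+1}$, so every node of $\Sigma_m$ has a free port $N$. Combining the two phases, every row from $\Sigma_{n-K}$ up to infinity is fully scanned, hence every node of the wedge is eventually visited.

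The main obstacle I anticipate is the transition argument inside \texttt{Explore South}: verifying rigorously that the westward walk along $\Sigma_{n-k+1}$ from its easternmost node cannot overshoot the westernmost node before encountering a free port $S$. This reduces to showing that whenever $\Sigma_{n-k}$ is nonempty, the set of $S$-free nodes of $\Sigma_{n-k+1}$ is a nonempty subinterval lying strictly between the two walls; this follows from convexity of the wedge region together with fact (ii), because the two walls can each block at most an initial segment of $S$-ports on the east and west ends of $\Sigma_{n-k+1}$, leaving a nonempty middle block aligned above $\Sigma_{n-k}$.
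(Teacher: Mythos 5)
Your overall plan is the same as the paper's: argue that \texttt{Explore Horizontal} scans the full horizontal segment of the current row (terminating at the walls), that \texttt{Explore South} sweeps the finitely many rows below the starting row, and that \texttt{Explore North} then sweeps every row from the bottom one upward, the whole thing being implementable by an automaton. The paper's proof states this at a high level; your write-up adds detail, and your treatment of \texttt{Explore Horizontal} and \texttt{Explore South} is sound. However, the extra detail you add for \texttt{Explore North} contains a false claim, and it is exactly at the step where the algorithm's safeguard matters.

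You assert that the wedge is ``upward-opening,'' that $\Sigma_m \subseteq \Sigma_{m+1}$, hence that every node of $\Sigma_m$ has a free port $N$ and the loop ``while port $N$ blocked take port $W$'' is vacuous. The hypothesis ``both boundaries North of $O$'' only says both boundary directions have positive vertical component; it does not put the North direction inside the wedge, so the wedge may be a tilted cone and the containment fails. Concretely, take $O=(0.5,0.5)$, $H_1$ with slope $5.5$ and $H_2$ with slope $1$ (an acute wedge, both boundaries North of $O$, and connected: rows $1,2,3,\dots$ are $\{1\},\{1,2\},\{1,2,3\},\{2,3,4\},\dots$ and consecutive rows always share a column). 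The node $(2,8)$ lies in the wedge while $(2,9)$ does not, so $(2,8)\in\Sigma_8$ has its port $N$ blocked by the wall $H_1$, contradicting $\Sigma_8\subseteq\Sigma_9$. Worse, in the mirror-image wedge (boundary slopes $-1$ and $-5.5$, still within the case you analyze) the node with the blocked $N$ port is the \emph{easternmost} node of its row, i.e., precisely where the agent stands after \texttt{Explore Horizontal}; there the ``while port $N$ blocked take port $W$'' loop genuinely executes, so it is not dead code, and your verification of the row-to-row transition does not cover this situation. What is actually needed (and what the paper leaves implicit without asserting anything false) is the weaker statement that, by connectivity of the wedge, some column contains wedge nodes in both rows $m$ and $m+1$; hence, walking West from the East end of $\Sigma_m$, the agent must reach a node with a free port $N$ strictly before it could ever face a blocked port $W$, takes port $N$ into $\Sigma_{m+1}$, and the next \texttt{Explore Horizontal} scans that row. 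With that replacement your argument goes through; as written, the \texttt{Explore North} step has a real gap.
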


\begin{proof}
Consider a connected acute walled wedge $W$.
Without loss of generality, assume the case considered for algorithm {\tt Explore Acute Walled}, i.e., that both boundaries of $W$ are North of the origin $O$. All other cases are similar. 
Upon completion of procedure {\tt Explore South} the agent explores the part of the wedge consisting of all segments of horizontal lines South of the horizontal line where the agent starts.
Then, at each turn of the loop ``repeat forever'' of procedure {\tt Explore North}, the agent explores consecutive horizontal segments of the wedge starting at the southern-most one. Hence all nodes of the wedge are eventually explored. Observe that both procedures {\tt Explore South} and {\tt Explore North} can be executed by a finite automaton independent of the starting node.
Note that, if the wedge were not connected, then the agent starting in the infinite connected component would miss the finite southern-most part of the wedge disconnected from the component containing its starting node.
\end{proof}

\subsubsection{Obtuse small wedges}

We now consider an obtuse small walled wedge, i.e., an $\alpha$-wedge for $\pi/2 \leq \alpha<\pi$. Notice that all such wedges are connected. 
We define a line to be  {\em rational} if it contains at least two distinct nodes of the grid. If a line is rational then it is either vertical or its slope (the tangent of its angle with a horizontal line) is rational.
For positive integers $x$ and $y$, a rational line is called a $(x,y)$-line if the slope of this line is $y/x$.
For a small wedge with boundaries $H_1$ and $H_2$, call a line {\em cutting} if it intersects both boundaries $H_1$ and $H_2$ of the wedge 
(see Fig. \ref{fig-3}).

\begin{figure}[h]
   \vspace*{-.194in}
     \centering
    \includegraphics[scale = .5]{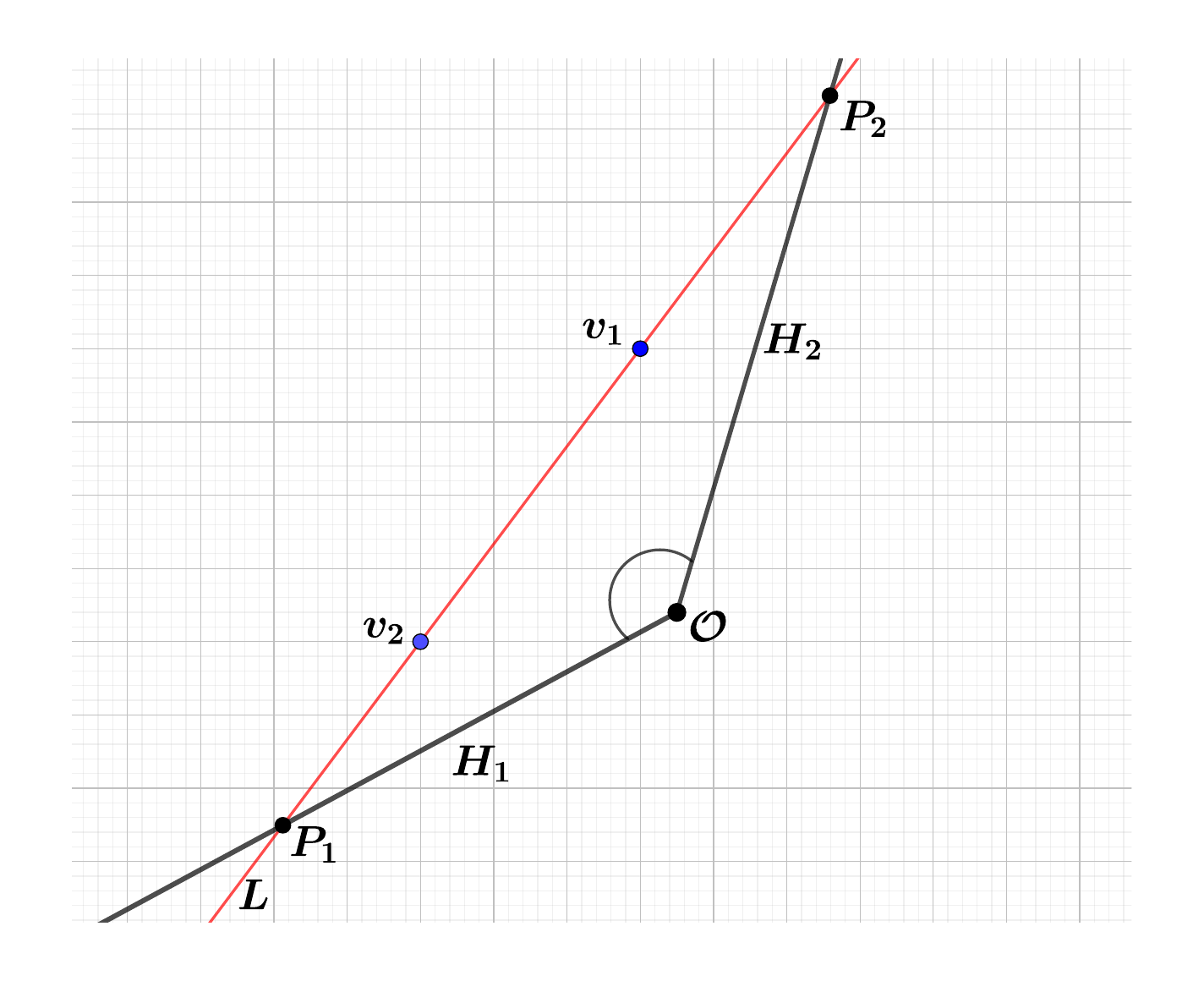}
   \vspace*{-.2in}
    \caption{A cutting rational line $L$: $v_1$ and $v_2$ are grid nodes in $L$}
    \label{fig-3}
\end{figure}

We will use the following simple lemma.

\begin{lemma}\label{cutting}
For any small wedge, there exists a rational cutting line.
\end{lemma}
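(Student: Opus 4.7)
The plan is to characterize cutting lines by their direction alone, and then invoke density of integer directions on the unit circle.

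Write $H_i = \{O + t u_i : t \geq 0\}$, where $u_1, u_2$ are the unit direction vectors of the two boundaries. For a line $L = \{p + sd : s\in\mathbb{R}\}$ with direction $d$ not parallel to $u_i$, the intersection with $H_i$ exists iff the parameter $s_i = \frac{(p-O)\times d}{u_i\times d}$ is nonnegative (using the scalar 2D cross product). Hence, by translating $L$ perpendicular to $d$, one can make it cut both $H_1$ and $H_2$ iff $u_1\times d$ and $u_2\times d$ are both nonzero and of the same sign. Geometrically, this says that $d$ lies outside the open angular sector strictly between $u_1$ and $u_2$ and outside its opposite sector. Since $\alpha<\pi$, these two forbidden sectors together cover strictly less than the full unit circle, so the set $D$ of admissible directions is a non-empty open subset of the unit circle.

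Integer directions $(a,b)$ with $\gcd(|a|,|b|)=1$ are dense in the unit circle, so I would pick some $d^* = (a,b)\in D$. The rational lines with direction $d^*$ are precisely the lines $L_k : bx - ay = k$ for $k \in \mathbb{Z}$, each containing infinitely many grid nodes. For any point $p = (x,y)$ on $L_k$ one computes $(p-O)\times d^* = k - (b O_x - a O_y)$, a quantity whose sign and magnitude are controlled by the choice of $k$. Selecting $k$ sufficiently large (or sufficiently small) so that this sign matches the common sign of $u_1\times d^*$ and $u_2\times d^*$ produces a rational cutting line $L_k$.

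No substantial obstacle arises; the only subtleties are to pick $d^*$ strictly inside the open arc $D$ (so that $u_i\times d^*\neq 0$, handled by openness of $D$ and density of integer directions), and to avoid the unique real value $k = b O_x - a O_y$ at which $L_k$ would pass through $O$ (handled by the infinitude of admissible integer $k$ on each side of this value). The argument is thus essentially a density-plus-openness argument on the circle of line directions.
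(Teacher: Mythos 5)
Your proof is correct and follows essentially the same route as the paper's: both arguments observe that the set of directions admitting a cutting line is open and non-empty (because $\alpha<\pi$), pick a rational direction in it by density, and then translate the line onto lattice points while preserving the cutting property. Your version merely makes explicit, via the cross-product computation and the choice of the integer offset $k$, the two steps the paper leaves informal (the ``slight rotation'' and the ``shift by a parallel translation'').
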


\begin{proof}
Since the angle between the boundaries of the wedge is less than $\pi$, there exists a cutting line $L'$. Let $\beta'$\ be the acute angle of $L'$ with a horizontal line. Since the tangent function is continuous, there exists a cutting line $L''$ (which is a slight rotation of $L'$) such that 
 the tangent of its angle $\beta''$ with a horizontal line is rational. Now, the line $L''$ can be shifted by a parallel translation to get a rational cutting line $L$. 
\end{proof}


Consider a wedge with boundaries $H_1$ and $H_2$. Without loss of generality assume that the vector determining $H_1$ has both components negative and the vector determining $H_2$ has both components positive. The algorithm for other cases is similar (including the limit case when one of the boundaries is either horizontal or vertical).

Given any node $v=(a,b)$ of a $(x,y)$-line $L$,  we define the {\em staircase} of $L$ containing $v$ as the polygonal line $(\dots (a-2x,b-2y),(a-2x,b-y),(a-x,b-y),(a-x,b),(a,b),(a,b+y),(a+x,b+y),(a+x,b+2y,(a+2x,b+2y),\dots)$. Call nodes of the staircase situated on the line $L$ (i.e., nodes $\dots ,(a-2x,b-2y),(a-x,b-y),(a,b),(a+x,b+y),(a+2x,b+2y),\dots$)  {\em cardinal nodes} of this staircase (see Fig. \ref{fig-4}).

\begin{figure}[h]
   \vspace*{-.194in}
     \centering
    \includegraphics[scale = .6]{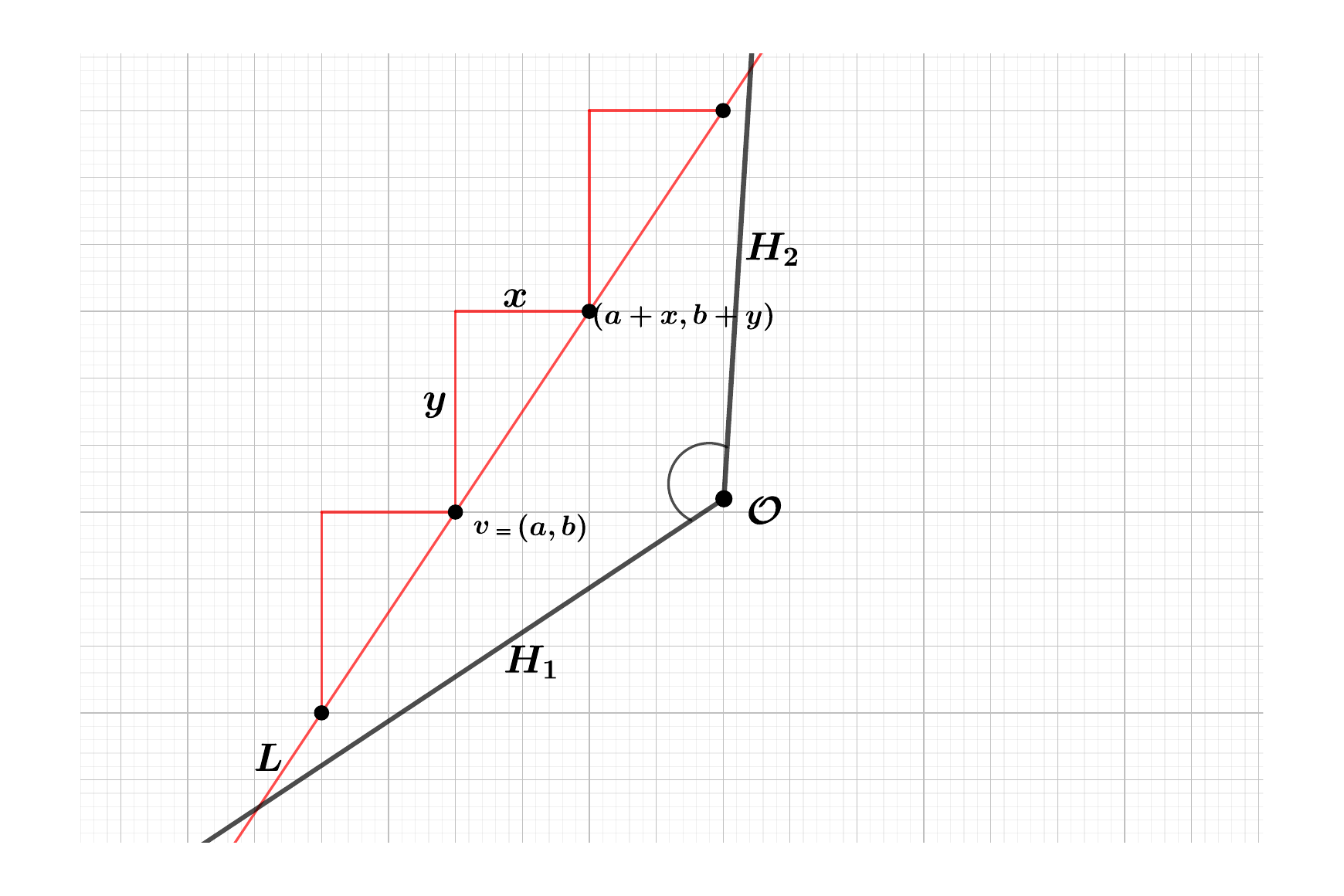}
   \vspace*{-.2in}
    \caption{An $(x,y)$-line and a staircase for $x=2$ and $y=3$. Cardinal nodes are marked.}
    \label{fig-4}
\end{figure}

For any $(x,y)$-line $L'$ parallel to $L$ and any node $w=(a',b')$ of the grid contained in $L'$, we define the {\em box} of $w$, as the set of all nodes $(a'-x',b'-y')$ of the grid, such that $0\leq x'\leq x$ and $0 \leq y' \leq y$. Exploring a box means visiting all nodes of it.
The {\em chain of boxes} for $L'$ and $w$ is the set of boxes for nodes $(\dots, a'-2x,b'-2y),(a'-x,b'-y), (a',b'), (a'+x,b'+y), (a'+2x,b'+2y),\dots)$ (see Fig. \ref{boxes}).

\begin{figure}[h]
   \vspace*{-.194in}
     \centering
    \includegraphics[scale = .5]{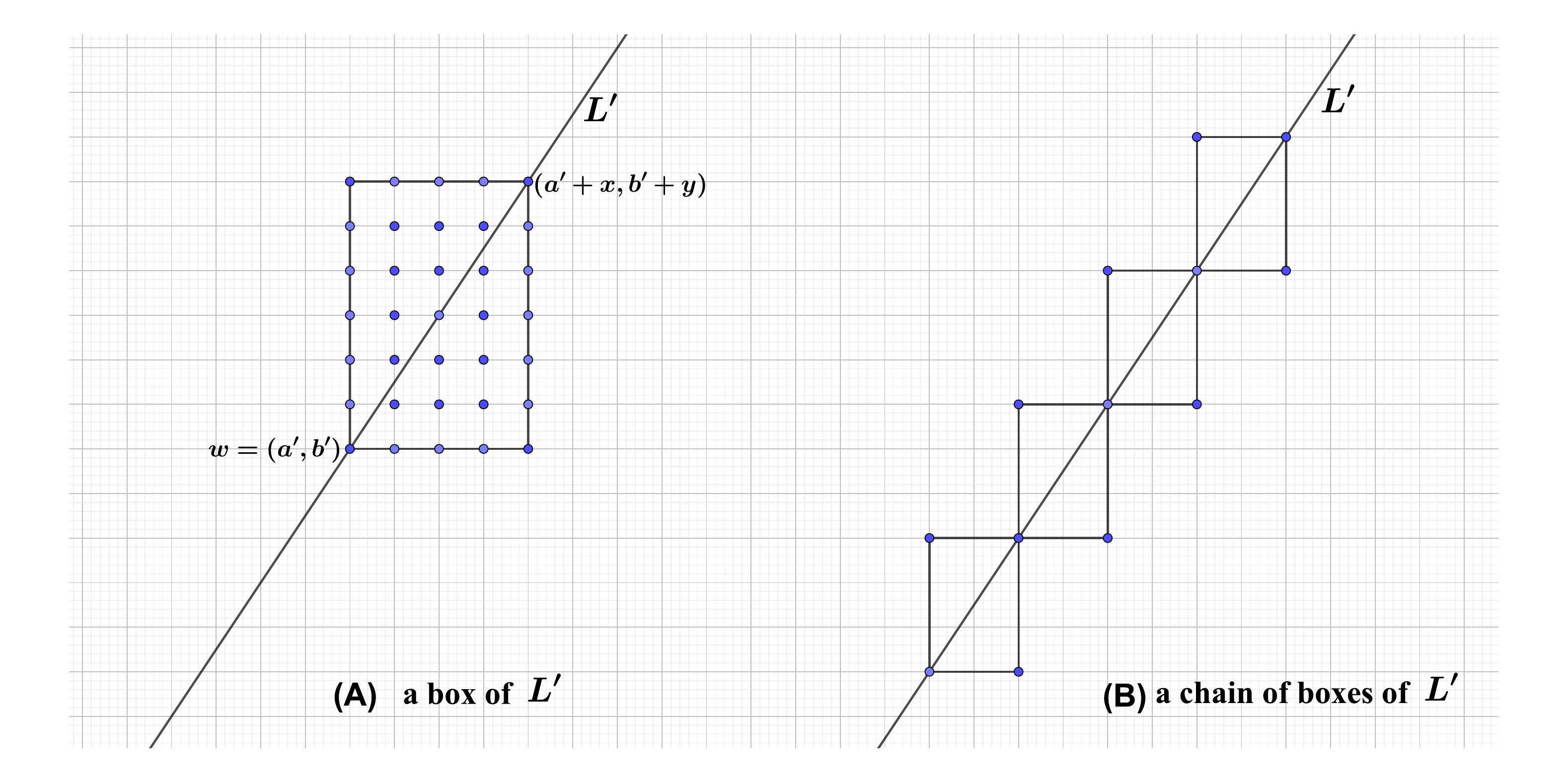}
   \vspace*{-.2in}
    \caption{A box and a chain of boxes}
    \label{boxes}
\end{figure}

The high-level idea of the algorithm is an extension of the idea of Algorithm {\tt Explore Acute Walled} but it is significantly more complicated. The agent starts at some node $v$ of the wedge. 
Let $L$ be a rational cutting line of the wedge passing through node $v$.

If the agent could travel along lines parallel to $L$, we could use the same method as in Algorithm {\tt Explore Acute Walled}: first explore all segments of the wedge South of $L$ and parallel to it, and then go North exploring consecutive segments parallel to $L$. Since this is impossible, we proceed as follows. First we use staircases of lines parallel to $L$ to go down to the {\em bottom line} $L^*$ parallel to $L$, which is close to the origin $O$ and will be precisely defined later. Then we explore all grid points of the triangle bounded by the two boundaries and by the bottom line. After having done this, we iteratively explore consecutive parts of the wedge bounded by the boundaries and consecutive lines $L_i$ parallel to $L^*$, in each iteration going one step North. We want to keep the invariant that after the $i$-th iteration all nodes of the wedge below line $L_i$ are explored. In order to do so, we explore consecutive chains of boxes of these lines. This is complicated by the fact that some  portions of such triangles close to the intersection points of $L_i$ with the boundaries could be omitted by the exploration of the chain of boxes. This is why we have additional procedures to cover these pieces.


We now proceed with the detailed description of the algorithm. The aim of its first part is to get to the bottom line $L^*$ which is so close to the origin $O$ that the number of nodes of the wedge inside the triangle below it is bounded, regardless of the starting node $v$.
In order to achieve this we will use the following procedures. 

%
%
%
%
%
%
%
%
%
%
%
%
%
%
%
%
%

The first procedure starts at a cardinal node of the current staircase segment, explores the upper part of it contained in the wedge, above this node, and ends at the northern-most cardinal node of
this segment. It also explores the part of the northern-most unfinished box included in the wedge.

\vspace*{.5cm}

{\bf Procedure} {\tt Explore Staircase Up}

$count :=x$

\hspace*{1cm}{\bf while}  $count=x$ {\bf do}

\hspace*{2cm} {\bf for} $i:=1$ {\bf to} $y$ {\bf do}

\hspace*{3cm} take port $N$;

\hspace*{2cm} $count :=0$

\hspace*{2cm}{\bf while} port $E$ free and $count <x$ {\bf do}

\hspace*{3cm} take port $E$;

\hspace*{3cm} $count := count +1$;

\hspace*{1cm}{\bf for }  $i:=0$ {\bf  to } $count -1$ {\bf do}

\hspace*{2cm} take port $W$;

\hspace*{1cm}{\bf for }  $i:=1$ {\bf  to } $y$ {\bf do}

\hspace*{2cm} take port $S$;

\hspace*{2cm} $count :=0$;

\hspace*{2cm} {\bf while} port $E$ free and $count<x$ {\bf do}

\hspace*{3cm} take port $E$;

\hspace*{3cm} $count := count +1$;

\hspace*{2cm} {\bf while} $count>0$ {\bf do}

\hspace*{3cm} take port $W$;

\hspace*{3cm} $count := count -1$;

\vspace*{1cm}

The next procedure consists of exploring the segment of the current staircase, trying to move one step East from some cardinal node of it, then exploring the segment of the staircase of the next parallel line to $L$ and so on. This is done until no such move East is possible from any cardinal point of the current staircase segment. Then the agent returns to the most recent cardinal node of the current segment which happens to be the southern-most cardinal node of~it.

\vspace*{0.5cm}

{\bf Procedure} {\tt Explore Down}

\hspace*{1cm} $done:=false$;

\hspace*{1cm} {\bf while} $done=false$ {\bf do}

\hspace*{2cm} {\tt Explore Staircase Up};

\hspace*{2cm} $flag:=false$; $count:=y$;

\hspace*{2cm} {\bf while}  $count=y$ and $flag=false$ {\bf do}

\hspace*{3cm} {\bf for} $i:=1$ {\bf to} $x$ {\bf do}

\hspace*{4cm} take port $W$;

\hspace*{3cm} $count :=0$

\hspace*{3cm} {\bf while} port $S$ free and $count <y$ {\bf do}

\hspace*{4cm} take port $S$;

\hspace*{4cm} $count := count +1$;

\hspace*{3cm} {\bf if} $count=y$ {\bf then}

\hspace*{4cm} {\bf if} port $E$ free {\bf then} 

\hspace*{5cm} take port $E$; $flag :=true$

\hspace*{3cm} {\bf else}

\hspace*{4cm} $done:=true$;

\hspace*{1cm}{\bf for }  $i:=0$ {\bf  to } $count -1$ {\bf do}

\hspace*{2cm} take port $N$;

\hspace*{1cm}{\bf for }  $i:=1$ {\bf  to } $x$ {\bf do}

\hspace*{2cm} take port $E$;

\vspace*{0.5cm}

Let $L^*$ be the line parallel to $L$ at which procedure  {\tt Explore Down} stops. Call it the {\em bottom line}. 
Although, by definition of $L^*$, it is impossible to move East within the wedge from any cardinal node of the staircase of $L^*$, there may still be grid nodes in the triangle bounded by the two boundaries of the wedge and by the line $L^*$. The aim of the next procedure is to visit all these nodes.

First consider the following task. Given a node $A$ of the wedge and a finite set $\Sigma$ of nodes of the wedge, the agent starting at node $A$ has to visit all nodes from the set $\Sigma$ and come back to $A$.
The procedure {\tt Visit} $(A, \Sigma)$ accomplishing this task is the following. Let $\Sigma=\{B_1,\dots , B_k\}$.  
Order all paths in the wedge, considered as sequences of  ports $N,E,S,W$ that the agent takes at consecutive steps, in lexicographic order.
For every node $B_i$ let $\pi_i$ be the lexiocographically smallest path  from $A$ to $B_i$ (such a path exists by connectivity of the wedge). 
We define the procedure traverse $\pi_i$ as follows. The agent tries consecutive ports of the path as long as they are free. If a port is blocked, the agent stops.
We define the procedure traverse $rev(\pi_i)$ as traversing the reverse of the part of the path visited during traverse $\pi_i$.

The procedure can be formulated as follows.

\vspace*{0.5cm}

{\bf Procedure} {\tt Visit} $(A, \Sigma)$

\hspace*{1cm} {\bf  for} $i=1$ {\bf to} $k$ {\bf do}

\hspace*{2cm} traverse $\pi_i$; traverse $rev (\pi_i)$

\vspace*{0.5cm}

Now consider a wedge with boundaries $H_1$ and $H_2$ and fix a rational  cutting line $L$ that exists by Lemma \ref{cutting}. Suppose that this is an $(x,y)$-line. For any starting node $v$ of the agent in this wedge there exists a unique bottom line $L^*$. Notice that while the slope of $L$ was chosen independently of any starting node of the agent, the bottom line $L^*$ parallel to $L$ depends on the cardinal nodes in the execution of procedure {\tt Explore Down}, and hence it depends on the starting node $v$. However, for all starting nodes on the same horizontal line of the grid, the bottom line is the same, as 
for all such starting nodes, the cardinal nodes in the execution of procedure {\tt Explore Down} are the same. Also notice that if the vertical distance between two starting nodes is a multiple of $y$, then the bottom line is the same, for the same reason. It follows that if the starting node of the agent is $(a,ty+r)$, where $0\leq r<y$ is fixed, then the bottom line $L^*$ is fixed and the cardinal node at which the agent ends procedure {\tt Explore Down} is fixed. For such a fixed $0\leq r<y$, let $A_r$ be this fixed cardinal node and let $\Sigma_r$ be the finite set of grid nodes in the (closed) triangle $T_r$ bounded by $H_1$, $H_2$ and $L^*$. For any starting node $v$ there is a unique $r$, and hence a unique node $A_r$ and a unique set $\Sigma_r$. Moreover, knowing the slopes of boundaries $H_1$ and $H_2$, integers $x$ and $y$ determining the slope of $L$ can be chosen and all nodes $A_r$ and sets $\Sigma_r$ can be precomputed.  The following procedure visits all nodes from $\Sigma_r$ starting from $A_r$, for any possible $0\leq r<y$. Hence it explores all possible {\em bottom triangles} bounded by the boundaries of the wedge and by all possible bottom lines. We do this because it is impossible to find the exact location of the origin $O$ and hence to determine the exact bottom triangle for a given starting node of the agent.

\vspace*{0.5cm}

{\bf Procedure} {\tt Explore Bottom Triangles}

\hspace*{1cm} {\bf for} $r:=0$ {\bf to} $y-1$ {\bf do}

\hspace*{2cm} {\tt Visit} $(A_r, \Sigma_r)$

\vspace*{0.5cm}

Since for any starting node $v$, some $r$ is the actual one and thus some $A_r$ and $\Sigma_r$ are the actual ones, we have the following lemma.

\begin{lemma}\label{triangle}
Upon completion of procedure {\tt Explore Bottom Triangles}, all grid nodes in the bottom triangle (bounded by the boundaries of the wedge and by the bottom line) are visited and the agent is at a cardinal node of the staircase of the bottom line.
\end{lemma}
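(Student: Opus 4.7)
The plan is to establish two things in sequence: (i) at the start of procedure \texttt{Explore Bottom Triangles} the agent is located at $A_{r_0}$, where $r_0\in\{0,1,\dots,y-1\}$ is the residue modulo $y$ of the $y$-coordinate of the agent's starting node $v$; and (ii) during the iteration of the outer for-loop with $r=r_0$, all nodes of $\Sigma_{r_0}$ (which by definition is exactly the set of grid nodes inside the bottom triangle determined by $v$) are visited, and the final position of the agent is $A_{r_0}$, a cardinal node of the staircase of $L^*$. Point (i) follows immediately from the discussion preceding the lemma: procedure \texttt{Explore Down} terminates at a cardinal node depending only on the residue $r_0$ (since horizontally-shifted starts and vertical shifts by multiples of $y$ yield the same sequence of cardinal nodes), and this node is, by definition, $A_{r_0}$.

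The core of the proof is the invariance claim that each block \emph{traverse $\pi_i$; traverse $rev(\pi_i)$} returns the agent to the exact position it was in when the block began, regardless of where that position happens to be and regardless of whether walls abort the forward traversal. The reason is that \emph{traverse $\pi_i$} executes a prefix of the port sequence $\pi_i$ starting from the current node, stopping either at the end or upon hitting a blocked port. Each edge actually crossed in the forward pass has both endpoint-ports free (otherwise it could not have been crossed), so the opposite-direction ports along that prefix are free as well; hence \emph{traverse $rev(\pi_i)$} can, and does, retrace exactly the edges just used, in reverse order, returning the agent to its initial position. This establishes the invariance.

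Applying the invariance iteratively to the inner loop of \texttt{Visit}$(A_r,\Sigma_r)$ yields: whatever the agent's position at the beginning of the call, it is the same at the end of the call. Therefore, by induction on the outer loop index $r$, the agent is at $A_{r_0}$ at the beginning of every iteration, and in particular at the beginning of the iteration with $r=r_0$. In that iteration, the agent genuinely starts at $A_{r_0}=A_r$, so every lexicographically smallest path $\pi_i$ in the wedge from $A_{r_0}$ to $B_i\in\Sigma_{r_0}$ is entirely free (it lies in the connected wedge and no port along it is blocked), the forward traversal reaches $B_i$, and the reverse traversal brings the agent back to $A_{r_0}$. Hence every node of $\Sigma_{r_0}$ — which is precisely the set of grid nodes in the bottom triangle — is visited. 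After the final iteration ($r=y-1$), by the invariance once more, the agent is at $A_{r_0}$, a cardinal node of the staircase of $L^*$.

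The main obstacle is the subtle point that for the $y-1$ ``wrong'' values $r\ne r_0$, the call \texttt{Visit}$(A_r,\Sigma_r)$ executes blindly from the current position $A_{r_0}\ne A_r$, so the forward traversals may do unpredictable things and hit walls. The whole argument hinges on the traverse/reverse invariance absorbing all of this: no wrong iteration can displace the agent, so the correct iteration $r=r_0$ is guaranteed to start in the correct configuration and do its job. Once this invariance is stated and justified by the ``opposite port is free after a crossing'' observation, the rest of the proof is a short induction on the outer loop index.
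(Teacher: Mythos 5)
Your proposal is correct and follows essentially the same reasoning the paper relies on (the paper justifies the lemma only by the one-line remark that for the actual starting node some $r$, $A_r$ and $\Sigma_r$ are the actual ones, the traverse/reverse design of {\tt Visit} being exactly what makes the ``wrong'' iterations harmless). Your write-up merely makes explicit the position-invariance of each traverse $\pi_i$; traverse $rev(\pi_i)$ block — valid here because blocked ports occur symmetrically at both endpoints of an edge — which is the intended, though unstated, argument.
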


We now present several procedures that will be used in formulating the second part of the algorithm.
Our first aim is to explore a chain of boxes. We do this  by formulating 4 procedures.
The first of them starts at a node, explores $x$ nodes West of it and comes back.

\vspace*{0.5cm}

{\bf Procedure} {\tt Horizontal Trip}

\hspace*{1cm} $countW:=0$;

\hspace*{1cm} {\bf while} $countW<x$ {\bf do}

\hspace*{2cm} take port $W$;

\hspace*{2cm} $countW:=countW+1$;

\hspace*{1cm} {\bf while} $countW>0$ {\bf do}

\hspace*{2cm} take port $E$;

\hspace*{2cm} $countW:=countW-1$;

\vspace*{0.5cm}

The next procedure explores the box (or the accessible rectangular part of it) of the node where it starts, i.e., at the beginning of the procedure the agent is at the North-East corner of the box to be explored. At the end, the agent is at the South-West corner of the box or of the accessible rectangular part of it.
The boolean $complete$ becomes $true$ in the case when the procedure ended because the entire box was explored.

\vspace*{0.5cm}

{\bf Procedure} {\tt Explore Box}

\hspace*{1cm}  {\tt Horizontal Trip};

\hspace*{1cm} $countS :=1$;

\hspace*{1cm} $complete:=false$;

\hspace*{1cm} {\bf while} $S$ is free and $countS <y$ {\bf do}

\hspace*{2cm} take port $S$;

\hspace*{2cm}  {\tt Horizontal Trip};

\hspace*{2cm} $countS:=countS+1$;

\hspace*{1cm} {\bf if} $countS=y$ {\bf then}

\hspace*{2cm} $complete:=true$;

\hspace*{1cm} $countW:=0$;

\hspace*{1cm} {\bf while} $countW<x$ {\bf do}

\hspace*{2cm} take port $W$;

\hspace*{1cm} $countW:=countW+1$;

\vspace*{0.5cm}

The next procedure finishes the exploration of a box that was not completely explored by procedure {\tt Explore Box}.
It will be called after an execution of procedure {\tt Explore Box} with $complete=false$ and the initial value of $countS$ in {\tt Finish Box} will come from this execution.
Upon completion of the procedure the agent is at the South-West corner of the current box.

\vspace*{0.5cm}

{\bf Procedure} {\tt Finish Box}

\hspace*{1cm} {\bf while} $countS<y$ {\bf do}

\hspace*{2cm} take port $S$;

\hspace*{2cm}  $countS:=countS+1$;

\hspace*{2cm} $countE:=0$;

\hspace*{2cm} {\bf while} $E$ is free {\bf do}

\hspace*{3cm} take port $E$;

\hspace*{3cm} $countE:=countE+1$;

\hspace*{2cm} {\bf while} $countE>0$ {\bf do}

\hspace*{3cm} take port $W$;

\hspace*{3cm} $countE:=countE-1$;

\vspace*{0.5cm}

The above procedures {\tt Explore Box} and {\tt Finish Box} are now combined in the following procedure {\tt Explore Chain}. This procedure starts at a node $w$ in a line $L'$ parallel to $L$. Node $w$ is the North-East corner of a box $B$. The aim of the procedure is to explore iteratively the part of the chain of boxes of $L'$ starting  at box $B$, and using procedure {\tt Explore Box}   going down along $L'$ towards the wall $H_1$. The last box is explored partially and then the accessible part of it is explored by procedure {\tt Finish Box}. 
\vspace*{0.5cm}

{\bf Procedure} {\tt Explore Chain}

\hspace*{1cm} $complete: = true$;

\hspace*{1cm}  {\bf while} $complete = true$ {\bf do}

\hspace*{2cm} {\tt Explore Box};

\hspace*{1cm} {\tt Finish Box};

\vspace*{0.5cm}

Unfortunately, after the completion of procedure {\tt Explore Chain} there may be still a portion of the triangle between boundaries and the current cutting line that is not covered. This portion is close to the bottom vertex of this triangle and the next procedure aims at exploring it.

Let $P$ be the point where line $L'$ intersects the boundary $H_1$. Let $B$ be the last box (partially) explored by procedure {\tt Explore Chain}.
Let $P_1$ be the South-West corner of $B$ and let
$P_2$ be the point where line $L'$ intersects the southern side of $B$. Such a point exists by the definition of box $B$.
Let $x'$ be the length of the segment $P_1P_2$. Hence $x'\leq x$. Let $a$ be  the vertical distance between the point $P$ and the horizontal line $P_1P_2$.
Let $\gamma$ be the angle between $L'$ and $H_1$ and let $\delta$ be the angle between the vertical line and $L'$ (see Fig. \ref{below}).

\begin{figure}[h]
   \vspace*{-.194in}
     \centering
    \includegraphics[scale = .50]{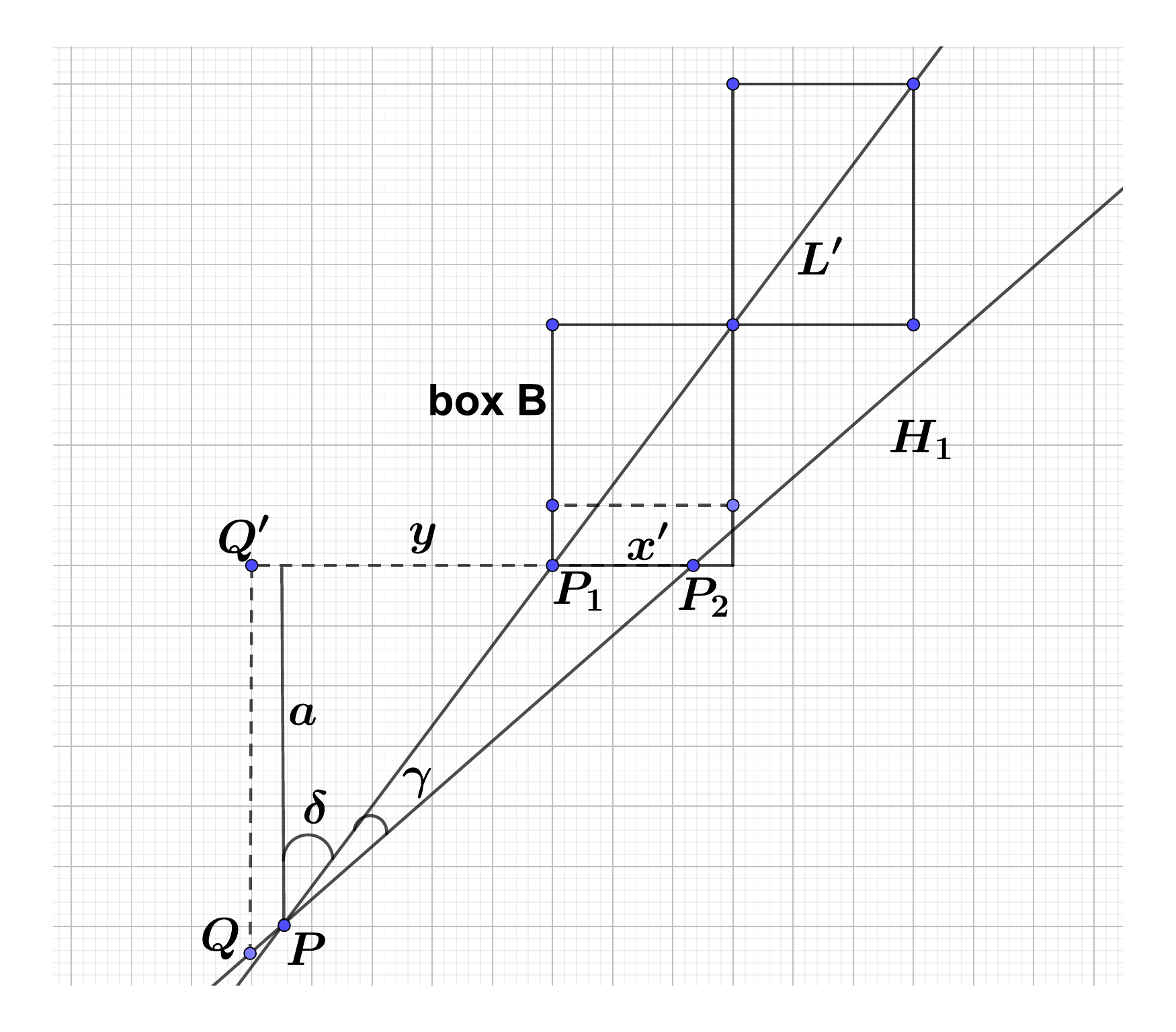}
   \vspace*{-.2in}
    \caption{Exploration of a chain of boxes}
    \label{below}
\end{figure}
Note that since the angle $\gamma +\delta$ is given in the description of the wedge and the slope of line $L'$ has been decided above, we know both angles $\gamma$ and $\delta$ before designing the algorithm.

We have $\tan(\gamma +\delta)=\frac{x'+y}{a}$ and $\tan \delta=\frac{y}{a}$. Let $z=\frac{\tan(\gamma +\delta)}{\tan \delta}-1$.  Hence $y=\frac{x'}{z}$. Let $Y=\lceil x/z \rceil$. 
In view of $x'\leq x$, we have $Y\geq y$. Let $Q'$ be the grid node at distance $Y$ West from $P_1$.
Hence the vertical line containing $Q'$ at distance $Y$ West from $P_1$ intersects the boundary $H_1$ at the point $Q$ South-West of $P$.

The following procedure starts and ends at the South-West corner $P_1$ of box $B$. Upon its completion, all grid nodes in the triangle $QQ'P_2$ are explored. Hence all grid nodes in the triangle $PP_1P_2$ are explored.

\vspace*{0.5cm}

{\bf Procedure} {\tt Below the Chain}

\hspace*{1cm} {\bf for} $i:=1$ {\bf to} $Y$ {\bf do}

\hspace*{2cm} take port $W$;

\hspace*{1cm} $countS :=0$;

\hspace*{1cm} {\bf while} $S$ is free {\bf do}

\hspace*{2cm} take port $S$;

\hspace*{2cm} $countS:=countS+1$;

\hspace*{2cm} $countE :=0$;

\hspace*{2cm} {\bf while} $E$ is free {\bf do}

\hspace*{3cm} take port $E$;

\hspace*{3cm} $countE:=countE+1$;

\hspace*{2cm} {\bf for} $i:=1$ {\bf to} $countE$ {\bf do}

\hspace*{3cm}  take port $W$;

\hspace*{1cm} {\bf for} $i:=1$ {\bf to} $countS$ {\bf do}

\hspace*{2cm} take port $N$;

\hspace*{1cm} {\bf for} $i:=1$ {\bf to} $Y$ {\bf do}

\hspace*{2cm} take port $E$;

\vspace*{0.5cm}

Note that the maximum values of variables $countS$ and $countE$ can be pre-computed knowing angles $\gamma$ and $\delta$, hence an automaton can be designed to perform the above procedure.

Now, starting at the bottom line, the following procedure explores systematically the part of the wedge above the bottom line, by repeatedly executing (forever) the sequence of procedures
{\tt Explore Chain},  {\tt Below the Chain}, {\tt Explore Staircase Up}
 and going one step North. Notice that while going up there is nothing to check: by assumption, the boundaries of the wedge have positive slopes, and hence the port $N$ is always free.

\vspace*{0.5cm}

{\bf Procedure} {\tt Explore Up}

\hspace*{1cm} {\bf repeat forever}

\hspace*{2cm} {\tt Explore Chain};

\hspace*{2cm} {\tt Below the Chain};

\hspace*{2cm} {\tt Explore Staircase Up}; 

\hspace*{2cm} take port $N$

\vspace*{0.5cm}
Now our algorithm can be succinctly formulated as follows.
\vspace*{0.5cm}

{\bf Algorithm} {\tt Explore Obtuse Small Walled}

\hspace*{1cm} {\tt Explore Down} 

\hspace*{1cm} {\tt Explore Bottom Triangles}

\hspace*{1cm} {\tt Explore Up}

\begin{figure}[h]
   \vspace*{-.194in}
     \centering
    \includegraphics[scale = .5]{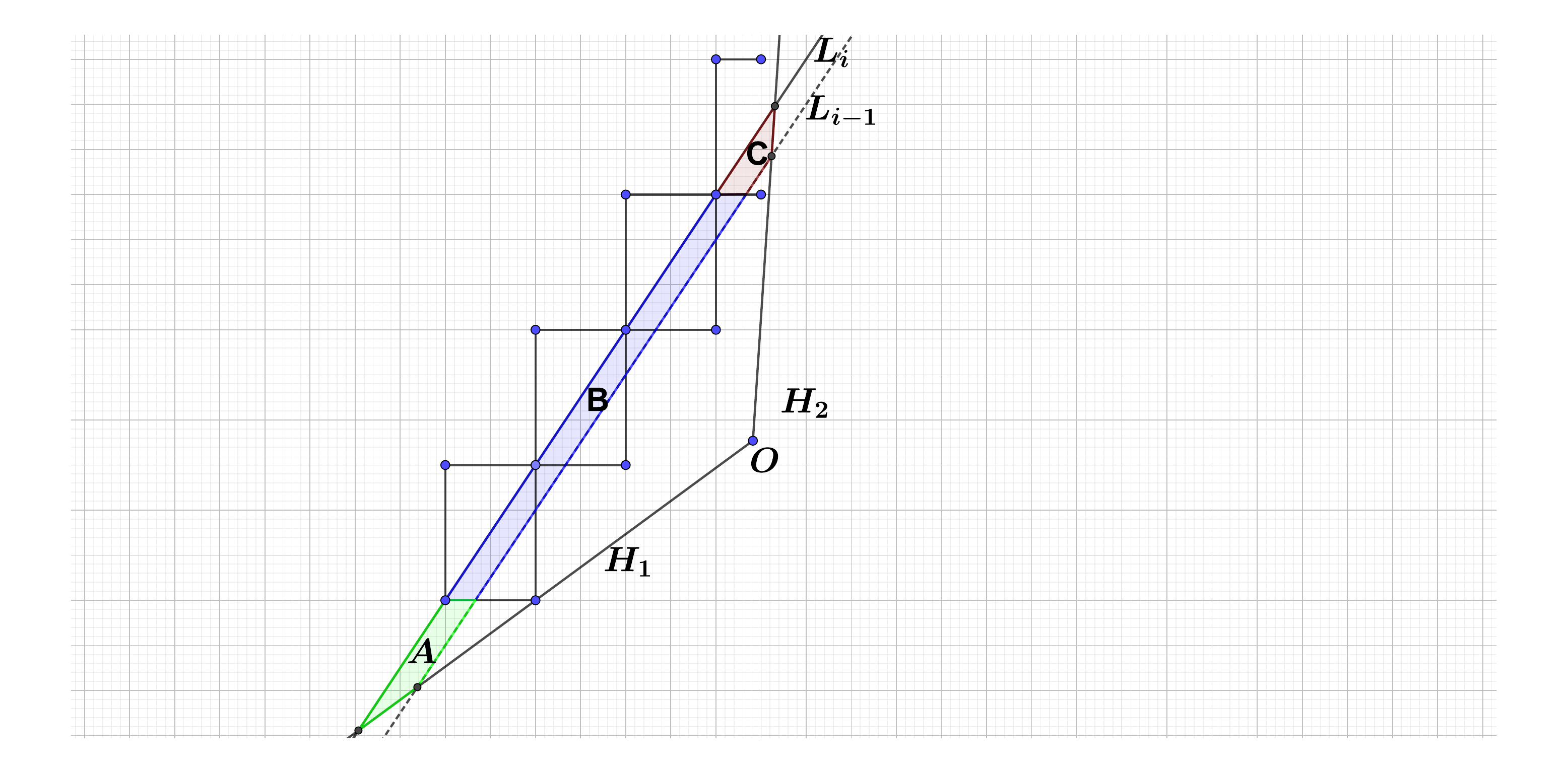}
   \vspace*{-.2in}
    \caption{Part A is explored by procedure Below the Chain, 
part B is explored by procedure Explore Chain, and  part C is explored during the 
execution of procedure Explore Staircase Up }
    \label{th 2.1}
\end{figure}

\begin{theorem}
For every connected small walled wedge there exists an automaton that explores it without pebbles.
\end{theorem}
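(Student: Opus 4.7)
My plan is to split the proof into the two cases already isolated in the text. For $\alpha$-wedges with $\alpha<\pi/2$, Proposition \ref{acute} gives the conclusion directly. The remaining work is the obtuse case $\pi/2 \leq \alpha<\pi$, where I must argue that Algorithm {\tt Explore Obtuse Small Walled} is correct and can be implemented by a finite automaton (without pebbles), independently of the starting node.

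First I would analyze the three phases of the algorithm separately. For {\tt Explore Down}, I would verify by induction on the staircase-segments visited that after each iteration the agent is at the southernmost cardinal node of a staircase on some $(x,y)$-line parallel to $L$, and that termination is equivalent to the impossibility of stepping East from any cardinal node; this defines the bottom line $L^*$. The exit condition is witnessed by the walls, and since the triangle between $L^*$ and the origin has bounded grid area (bounded in terms of $x$, $y$, and the slopes of $H_1$, $H_2$), the process terminates. For the bottom triangle, Lemma \ref{triangle} finishes the argument: one of the preset values $r \in \{0,\dots,y-1\}$ is the true vertical offset of $v$, so {\tt Visit}$(A_r,\Sigma_r)$ visits every grid node of the actual bottom triangle; the other calls just revisit already-explored regions, which is harmless.

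The heart of the proof is the correctness of {\tt Explore Up}. I would maintain the invariant that after the $i$-th iteration of the forever-loop, every grid node of the wedge lying weakly below the $i$-th line $L_i$ parallel to $L^*$ has been visited. The base case ($i=0$) is Lemma \ref{triangle}. For the inductive step I would use Fig.~\ref{th 2.1}: the strip between $L_{i-1}$ and $L_i$ inside the wedge decomposes into three regions, the chain of boxes hanging from the northern line (region B, covered by {\tt Explore Chain}), the small sub-triangle near the wall $H_1$ below the last box (region A, covered by {\tt Below the Chain}), and the piece running up to $L_i$ produced by the staircase ascent (region C, covered by {\tt Explore Staircase Up}). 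The geometric claim to verify is that these three regions together contain every grid node of the strip in the wedge; this uses the definition of an $(x,y)$-box, the bound $Y=\lceil x/z\rceil$ chosen so that the triangle $QQ'P_2$ contains $PP_1P_2$, and the fact that {\tt Explore Staircase Up} explores the upper part of each visited staircase segment as well as the reachable portion of the northernmost unfinished box. Since every grid node of the wedge lies below $L_i$ for some $i$, exploration is complete.

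The main obstacle is the geometric bookkeeping in this third step: proving that regions A, B, C truly cover the strip and that in particular no grid node near the wall $H_1$ slips between the chain of boxes and the sub-triangle explored by {\tt Below the Chain}. This is why the parameter $Y$ is chosen via $z=\tan(\gamma+\delta)/\tan\delta-1$; I would verify the two right-triangle identities $\tan(\gamma+\delta)=(x'+y)/a$ and $\tan\delta=y/a$ and conclude that the vertical line through $Q'$ meets $H_1$ south-west of $P$, which guarantees containment. Finally, to verify implementability by an automaton with no pebbles I would note that every procedure uses only counters bounded by constants ($x$, $y$, $Y$, and the precomputed sizes of $\Sigma_r$), and the slopes of $H_1$, $H_2$, $L$, together with the finite family $\{(A_r,\Sigma_r)\}_{r<y}$, are all hard-coded into the state set, so finitely many states suffice.
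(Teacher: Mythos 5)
Your proposal is correct and follows essentially the same route as the paper's proof: reduce to the obtuse case via Proposition~\ref{acute}, invoke Lemma~\ref{triangle} for the bottom triangle, and prove correctness of {\tt Explore Up} by induction on the loop iterations using the three-region (A, B, C) decomposition of each strip between consecutive lines $L_{i-1}$ and $L_i$, concluding with the observation that all parameters are precomputable so a finite automaton suffices. The extra detail you supply on the choice of $Y$ and the termination of {\tt Explore Down} only elaborates steps the paper treats more briefly.
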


\begin{proof}
In view of Proposition \ref{acute} we may assume that the wedge is obtuse (hence it is connected) and without loss of generality we may assume that the vector determining $H_1$ has both components negative and the vector determining  $H_2$ has both components positive. Other cases are similar. By the description of procedure {\tt Explore Down}, upon completion of it the agent is at a cardinal node of the bottom line.
It follows from Lemma \ref{triangle} that upon completion of procedure {\tt Explore Bottom Triangles}, all grid nodes in the bottom triangle (bounded by the boundaries of the wedge and by the bottom line) are visited.

Consider the bottom line $L^*$ and a grid node $u$ of it.
Let $L_i$, for $i=0,1,2,\dots$, be the line parallel to $L^*$ and containing the grid node $u_i$ such that $u_i$ is North of $u$ and the distance between them is $i$.
By induction on the number of turns of the loop ``repeat forever'' of procedure {\tt Explore Up}, all nodes of the wedge in the triangle bounded by the boundaries of the wedge and by line $L_i$ are explored after the $i$-th turn of this loop. Indeed, the base case for $i=0$ holds in view of Lemma \ref{triangle}, and the inductive step holds by the descriptions of procedures {\tt Explore Chain},  {\tt Below the Chain} and {\tt Explore Staircase Up}. More precisely, let $R_i$ be the region of the wedge between lines $L_{i-1}$ and $L_i$.
The middle part of $R_i$ is  explored by procedure {\tt Explore Chain}, the bottom part of $R_i$ is explored by procedure {\tt Below the Chain}, and  the top
part of $R_i$ is explored by the last loop {\bf for} $i:=1$ {\bf to} $y$ of procedure  {\tt Explore Staircase Up} (cf. Fig. \ref{th 2.1}).

Thus Algorithm {\tt Explore Obtuse Small Walled} explores the entire wedge. It is straightforward that procedures {\tt Explore Down} and {\tt Explore Up} can be executed by a finite automaton independent of the starting node (although the actual execution of these procedures and the bottom line may depend on the starting node). Since all nodes $A_r$ and sets $\Sigma_r$
in procedure {\tt Explore Bottom Triangles}  can be precomputed without knowing the starting node $v$, this procedure can also be executed by a finite automaton independent of the starting node.
\end{proof}

\subsection{Semi-walled wedges}

In this section we show that the minimum number of pebbles to explore a small semi-walled wedge is 1. We first show how to explore such a wedge using an automaton with one pebble and then we show that an automaton without pebbles cannot accomplish this task. Throughout the section we assume that the boundary half-line $H_1$ is a wall and the boundary half-line $H_2$ is free.
\subsubsection{Exploration with one pebble}

For the positive result, we may assume that the angle between boundaries of the wedge is at least $\pi/2$ because it is enough to show that a wedge wider than required can be explored, and we can modify the free boundary. Hence the wedge is connected. There are two cases: the easier one when boundaries are in adjacent quadrants of the plane and the more difficult case when they are in opposite quadrants or when one of the boundaries is either vertical or horizontal.  In the easier case there exists a cutting line of the wedge that is either vertical or horizontal which facilitates the moves of the agent.

\vspace*{0.5cm}

{\bf Boundaries in adjacent quadrants}

\vspace*{0.5cm}

Without loss of generality we assume that both boundaries are North of the origin $O$, $H_1$ has a negative and $H_2$ a positive slope. Hence the wedge has a horizontal cutting line. The algorithm for the other cases is similar. Moreover, by possibly widening the wedge, we can choose the slope of the free boundary $H_2$ to be $1/x$, for a positive integer $x$. If the origin $O$ is a grid node then $H_2$ is a rational $(x,1)$-line. In this case we define line $P$ to be $H_2$ (see Fig \ref{fig-5} (A) ). Otherwise, line $P$ is defined as the line with slope $1/x$ containing the South-East vertex of the unit square of the grid inside which the origin $O$ is located (see Fig \ref{fig-5} (B)-(C)). By definition, $P$ is an $(x,1)$-line and every node of it on a horizontal cutting line of the wedge is located either in the free boundary $H_2$ or East of it. Define the node $O'$ of the grid as follows. It is the origin $O$ of the wall, if $O$ is a node of the grid, and otherwise it is the North-East vertex of the unit square of the grid inside which the origin $O$ is located; we consider the West and the South sides of a unit square as belonging to it  (see Fig \ref{fig-5}). Define the node $O''$ as the western neighbor of~$O'$.

\begin{figure}[h]
   \vspace*{-.194in}
     \centering
    \includegraphics[scale = .60]{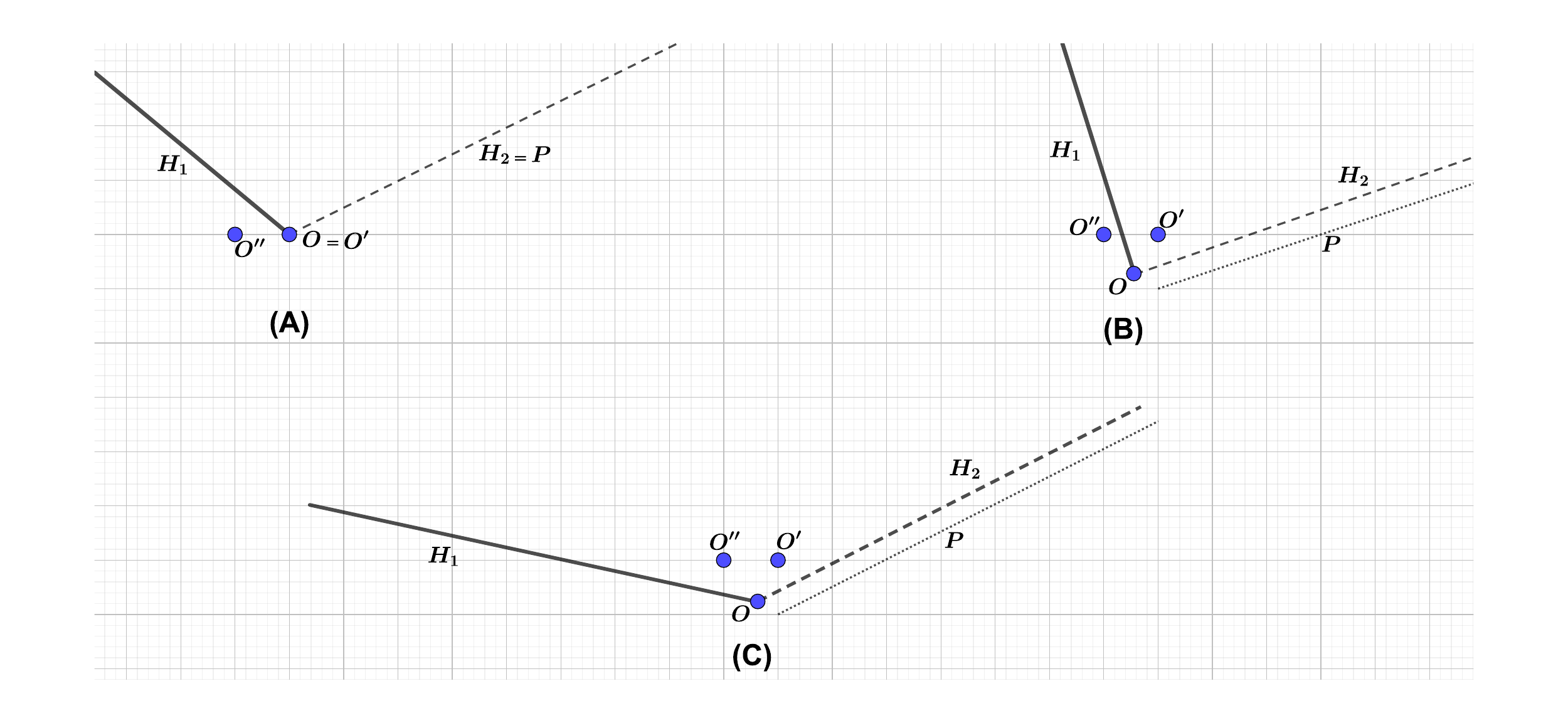}
   \vspace*{-.2in}
    \caption{(A) $O$ is a node of the grid, hence $O=O'$ and $O''$ is not in the wedge; procedure {\tt Go to Origin ends at node $O'$}. (B) $O$ is not a node of the grid, $O''$ is not in the wedge; procedure {\tt Go to Origin ends at node $O'$}. (C) $O$ is not a node of the grid, $O''$ is in the wedge; procedure {\tt Go to Origin ends at node $O''$}.  }
    \label{fig-5}
\end{figure}

The high-level idea of the algorithm is the following. 
The aim of the first part (formalized as procedure {\tt Go to Origin}) is to go West from the starting node until ``hitting'' the wall $H_1$ \footnote{``hitting the wall'' means that the agent tries a port which is blocked by the wall.}
 and then to ``slide down'' along $H_1$ (actually repeatedly going East and South) until reaching the grid node $O'$ or $O''$. In the second part, the pebble is used. This part  works in phases. In each phase, the segment of the current horizontal line contained in the wedge is explored as follows. A phase starts with the agent at a node which is on a horizontal cutting line $L$ of the wedge and on or East of the line $P$. The agent drops the pebble,  goes West until the wall and goes back to the pebble. Then it picks the pebble, goes one step North and $x$ steps East which brings it to a node on the next horizontal cutting line and on or East of the line $P$. Then it drops the pebble and a new phase (identical to the previous one) starts.

The first part of the algorithm is formalized as the following procedure. 

\vspace*{0.5cm}

{\bf Procedure} {\tt Go to Origin}

\hspace*{1cm} {\bf while} $W$ is free {\bf do}

\hspace*{2cm} take port $W$;

\hspace*{1cm} {\bf while} $W$ is blocked {\bf do}

\hspace*{2cm} {\bf while} $S$ is blocked {\bf do}

\hspace*{3cm} take port $E$;

\hspace*{2cm} take port $S$;

\hspace*{1cm} take port $N$;

\hspace*{1cm} {\bf if} $W$ is free {\bf then}

\hspace*{2cm} take port $W$;

\vspace*{0.5cm}

The ``while $W$ is free'' loop makes the agent hit the wall, and the  ``while $W$ is blocked'' loop makes it ``slide down'' along the wall. 
After leaving this loop the agent goes either to node $O'$ or to $O''$. More precisely it goes to $O'$, if $O''$ is outside of the wedge and it goes to $O''$ otherwise. 

The next procedure starts where the previous one ended, i.e.,  at node $O'$ or $O''$, and it implements the second part of the algorithm.

\vspace*{0.5cm}

{\bf Procedure} {\tt Bounce to Pebble}

\hspace*{1cm} {\bf for} $i:=1$ {\bf to} $x+1$ {\bf do}

\hspace*{2cm} take port $E$;

\hspace*{1cm} drop a pebble;

\hspace*{1cm} {\bf repeat forever}

\hspace*{2cm} {\bf while} port $W$ is free {\bf do}

\hspace*{3cm} take port $W$;

\hspace*{2cm} {\bf while} no pebble on current node {\bf do} 

\hspace*{3cm} take port $E$;

\hspace*{2cm} pick a pebble;

\hspace*{2cm} take port $N$;

\hspace*{2cm} {\bf for} $i:=1$ {\bf to} $x$ {\bf do}

\hspace*{3cm} take port $E$;

\hspace*{2cm} drop a pebble;

\vspace*{0.5cm}

The first ``for'' loop places the agent and the pebble on the horizontal line containing nodes $O'$ and $O''$ and either on the line $P$ or East of it. 
More precisely, the first case happens when the agent starts the procedure at node $O''$ and the second case happens when it starts at $O'$.
The ``repeat forever'' loop executes the consecutive phases of the second part of the algorithm.

Now the algorithm is a combination of the two above procedures.

\vspace*{0.5cm}

{\bf Algorithm} {\tt Explore Small Semi-Walled -- Adjacent Quadrants}

\hspace*{1cm} {\tt Go to Origin}

\hspace*{1cm}  {\tt Bounce to Pebble}

\begin{lemma}\label{adjacent}
For any small semi-walled wedge with boundaries in adjacent quadrants there exists an automaton that explores it with one pebble.
\end{lemma}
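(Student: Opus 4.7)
The plan is to verify the two procedures that constitute Algorithm {\tt Explore Small Semi-Walled -- Adjacent Quadrants} and then combine them. For {\tt Go to Origin} I would argue, regardless of the adversarially chosen starting node, that the first "while $W$ free" loop terminates because the wall $H_1$ has negative slope in the upper half-plane, so on the starting horizontal row there is some $W$-port blocked by the wall; the agent ends this loop at a grid node immediately East of the wall on its starting row. The second outer loop implements a "staircase descent" along the wall: whenever $S$ is blocked the agent must step East, and when $S$ becomes free it takes one $S$-step. Termination corresponds to reaching the last row for which $W$ is blocked, i.e., the row of $O'$. The subsequent $N$-step followed by a conditional $W$ then lands the agent at $O'$ (if $O''$ is outside $W$) or at $O''$ (otherwise), matching the three sub-cases of Fig.~\ref{fig-5}.

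The main work is in analysing {\tt Bounce to Pebble}. I would set up an invariant maintained at the top of the "repeat forever" loop: the agent and the pebble coincide at a grid node $v_h = (a,h)$ lying on or East of line $P$, with the horizontal line at height $h$ being a cutting line of the wedge. The base case is to check that the initial "for $i := 1$ to $x+1$" loop lands the agent on or East of $P$; this follows from a short case analysis on the three sub-cases of Fig.~\ref{fig-5}, using the fact that $P$ passes through either $O$ itself or the SE vertex of the unit square around $O$ and has slope $1/x$. The crucial geometric observation is that the displacement vector $(x,1)$ executed at the end of each phase (one $N$-step followed by $x$ $E$-steps) is parallel to $P$, so the "on or East of $P$" property is automatically preserved when the agent moves from row $h$ to row $h+1$.

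For the inductive step I would argue that the westward sub-loop sweeps the entire portion of row $h$ inside the wedge: since every grid node of $P$ on a cutting horizontal line lies on $H_2$ or East of it, $v_h$ is at or East of the rightmost wedge node on row $h$, and hence every wedge node on this row is visited before the agent hits the wall. The returning eastward sub-loop brings the agent back exactly to the pebble (which it then picks up), and the final $N$ plus $x$ $E$-steps reposition the pebble on row $h+1$ at a node on or East of $P$, re-establishing the invariant. Port $N$ is always free because the origin $O$ lies strictly below all processed rows, and the $E$-moves may freely cross the free boundary $H_2$.

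Finally, since every horizontal row with height $\geq O'_y$ is processed and no wedge node lies strictly below $O'_y$ (both boundaries emanate from $O$), every node of $W$ is eventually visited. All counters are bounded by the constant $x+1$ and the remaining state is bounded, so the whole algorithm is executed by a finite automaton carrying a single pebble. The main obstacle is confirming the geometric invariant of paragraph two: choosing the free boundary's slope to be $1/x$ with $x$ a positive integer is exactly what makes the single $N$-step together with $x$ $E$-steps move the agent one cutting row upward while keeping it on the correct side of $P$, and this is also what makes the initial $x+1$ eastward offset the correct base-case position.
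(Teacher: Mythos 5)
Your proposal is correct and follows essentially the same route as the paper: verify that {\tt Go to Origin} brings the agent to $O'$ or $O''$, then maintain the invariant that at the start of each phase the agent and pebble sit on the current horizontal cutting line, on or East of the line of slope $1/x$ (hence East of all wedge nodes of that row), so each westward sweep covers the whole row segment up to the wall, and the $(x,1)$-displacement at the end of a phase preserves the invariant one row higher. Your explicit remark that this displacement is parallel to $P$ is exactly the geometric fact the paper's choice of slope $1/x$ and definition of $P$ encode, so no further comment is needed.
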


\begin{proof}
By the description of procedure {\tt Go to Origin}, upon completion of this procedure, the agent is at node $O'$ or at node $O''$. After the first `for'' loop of procedure  {\tt Bounce to Pebble}  the agent and the pebble are on the southern-most horizontal line of the wedge, on or East of its free boundary $H_2$. After each turn of the ``repeat forever'' loop, the following invariant is preserved: the agent and the pebble are on the next 
 horizontal line (one step North of the preceding one) of the wedge, on or East of its free boundary $H_2$. In each turn of this loop the agent explores the segment of the current horizontal line between the wall and the pebble. Hence, in each turn of the loop it explores the entire segment of the wedge contained in the current horizontal line. It follows that the entire wedge is eventually explored. Since the integer $x$ is known in advance, the algorithm can be executed by a finite automaton with one pebble, independent of the starting node.
\end{proof}

{\bf Boundaries in opposite quadrants}

\vspace*{0.5cm}

Now we consider the more difficult case when the boundaries are in opposite quadrants (we include the limit case when one of the boundaries is either vertical or horizontal).
Consider a wedge with boundaries $H_1$ and $H_2$. Without loss of generality assume that the vector determining $H_1$ has both components negative and the vector determining $H_2$ has both components positive. The algorithm for other cases is similar.


The high-level idea of the algorithm is the following. 
The agent starts at some node $v$ of the wedge. 
Let $L$ be a rational cutting line of the wedge passing through node $v$. Suppose that $L$ is an $(x,y)$-line.
The first part of the algorithm is similar to procedure {\tt Go to Origin} and its aim is to get to a node of the grid close to $O$. Now it is impossible to achieve this by going straight in order to hit the wall, as we do not know where in the wedge the agent starts and neither a horizontal nor a vertical trip is guaranteed to hit the wall. Instead, the agent ``slides down'' to the wall following a staircase of line $L$. After hitting the wall it ``slides up'' along the wall (since the slope of the wall is positive), to get close to $O$. The first part of the algorithm is concluded as follows. The agent starts at a node $O^*$ of the grid close to $O$. Let $L_0$ be the line parallel to $L$ passing through $O^*$. The agent visits all grid nodes of the triangle bounded by the boundaries and by $L_0$ and gets back to $O^*$.

The second part of the algorithm is executed in phases and the pebble is used.
Let $L_i$, for $i=0,1,2,\dots,$ be lines parallel to $L$, such that $L_{i+1}$ is at distance 1 West from $L_i$. Line $L_0$ is as above.
Before phase 1, all grid nodes of the triangle bounded by the boundaries and the line $L_0$ are explored, and the pebble is dropped on $L_0$, East of or on the boundary $H_2$.
We want to keep the invariant that after phase $i\geq 1$, all grid nodes of the triangle bounded by the boundaries and the line $L_i$ are explored and the pebble is on line $L_i$, East of or on the boundary $H_2$.
Phase $i\geq 1$ starts with the agent at a node $(c,d)$ which is on line $L_{i-1}$, East of or on the boundary $H_2$. The pebble is at this node.
The agent picks the pebble and goes to a precomputed node $(c',d')$ which is on line $L_{i}$,  East of or on the boundary $H_2$. 
Then the agent drops the pebble at $(c',d')$ and  explores the part of the chain of boxes between the pebble and the wall using procedure {\tt Explore Chain} from the previous section.
Next it executes procedure {\tt Below the Chain} from the previous section to explore the lower part of the triangle between line $L_i$ and the wall $H_1$.
 Then it goes back to the pebble using the staircase which is the upper limit of the chain, and phase $i+1$ starts.

In order to implement the above idea we define a number of procedures. The first procedure brings the agent  close to the wall by ``sliding down'' the staircase of the cutting line $L$. Upon its completion, the agent is at a node with port $S$ blocked by the wall.

\vspace*{0.5cm}

{\bf Procedure} {\tt Slide Down to Wall}

$count :=y$;

\hspace*{1cm}{\bf while}  $count=y$ {\bf do}

\hspace*{2cm} {\bf for} $i:=1$ {\bf to} $x$ {\bf do}

\hspace*{3cm} take port $W$;

\hspace*{2cm} $count :=0$;

\hspace*{2cm} {\bf while} port $S$ free and $count <y$ {\bf do}

\hspace*{3cm} take port $S$;

\hspace*{3cm} $count := count +1$;

\vspace*{0.5cm}

The next procedure starts where the previous one ended and it brings the agent either to node $O'$ or $O''$, where these nodes were defined before. More precisely, the agent ends up at node $O'$, if $O$ is a grid node (in which case $O=O'$), and it ends up at $O''$ otherwise.

\vspace*{0.5cm}

{\bf Procedure} {\tt Slide Up to Origin}

\hspace*{1cm} {\bf while} $S$ is blocked {\bf do}

\hspace*{2cm} {\bf while} $E$ is blocked {\bf do}

\hspace*{3cm} take port $N$;

\hspace*{2cm} take port $E$;

\hspace*{1cm} take port $W$;

\vspace*{0.5cm}

\vspace*{0.5cm}

\begin{figure}[h]
   \vspace*{-.194in}
     \centering
    \includegraphics[scale = .6]{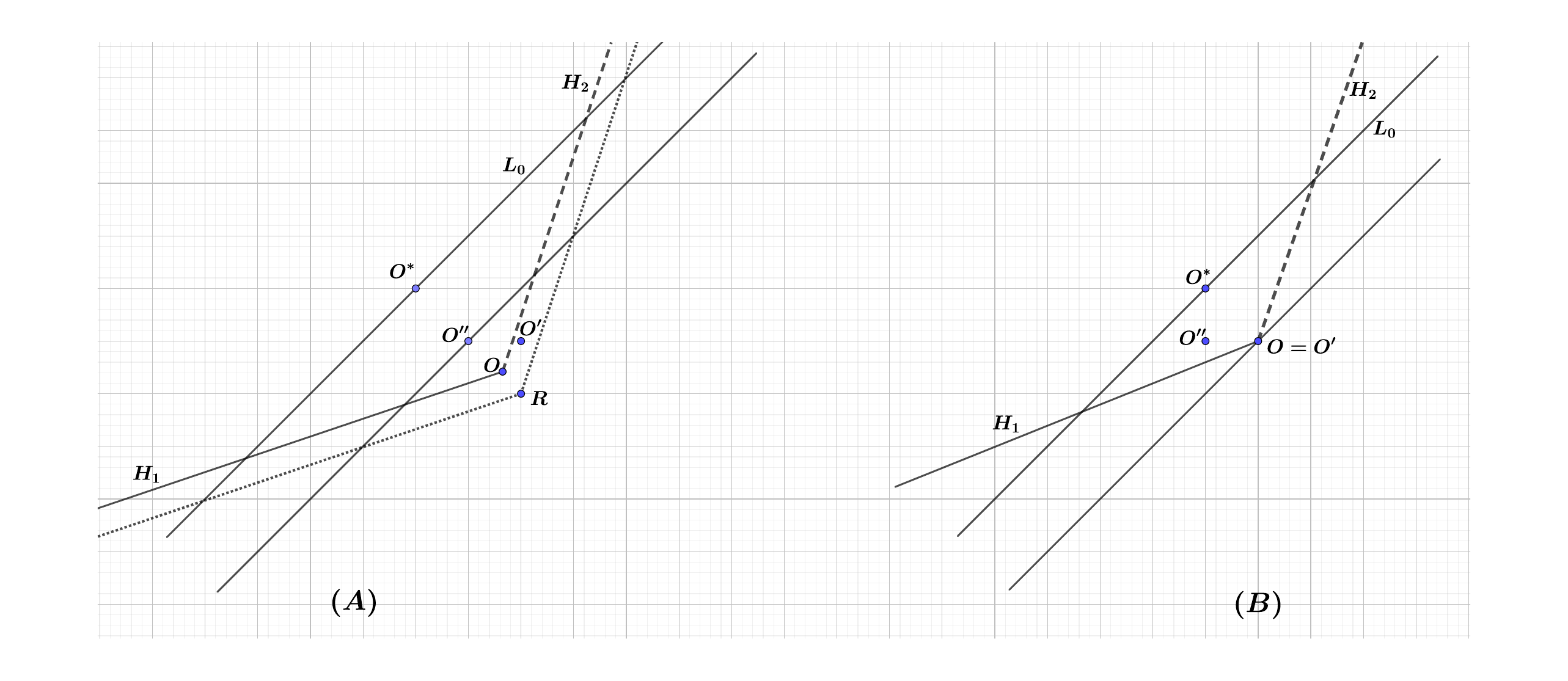}
   \vspace*{-.2in}
    \caption{ (A) $O$ is not a grid node  (B) $O$ is a grid node}
    \label{bottom}
\end{figure}

The aim of the next procedure is to visit  all grid nodes of the triangle bounded by the boundaries of the wedge and by the line $L_0$ parallel to $L$. First we need to precisely define this line. 
After completion of procedure {\tt Slide Up to Origin}, the agent is either at node $O'$, if $O$ is a grid node (in which case $O=O'$), or at $O''$ (see Fig \ref{bottom}). The agent does not know what is the actual situation and has to prepare for both of them. We define the node $O^*$ as follows. If $O$ is a grid node then $O^*$ is the grid node at grid distance 2 from $O$ going North and then West from $O$.
If $O$ is not a grid node then $O^*$ is the grid node at grid distance 2 from $O''$ going North and then West from $O''$. In both cases $L_0$ is defined as the line parallel to $L$ passing through $O^*$. Notice that $L_0$ is always a cutting line of the wedge.

Define $T$ as the triangle between the boundaries of the wedge and $L_0$. Our aim now is to visit all grid nodes of $T$. Unfortunately, the agent does not know exactly what is $T$, as this depends on where $O$ is situated. We want to cover all possible cases, similarly as in procedure {\tt Explore Bottom Triangles}. If $O$ is a grid node then the triangle $T$ is exactly determined. In this case define $T'=T$. Otherwise, the agent cannot determine $T$  exactly but $T$ is certainly included in the following triangle $T'$. Let $R$ be the South-West corner of the unit grid square containing $O$.  $T'$ is bounded by $L_0$ and by the lines parallel to the wedge boundaries and passing through $R$ (see Fig. \ref{bottom}).

In order to visit all grid nodes in $T$, it is enough to visit all grid nodes in $T'$. Let $P$ be the node in which the agent concluded procedure {\tt Slide Up to Origin}. Then the agent takes port $N$ and then port $W$ which brings it to node $O^*$ in all cases. Now in order to visit all nodes of $T'$ in all cases, we proceed similarly as in procedure {\tt Explore Bottom Triangles}. Note that, when the agent is at $O^*$, the triangle $T'$ is exactly defined in both cases. Hence the set of grid nodes in $T'$ can be determined for each case (knowing the boundaries slopes and the slope of $L$) but the agent does not know which case is the real one. Let $\Sigma_1$ be the set of grid nodes  in $T'$ in the case when $O$ is a grid node, and let $\Sigma_2$ be the set of grid nodes  in $T'$ in the case when $O$ is not a grid node.
The agent can determine both  $\Sigma_1$ and $\Sigma_2$. Hence the following procedure, starting at node $P$ and using procedure {\tt Visit} introduced in the previous section,  always visits all grid nodes of the triangle $T'$ and finishes at $O^*$.

\vspace*{0.5cm}

{\bf Procedure} {\tt Bottom Triangle}

\hspace*{1cm} take port $N$;

\hspace*{1cm} take port $W$; 

\hspace*{1cm} let $O^*$ be the current node;

\hspace*{1cm} {\tt Visit} $(O^*, \Sigma_1)$;

\hspace*{1cm} {\tt Visit} $(O^*, \Sigma_2)$;

\vspace*{0.5cm}

The aim of the next procedure, called after procedure {\tt Bottom Triangle}, is to drop the pebble at its first location. This location should be on the line $L_0$, East of or on the boundary $H_2$, so as to satisfy our invariant at the very beginning of the second part of the algorithm. Let $I$ be the segment of the line $L_0$ that is the side of the triangle $T'$. It is enough to drop the pebble at any node of the line $L_0$ above the upper end of the segment $I$. Although the actual length of $I$ is unknown to the agent, it is easy to precompute an upper bound on it. Both possible triangles $T'$ are similar and the one in the case when $O$ is not a node of the grid is larger. Hence the length $z$ of its side parallel to $L$ (which is easy to compute knowing the slopes of all sides of $T'$) can be used as the desired upper bound.
It follows that if the agent goes up on line $L_0$ starting at node $O^*$ at distance at least $z$, it will be East of or on the boundary $H_2$. Of course, the agent cannot really go on $L_0$ but it can follow a staircase of $L_0$ for a number of steps easy to precompute. Let $r=\lceil z/y \rceil$. Going $r$ steps of the staircase of $L_0$ up from $O^*$ is enough. Hence the following procedure places the pebble on line $L_0$, East of or on the boundary $H_2$.

\vspace*{0.5cm}

{\bf Procedure} {\tt Initialize Pebble}

\hspace*{1cm} {\bf for} $j:=1$ {\bf to} $r$ {\bf do}  

\hspace*{2cm} {\bf for} $i:=1$ {\bf to} $y$ {\bf do} 

\hspace*{3cm} take port $N$;

\hspace*{2cm} {\bf for} $k:=1$ {\bf to} $x$ {\bf do}  

\hspace*{3cm} take port $E$;

\hspace*{1cm} drop a pebble;

\vspace*{0.5cm}

The next procedure starts at a node $T_1$ of some line $L_{i-1}$, East of or on the boundary $H_2$, with the pebble at this node. The aim of the procedure is to carry the pebble to a node $T_2$ of line $L_i$,
East of or on the boundary $H_2$. 

Let $Q$ be the intersection point of line $H_2$ and $L_i$. Let $P_2$ be a grid node on $L_i$ such that $P_2$ is at distance at least 1 East from the boundary $H_2$. Let $P_1$ be the intersection point of $H_2$ with the horizontal line passing through $P_2$. Hence the distance $t$ between $P_1$ and $P_2$ is at least 1. Let $Q'$ be the point on this horizontal line and on the vertical line passing through $Q$. Let $z$ be the distance between $Q'$ and $P_1$, and let $Y$ be the distance between $Q$ and $Q'$. Let $\gamma$ be the angle between the vertical line and line $H_2$, and let $\delta$ be the angle between $H_2$ and $L$ (see Fig. \ref{move pebble}). Note that we know these angles and can use them in our procedure.

 \begin{figure}[h]
   \vspace*{-.194in}
     \centering
    \includegraphics[scale = .5]{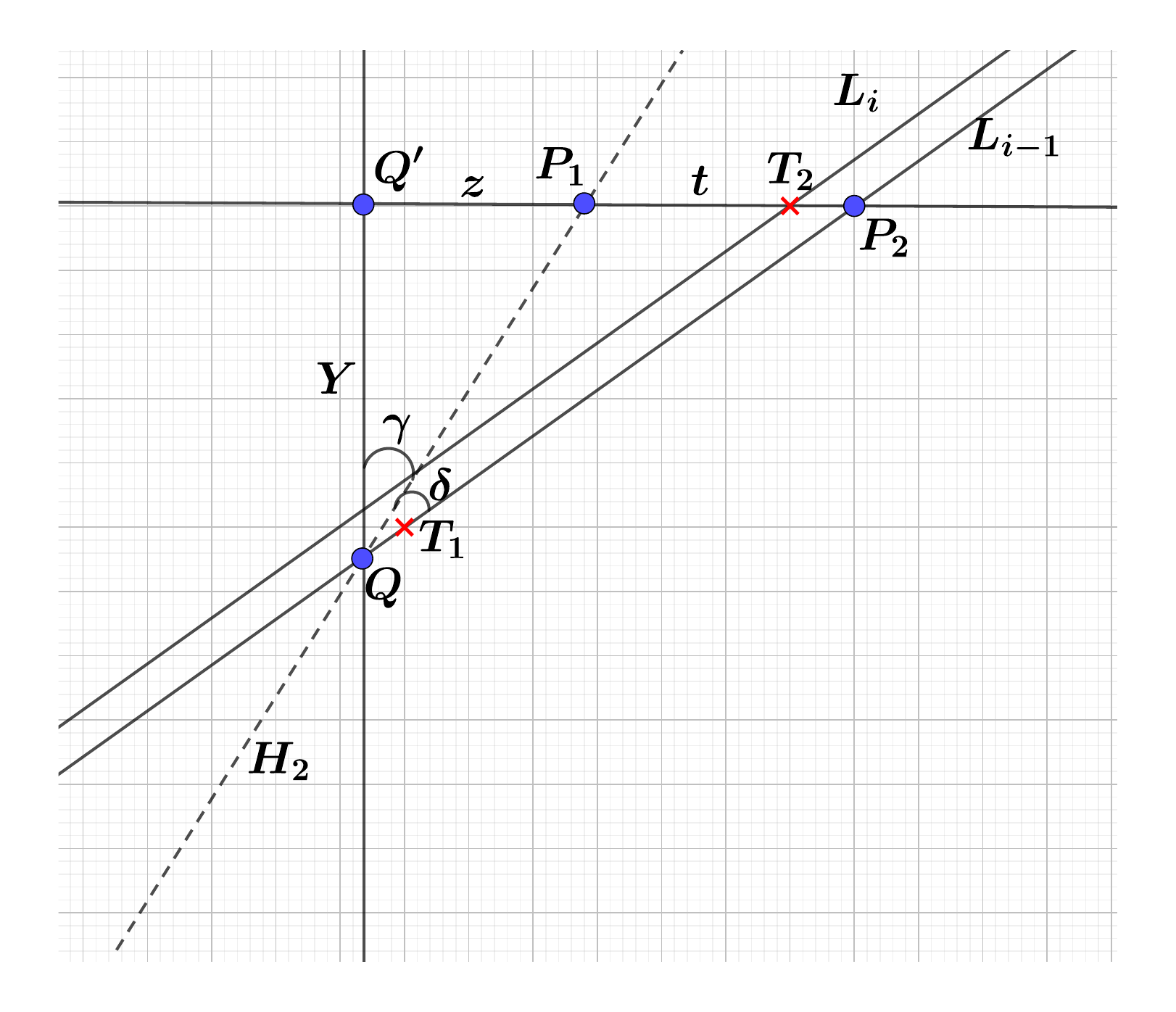}
   \vspace*{-.2in}
    \caption{Moving the pebble to the next location}
    \label{move pebble}
\end{figure}

Our aim is to put the pebble  at node $T_2$ at distance 1 West from $P_2$. The node $T_2$ is on line $L_{i}$. We need to find $Y$ sufficiently large to guarantee $t\geq 1$ because then the pebble will be East of or on the boundary $H_2$, as desired. We have $\tan\gamma=\frac{z}{Y}$ and $\tan(\gamma +\delta)=\frac{z+t}{Y}$. Hence $\frac{\tan(\gamma +\delta)}{\tan\gamma}=\frac{z}{t}+1$. Define
$\sigma=\frac{\tan(\gamma +\delta)}{\tan\gamma}-1$. Hence $t=\frac{z}{\sigma}=\frac{Y\tan \gamma}{\sigma}$. In order to guarantee $t\geq 1$ it is enough to take $Y\geq \frac{\sigma}{\tan \gamma}$.
This will hold if we carry the pebble $m=\lceil \sigma/(y \tan \gamma) \rceil$ steps of the staircase up on line $L_{i-1}$. Notice that the integer $m$ depends on angles $\gamma$ and $\delta$ and on the integer $y$
which is the height of a step of the staircase, and hence $m$ can be precomputed.

\vspace*{0.5cm}

{\bf Procedure} {\tt Move Pebble}

\hspace*{1cm} pick a pebble

\hspace*{1cm} {\bf for} $i:=1$ {\bf to} $m$ {\bf do} 

\hspace*{2cm} {\bf for} $j:=1$ {\bf to} $y$ {\bf do}  

\hspace*{3cm} take port $N$;

\hspace*{2cm} {\bf for} $k:=1$ {\bf to} $x$ {\bf do}  

\hspace*{3cm} take port $E$;

\hspace*{1cm} take port $W$;

%
%
%
%
%

\hspace*{1cm} drop a pebble; 

\vspace*{0.5cm}

The next procedure will be called upon completion of procedures {\tt Explore Chain} and {\tt Below the Chain} from the previous section. Its aim is to get back to the pebble by going up the staircase of the current cutting line.

\vspace*{0.5cm}

{\bf Procedure} {\tt Go Back to Pebble}

\hspace*{1cm} {\bf while} no pebble on current node {\bf do}

\hspace*{2cm} {\bf for} $i:=1$ {\bf to} $y$ {\bf do} 

\hspace*{3cm} take port $N$;

\hspace*{2cm} {\bf for} $j:=1$ {\bf to} $x$ {\bf do}  

\hspace*{3cm} take port $E$;

\hspace*{1cm} {\bf for} $i:=1$ {\bf to} $y$ {\bf do} 

\hspace*{2cm} take port $N$;

\hspace*{1cm} {\bf for} $j:=1$ {\bf to} $x$ {\bf do}  

\hspace*{2cm} take port $E$;

\vspace*{0.5cm}

Now the algorithm can be succinctly formulated as follows.

\vspace*{0.5cm}

{\bf Algorithm} {\tt Explore Small Semi-Walled -- Opposite Quadrants}

\hspace*{1cm} {\tt Slide Down to Wall}

\hspace*{1cm} {\tt Slide Up to Origin}

\hspace*{1cm} {\tt Bottom Triangle}

\hspace*{1cm} {\tt Initialize Pebble}

\hspace*{1cm} {\bf repeat forever}

\hspace*{2cm} {\tt Move Pebble} 

\hspace*{2cm} {\tt Explore Chain}  

\hspace*{2cm} {\tt Below the Chain}

\hspace*{2cm} {\tt Go Back to Pebble}

\begin{lemma}\label{opposite}
For any small semi-walled wedge with boundaries in opposite quadrants there exists an automaton that explores it with one pebble.
\end{lemma}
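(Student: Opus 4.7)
The plan is to verify the algorithm in two stages: first, that the preprocessing block (procedures \texttt{Slide Down to Wall}, \texttt{Slide Up to Origin}, \texttt{Bottom Triangle}, \texttt{Initialize Pebble}) achieves the intended initial configuration, and second, that the main loop maintains the invariant stated in the algorithm description.

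First I would verify the preprocessing stage. By the description of \texttt{Slide Down to Wall}, the agent follows consecutive steps of a staircase of $L$ until for the first time a port $S$ is blocked; since the slope of $H_1$ is positive and the wedge is connected, such a staircase eventually meets the wall $H_1$, so the agent ends at a node immediately above the wall. \texttt{Slide Up to Origin} then alternates single $E$ steps with vertical climbs inside the wall's ``staircase of blockings'', ending at $O'$ if $O$ is a grid node and at $O''$ otherwise, exactly as in the adjacent-quadrant case. After the two additional moves inside \texttt{Bottom Triangle}, the agent is at the precomputed node $O^*$, and the two successive \texttt{Visit} calls cover the finite sets $\Sigma_1$ and $\Sigma_2$. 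Since the true triangle $T$ is contained in $T'$ in both possible positions of $O$, and the correct one of $\Sigma_1,\Sigma_2$ equals the grid-node set of $T$, all grid nodes of $T$ are visited and the agent returns to $O^*$. Finally, \texttt{Initialize Pebble} carries the pebble $r$ full staircase steps up along $L_0$; the computation of $r=\lceil z/y\rceil$ from the easily-precomputable upper bound $z$ on the length of $T'\cap L_0$ guarantees that the drop point lies on $L_0$, east of or on $H_2$. This establishes the invariant at the start of phase 1.

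Next I would prove by induction on $i\geq 1$ that, after phase $i$, all grid nodes of the wedge in the closed region bounded by $H_1$, $H_2$ and $L_i$ are visited, and the pebble sits on $L_i$, east of or on $H_2$. The base case $i=0$ is the conclusion of the preprocessing stage. For the inductive step, by the preceding invariant the agent begins phase $i$ with pebble and agent at a node of $L_{i-1}$ east of or on $H_2$. Procedure \texttt{Move Pebble} picks the pebble, climbs $m$ steps of the staircase of $L_{i-1}$ (all of which lie in already-explored region by the previous invariant), and then takes one port $W$, placing the pebble on $L_i$. The choice $m=\lceil \sigma/(y\tan\gamma)\rceil$ was specifically derived from the similar-triangle relation $t=Y\tan\gamma/\sigma\geq 1$, which guarantees the pebble's new position is east of or on $H_2$. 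The three procedures \texttt{Explore Chain}, \texttt{Below the Chain}, \texttt{Go Back to Pebble} now cover the entire region $R_i$ between $L_{i-1}$ and $L_i$ not already covered by earlier phases. Arguing as in the walled case (Theorem in the previous subsection), \texttt{Explore Chain} covers the middle band of $R_i$ (the chain of boxes along $L_i$ from the pebble down to the wall), \texttt{Below the Chain} covers the small residual triangle near the intersection of $L_i$ with $H_1$, and the staircase traversal up in \texttt{Go Back to Pebble} covers the top band of $R_i$ adjacent to $L_{i-1}$. Combined with the inductive hypothesis, all grid nodes below or on $L_i$ are explored, and the agent ends at the pebble, completing the induction.

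The main obstacle, as in the walled obtuse case, is geometric: certifying that these three procedures together cover the band $R_i$ without missing nodes near either boundary. On the $H_2$ side, this is handled by the fact that \texttt{Move Pebble}'s choice of $m$ forces the new pebble position to lie east of $H_2$, so the chain-of-boxes exploration on $L_i$ starts strictly outside the wedge or on $H_2$ and sweeps all the way back to $H_1$. On the $H_1$ side, the quantitative analysis of \texttt{Below the Chain} performed for walled obtuse wedges applies verbatim. To conclude, since $x$, $y$, $r$, $m$, $\Sigma_1$, $\Sigma_2$ and the triangles $\Sigma_r$ inside \texttt{Visit} depend only on the input (the slopes of $H_1$, $H_2$ and the chosen rational line $L$), not on the adversarially chosen starting node, the whole algorithm is implementable by a finite Mealy automaton using a single movable pebble, so the lemma follows.
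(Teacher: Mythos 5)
Your proposal is correct and follows essentially the same route as the paper: verify that the preprocessing block (\texttt{Slide Down to Wall}, \texttt{Slide Up to Origin}, \texttt{Bottom Triangle}, \texttt{Initialize Pebble}) establishes the initial invariant, then induct on phases with the invariant that after phase $i$ the pebble sits on $L_i$ east of or on $H_2$ and all grid nodes of the triangle $T_i$ are visited, using the precomputed parameter $m$ in \texttt{Move Pebble} to keep the pebble outside $H_2$ and \texttt{Explore Chain}/\texttt{Below the Chain} (plus the staircase return) to cover each stripe, and finally noting that all parameters are precomputable so a one-pebble automaton suffices. The only cosmetic differences are a minor base-case indexing slip ($i=0$ versus phase 1) and attributing the top band of each stripe to \texttt{Go Back to Pebble} rather than to the chain exploration itself, neither of which affects correctness.
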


\begin{proof}
Let $T_i$, for $i=0,1,\dots$,  be the triangle bounded by line $L_i$ and by the boundaries of the wedge.
After the execution of procedures {\tt Slide Down to Wall} and {\tt Slide Up to Origin}, the agent is either at node $O$ or $O''$. After the execution of procedure {\tt Bottom Triangle}, all the grid nodes in the triangle $T_0$ are visited. After the execution of procedure  {\tt Initialize Pebble}, the pebble is placed on line $L_0$, East of or on the boundary $H_2$. 
The algorithm works in phases $1,2,\dots$, determined by the ``repeat forever'' loop.
Let $Inv(i)$ be the following invariant: 
\begin{quotation}
At the beginning of phase $i$, the pebble is on line $L_{i-1}$, East of or on the boundary~$H_2$, and all grid nodes in the triangle $T_{i-1}$ are visited.
\end{quotation}
Hence $Inv(1)$ holds. Assume by induction that $Inv(i)$ holds for some $i\geq 1$.
In phase $i$, after the execution of procedures  {\tt Move Pebble}, {\tt Explore Chain}  and {\tt Below the Chain},  the grid nodes of the wedge in the stripe between lines $L_{i-1}$ and $L_i$ are visited. Hence at this point, all grid nodes in  $T_{i}$ are visited.
After the execution of procedure {\tt Go Back to Pebble}, the pebble is placed on line $L_{i}$, East of or on the boundary~$H_2$ (cf. Fig. \ref{move pebble}). Hence $Inv(i+1)$ is satisfied. By induction, $Inv(i)$ is satisfied, for all $i\geq 1$. Since the wedge is the union of sets of grid nodes in all triangles $T_i$, this proves that Algorithm {\tt Explore Small Semi-Walled -- Opposite Quadrants} explores any small
semi-walled wedge with boundaries in opposite quadrants. In order to conclude the proof, it remains to show that this algorithm can be executed by an automaton with a pebble. This follows from the observation that all procedures used by the algorithm depend only on integer parameters that can be precomputed knowing the slopes of the boundaries of the wedge and knowing the slope of the cutting line $L$ that, in turn, has been chosen using the slopes of the boundaries. Hence, given the wedge as a couple of vectors of its boundaries, the appropriate automaton can be constructed, independent of the starting node. Since it uses only one pebble, this proves the lemma.
\end{proof}

Lemmas \ref{adjacent} and \ref{opposite} imply the main positive result of this section.

\begin{theorem}\label{th small semi}
For any small semi-walled wedge there exists an automaton that explores it with one pebble.
\end{theorem}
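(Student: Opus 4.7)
The plan is to deduce the theorem directly from Lemmas \ref{adjacent} and \ref{opposite} by a case analysis on the relative position of the two boundary half-lines of the given small semi-walled wedge. Given a wall $H_1$ and a free boundary $H_2$ meeting at origin $O$, the pair $(H_1,H_2)$ falls into exactly one of two geometric configurations: either both half-lines lie in adjacent quadrants of the plane (the configuration treated by Lemma \ref{adjacent}) or they lie in opposite quadrants, including the limit cases in which one of the boundaries is horizontal or vertical (the configuration treated by Lemma \ref{opposite}). Since this case split is exhaustive for small wedges, combining the two lemmas yields, in either case, a finite automaton with one pebble that explores the wedge, which is exactly the statement of the theorem.

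The only subtle point I would need to handle is that the two lemmas were designed with the simplifying assumption that the angle $\alpha$ between boundaries is at least $\pi/2$, as noted in the subsection preamble. For wedges with $\alpha<\pi/2$, I would invoke the widening reduction stated there: since $H_2$ is free, the agent is not obstructed when it crosses $H_2$, so it suffices to design an automaton exploring a wider wedge obtained from the original by rotating $H_2$ away from $H_1$ until $\alpha$ reaches $\pi/2$. The widened wedge is a grid-superset of the original, so visiting every one of its grid nodes in particular visits every grid node of the original wedge. After this widening, one of Lemma \ref{adjacent} or Lemma \ref{opposite} applies directly to the wider wedge, and the resulting automaton explores the original wedge as required.

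I expect no real obstacle beyond the bookkeeping of case analysis. The entire algorithmic content is already packaged in the two lemmas; the theorem is just their synthesis together with the widening reduction. The only points meriting care are verifying that the case split covers every small semi-walled wedge (including the axis-aligned limit cases, which are explicitly absorbed into Lemma \ref{opposite}) and checking that the widening keeps the pair $(H_1,H_2)$ within one of the two categories, so that the chosen lemma is applicable after the widening.
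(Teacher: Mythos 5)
Your proposal is correct and matches the paper's own argument: the theorem is obtained exactly by combining Lemma \ref{adjacent} and Lemma \ref{opposite}, with the widening of the free boundary to reach angle at least $\pi/2$ handled in the subsection preamble just as you describe. No further comment is needed.
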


\subsubsection{Impossibility of exploration without a pebble}

In this section we show that the number of pebbles in Theorem \ref{th small semi} cannot be decreased.
We will use the following general lemma.

\begin{lemma}\label{infinitely}
Consider the trajectory $T$ of an automaton that explores a wedge $W$ without a pebble.
Then no node can be visited infinitely many times.
\end{lemma}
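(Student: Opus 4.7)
The plan is to derive a contradiction by showing that infinitely many visits to a single node force the trajectory to be eventually periodic, which cannot explore an infinite wedge.

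First, I would record the key observation that, with no pebble available to drop or pick up, the environment perceived by the agent at any node is time-independent. Indeed, at every step the agent carries $0$ pebbles and every node of the grid is empty, so the input symbol $(F,e,0) \in X$ read by the automaton upon entering a node $v$ is determined by $v$ alone: $F$ is the set of free ports at $v$, which is fixed by the geometry of the walls of $W$ (and equals $\{N,E,S,W\}$ for nodes outside $W$ reachable through free boundaries). Thus the input read at each visit to $v$ is always the same symbol.

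Next, I would invoke the pigeonhole principle. Assume for contradiction that some node $v$ is visited infinitely often along $T$, and let $q_k$ denote the state of the automaton on its $k$-th arrival at $v$. Since the state set $Q$ is finite, there exist indices $i<j$ with $q_i = q_j$. By the observation above together with the determinism of $\delta$ and $\lambda$, the sequence of (state, input, output) triples produced from the $j$-th arrival at $v$ is identical, step by step, to the one produced from the $i$-th arrival at $v$. Consequently the trajectory from time $t_i$ (the moment of the $i$-th visit to $v$) onward is periodic with period $t_j - t_i$, and from time $t_i$ onward the agent visits only the finite set $N$ of nodes traversed between the $i$-th and the $j$-th visits to $v$.

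Finally, I would reach the contradiction. The total set of nodes ever visited by the agent is contained in $N$ together with the finite set of nodes visited before time $t_i$, hence is finite. On the other hand, for every allowed angle $\alpha\in(0,2\pi]$ the region $W(H_1,H_2)$ is unbounded, so the wedge $W$ contains infinitely many grid nodes, all of which must be visited since $T$ is an exploring trajectory. This is impossible, proving the lemma.

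The only subtlety, and the main point to be careful about, is the uniformity of the determinism argument across the three wedge types: in the semi-walled and free cases the agent may leave $W$ through a free boundary, but the input read at any such outside node is still time-independent (all four ports free, node empty, zero pebbles carried), so the pigeonhole argument applies verbatim to the entire trajectory rather than just to its restriction to $W$.
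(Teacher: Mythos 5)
Your proof is correct and follows essentially the same route as the paper: pigeonhole on the finitely many states at the infinitely-visited node, then determinism forces the trajectory to become periodic from the repeated (node, state) pair onward, so only finitely many nodes are ever visited, contradicting the infinitude of the wedge (your explicit remark that inputs are time-independent without pebbles is exactly the fact the paper uses implicitly). One small inaccuracy that does not affect the argument: a node outside $W$ adjacent to $W$ across a wall also has the corresponding port blocked, so the free-port set at such an outside node need not be all of $\{N,E,S,W\}$, but it is still determined by the node alone, which is all your argument requires.
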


\begin{proof}
We prove the lemma by contradiction. Suppose that some node is visited infinitely many times.
Hence it is visited at least twice in the same state $S$. Let $t_1$ and $t_2$ be the first two steps when this occurs. Let $\tau$ be the part of the trajectory $T$ between steps $t_1$ and $t_2$.
 Let $x$ be the length of $\tau$. Hence $t_2=t_1+x$. The trajectory $\tau$ will be repeated forever starting at the first visit in step $t_1$. Indeed, by induction on $i$, the agent will be at the same node and in the same state in any step $t_1+kx+i$, for any $i<x$ and any natural number $k$. It follows that the number of nodes visited by the agent is at most $t_1+x$, and hence the agent cannot explore the entire wedge. This contradiction proves the lemma.
\end{proof}

\begin{theorem}\label{th-lb-semi-walled}
No semi-walled wedge can be explored by an automaton without any pebble.
\end{theorem}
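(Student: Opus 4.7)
The plan is to argue by contradiction: suppose some automaton $A$ with state set $Q$ explores a semi-walled wedge $W$ with wall $H_1$ and free half-line $H_2$ using no pebble. I would study $A$'s behavior in the \emph{free region}, i.e., the set of nodes at distance greater than $1$ from $H_1$, where every port is free and the input is always $(\{N,E,S,W\},e,0)$. On this constant input, $\delta$ restricts to a self-map $\delta':Q\to Q$. Starting from $S_0$, the orbit of $\delta'$ is eventually periodic; let $C$ be its terminal cycle of length $\ell\le |Q|$, and let $(a,b)\in\mathbb{Z}^2$ be the net grid displacement accrued along one traversal of $C$.

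First I would handle the case $(a,b)=(0,0)$. Since $A$ explores $W$ and $W$ contains nodes arbitrarily far from $H_1$, at some time the agent stands at a node so far from $H_1$ that it cannot reach the wall in the next $|Q|+\ell$ steps. Its state is then driven by $\delta'$, enters $C$ within $|Q|$ steps, and thereafter produces zero net displacement every $\ell$ steps. The loop stays in the free region forever, so some node is visited infinitely often, contradicting Lemma~\ref{infinitely}.

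Next I would handle $(a,b)\neq(0,0)$. Once on $C$ at a free-region node $v^*$, every subsequent node visited before leaving the free region lies within distance $\ell$ of the ray from $v^*$ in direction $(a,b)$. By choosing a starting node $v_0$ far enough from $H_1$, one guarantees the agent reaches $C$ in the free region. If the perpendicular-to-$H_1$ component of $(a,b)$ is zero or points away from $H_1$, the ray never enters a neighborhood of the wall and the whole tail of the trajectory lies in a strip of width $\le 2\ell$ around a line; because $W$ has positive angular opening, it contains nodes at arbitrarily large perpendicular distance from this strip, so $A$ misses most of $W$, a contradiction. If instead the perpendicular component of $(a,b)$ points toward $H_1$, the drift brings the agent into contact with $H_1$ after boundedly many periods, always in one of the $\ell$ ``bounce states'' on $C$. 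Using translation invariance of $H_1$ away from its endpoint $O$, a pigeonhole argument on bounce states shows that within $|Q|$ consecutive bounces the drift-bounce pattern repeats, so the subsequent trajectory is periodic with per-period displacement $\vec D$ parallel to $H_1$ (the perpendicular component must cancel, because each bounce-period starts and ends at bounded perpendicular distance to $H_1$). If $\vec D=\vec 0$ then some node is visited infinitely often, again contradicting Lemma~\ref{infinitely}; if $\vec D\neq\vec 0$, the trajectory lies in a strip along $H_1$ of bounded perpendicular width and misses all nodes of $W$ beyond that width.

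The hard part will be the bounce-cycle analysis in the latter sub-case: one must argue that past a finite initial phase near $O$ the wall is locally translation-invariant, so the post-bounce excursion depends only on the bounce state, and verify that each period of the resulting motion begins and ends at bounded distance from $H_1$ so that the per-period perpendicular displacement indeed cancels. The geometric fact that $W$ always contains grid nodes beyond any bounded-width strip follows easily from its positive angular opening.
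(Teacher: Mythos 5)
Your free-region analysis (the zero-drift case, and the cases where the drift is parallel to or points away from the wall) is sound and matches the strip-confinement reasoning the paper also uses. The genuine gap is in the sub-case you yourself flag as the hard part, and your plan for it does not work as stated. The claim that a pigeonhole on bounce states makes the subsequent trajectory periodic requires that after two same-state bounces the two continuations see \emph{identical} inputs at corresponding nodes, and this fails in two ways. First, the wall is a half-line: near its endpoint $O$, and on the free extension of its supporting line beyond $O$, the blocked-port pattern is not invariant under the shift, so if the repeat displacement between same-state bounces points toward $O$ (nothing in your drift hypothesis excludes this), or if an excursion of the would-be periodic motion dips past $O$, the induction establishing periodicity collapses; note also that your ``perpendicular component cancels'' argument presupposes the very periodicity it is meant to establish. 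Second, ``translation invariance of $H_1$ away from $O$'' holds only when the blocked-port pattern repeats along the wall (vertical, horizontal, or rational slope); for an irrational-slope wall the local geometry at distinct bounce points never repeats exactly, so equality of bounce states does not imply congruent continuations (the paper, too, argues only the vertical case explicitly, but there the invariance is genuine). The paper closes precisely this gap with its Claim --- after visiting $O$, an $s$-state agent can never descend more than $s+1$ below $O$, proved by a pigeonhole over the $s+1$ horizontal lines below $O$ --- combined with Lemma~\ref{infinitely} to pass to a tail of the trajectory that never again visits $O$ or any node of the half-line $H'$ below it, and by choosing the two same-state wall visits so that the shift points North, away from the end of the wall; only then is the ``shifted trajectory repeats forever'' induction valid. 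Your proposal has no counterpart of these steps, and they are the crux of the theorem.

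Two smaller repairs would also be needed. The terminal cycle $C$ and its displacement must be taken from the state the agent actually has when it is deep in the free region (that state need not lie in the orbit of $S_0$ under $\delta'$, and successive returns to the free region may fall into different cycles with different drifts), so the case analysis has to be carried out per re-entry rather than once globally. Also, your bounce framework assumes each wall contact is followed by a return to the free region; an agent that upon contact follows the wall forever is not covered, though it is easily dispatched by the same strip argument.
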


\begin{proof}
Consider any semi-walled wedge $W$ and any automaton with $s$ states exploring it without any pebble.
Let $T$ be the (infinite) trajectory of this automaton starting at any node of $W$.
Without loss of generality assume that the wall $H$ of $W$ starts at the node $O$ of the grid and that the wall goes vertically North. The proof in other cases is similar.
Let $H'$ be the vertical half-line that goes South of $O$.
We will use the following claim.

\vspace*{0.5cm}

 \noindent
{\bf Claim.}
After visiting the node $O$ the agent can never reach the horizontal line at distance $s+1$ South of $O$.

We prove the claim by contradiction. 
Let $P_i$, for $i=1,2,\dots, s+1$, be the horizontal line at distance $i$ South of $O$.
Suppose that the agent visits node $O$ and then reaches the line $P_{s+1}$. 
Then there exist grid nodes $w_i$ in $P_i$ visited in step $t_i$, such that $t_j>t_i$, for $j>i$, and the agent never visits a node of $P_1$ after step $t_1$.
It follows that there are grid nodes $v'=w_i$  and $v''=w_j$ for some $j>i$, such that the agent is in the same state $S$ when visiting node $v'$ in step $t'=t_i$ and node $v''$ in step $t''=t_j$.

 Let $v_0,v_1,\dots, v_r$, with $v_0=v'$ and $v_r=v''$,  be the sequence of nodes visited between steps $t'$ and $t''$. Hence $t''=t'+r$.  Let $\alpha$ be the vector $(v',v'')$. Hence in step $t'+kr+j$, for any natural $k$ and any $j<r$, the automaton will be in the same state $S_j$ at node $v_j+k\alpha$. (Intuitively, the trajectory $(v_0,v_1,\dots, v_r)$ is shifted infinitely many times, from step $t'$ on, by multiples of the vector $\alpha$). Let $L_1$ and $L_2$ be two lines parallel to vector $\alpha$, such that all nodes $v_0,v_1,\dots, v_r$ are between these lines. It follows that all nodes of the part of the trajectory $T$ after step $t'$ are between lines $L_1$ and $L_2$ (see Fig. \ref{Proof}). Consequently, at most $t'$ nodes of the trajectory $T$ are not between these lines. Since infinitely many nodes of $W$ are not between lines $L_1$ and $L_2$, there exist nodes of $W$ which will never be visited. This contradiction proves the claim.
 
 \begin{figure}[h]
   \vspace*{-.194in}
     \centering
    \includegraphics[scale = .6]{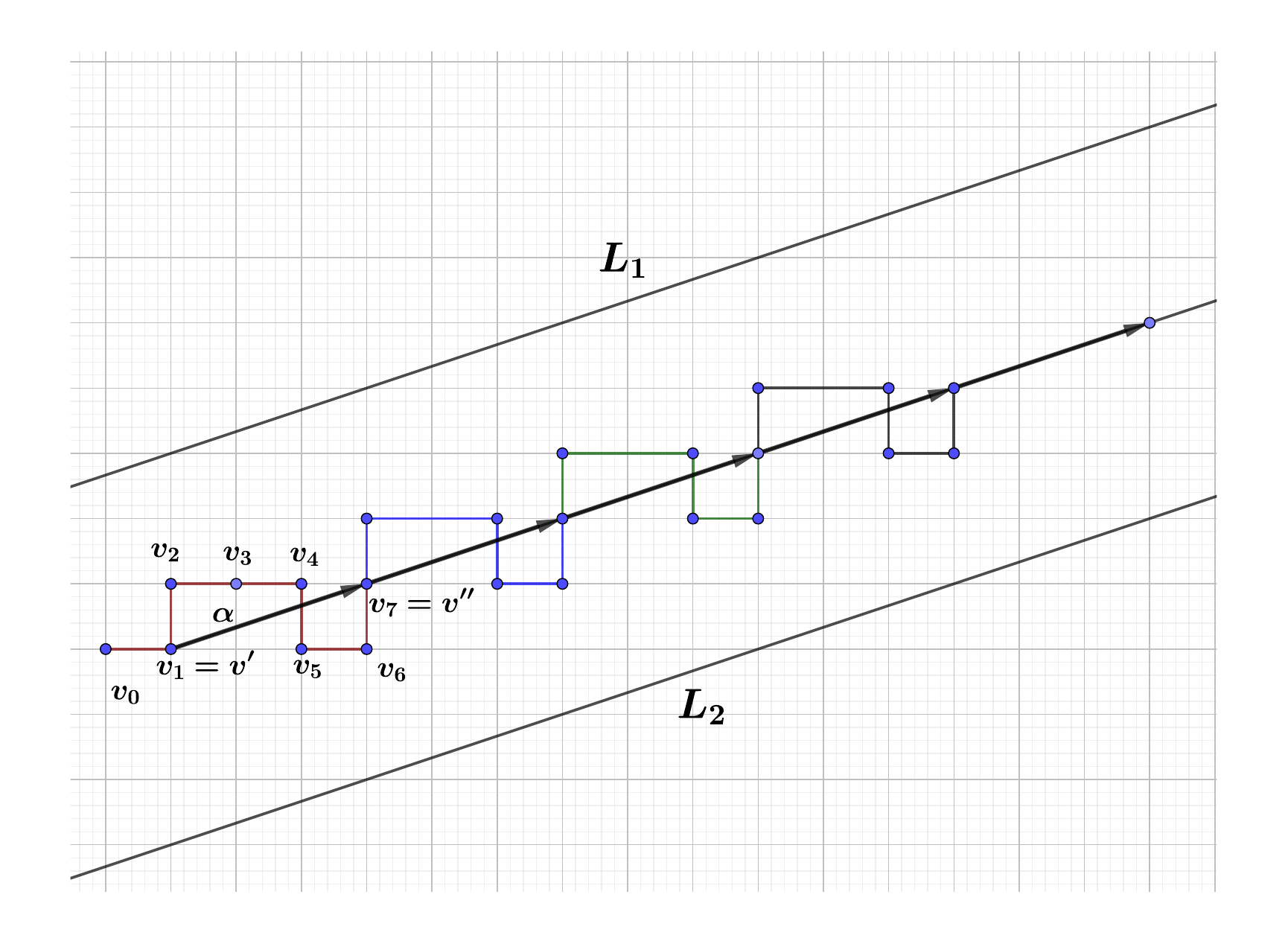}
   \vspace*{-0.1in}
    \caption{Illustration for the proof of Theorem \ref{th-lb-semi-walled} }
    \label{Proof}
\end{figure}

 By Lemma \ref{infinitely},  the node $O$ can be visited only finitely many times. Consider the part $T^*$ of the trajectory $T$ after the last visit of $O$. Note that nodes of the half-line $H'$ can be only visited finitely many times. Indeed, the Claim implies that only nodes of $H'$ at distances at most $s$ from $O$ can be visited after the last visit of $O$. So if there were infinitely many such visits, then some node of $H'$ would be visited infinitely many times, contradicting Lemma \ref{infinitely}. Thus we may consider the part $T'$ of the trajectory $T$ after the last visit of $O$ and of any node of $H'$. 
 
 Since all nodes of the wall $H$ must be visited and there are infinitely many of them, there are two visits of nodes $w'$ and $w''$ of the wall in the same state $S$, in steps $d'$ and $d''$, respectively, such that $d''>d'$ and $w''$ is North of $w'$.
 Consider the part $T''$ of the trajectory $T'$ between steps $d'$ and $d''$. By a similar argument as above, trajectory $T''$ is shifted infinitely many times, from step $d'$ on, by multiples of the vector $(w',w'')$. Note that in order to prove it, we crucially use the fact that  during trajectory $T'$ neither $O$ nor any node of $H'$ is visited anymore. Indeed, if this was possible, it could happen that some node of trajectory $T'$  is a node of $H'$ which has all ports free but the corresponding node of a shift North of this trajectory is a node of the wall and hence, although by induction hypothesis they are visited in the same state, the input is different and the rest of the shifted trajectory could be potentially different. This cannot happen, as among nodes of the vertical line containing $O$ only nodes of the wall different from $O$ can be visited during trajectory $T'$. Thus corresponding nodes of any shift of $T''$ by a multiple of the vector $(w',w'')$ are visited in the same state and give the agent the same input. Thus induction works.
 
 Now the proof is concluded similarly as before. Let $L$ be the vertical line containing the wall and let $L'$ be the vertical line such that all nodes of the trajectory $T''$ are between $L$ and $L'$. Since the vector $(w',w'')$ is oriented North, all nodes of the trajectory $T$ after step $d'$ are  between $L$ and $L'$. Consequently, at most $d'$ nodes of the trajectory $T$ are not between these lines. Since infinitely many nodes of $W$ are not between lines $L$ and $L'$, there exist nodes of $W$ which will never be visited. This contradiction proves the theorem.
\end{proof}

\begin{corollary}
The minimum number of pebbles sufficient to explore any small semi-walled wedge is one.
\end{corollary}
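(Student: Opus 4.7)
The plan is immediate: this corollary is a direct packaging of the two preceding results of the section, one positive and one negative. I would not introduce any new argument but simply cite them.

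First, for the upper bound, I would invoke Theorem \ref{th small semi}, which asserts that for any small semi-walled wedge there exists an automaton exploring it with one pebble. This shows that one pebble is sufficient, regardless of the geometry of the wedge (both the adjacent-quadrants case handled by Lemma \ref{adjacent} and the opposite-quadrants case handled by Lemma \ref{opposite} have been covered).

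Second, for the lower bound, I would invoke Theorem \ref{th-lb-semi-walled}, which states that no semi-walled wedge, small or otherwise, can be explored by an automaton without any pebble. In particular, no small semi-walled wedge can. Thus the number of pebbles cannot be reduced below one.

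Putting these together yields that the minimum number of pebbles is exactly one for every small semi-walled wedge. The only potential subtlety is to note that Theorem \ref{th-lb-semi-walled} is stated for arbitrary semi-walled wedges and therefore applies a fortiori to the small case; since there is nothing further to verify, the proof is a one-line citation of both results, and no technical obstacle arises.
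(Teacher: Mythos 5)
Your proposal is correct and matches the paper exactly: the corollary is stated as an immediate consequence of Theorem \ref{th small semi} (one pebble suffices) and Theorem \ref{th-lb-semi-walled} (zero pebbles do not suffice, even for small wedges), which is precisely the combination you cite.
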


\subsection{Free wedges}

In this section we show that the minimum number of pebbles to explore a small free wedge is 2. We first show how to explore such a wedge using an automaton with two pebbles and then we show that an automaton with only one pebble cannot accomplish this task. 

\subsubsection{Exploration with two pebbles}

Consider any small free wedge with boundaries $H_1$ and $H_2$. Recall that in the case of free wedges we assume that the origin $O$ of the wedge is at the starting node of the agent. 
For the positive result, we may assume that the angle between boundaries of the wedge is at least $\pi/2$ because it is enough to show that a wedge wider than required can be explored, and we can modify  free boundaries. Hence the wedge is connected. 
As before, there are two cases: the easier case when the boundaries are in adjacent quadrants,
and the more difficult case when they are in opposite quadrants or when one of the boundaries is either vertical or horizontal.  In the easier case there exists a cutting line of the wedge that is either vertical or horizontal which facilitates the moves of the agent.

\vspace*{0.5cm}

{\bf Boundaries in adjacent quadrants}

\vspace*{0.5cm}

Without loss of generality we assume that both boundaries are North of the origin $O$, $H_1$ has a negative and $H_2$ a positive slope. Hence the wedge has a horizontal cutting line. The algorithm for the other cases is similar. Let $\alpha$ be the acute angle between the horizontal line and the boundary $H_1$, and let $\beta$ be the acute angle between the horizontal line and the boundary $H_2$. Let $x$ be the smallest positive integer such that $1/x \leq \tan \alpha$ and $1/x \leq \tan \beta$. Let $O_1$ and $O_2$ be the nodes of the grid at grid distance $x+1$ from $O$, such that $O_1$ is one step North and $x$ steps West from $O$, and $O_2$ is one step North and $x$ steps East from $O$. Let $H'_1$ be the line passing through $O$ and $O_1$, and let $H'_2$ be the line passing through $O$ and $O_2$ (see Fig. \ref{small free wedge}). Hence the wedge $W'$ with boundaries $H'_1$ and $H'_2$ contains the original wedge $W$. We will show how to explore the wedge $W'$.

\begin{figure}[h]
   \vspace*{-.194in}
     \centering
    \includegraphics[scale = .6]{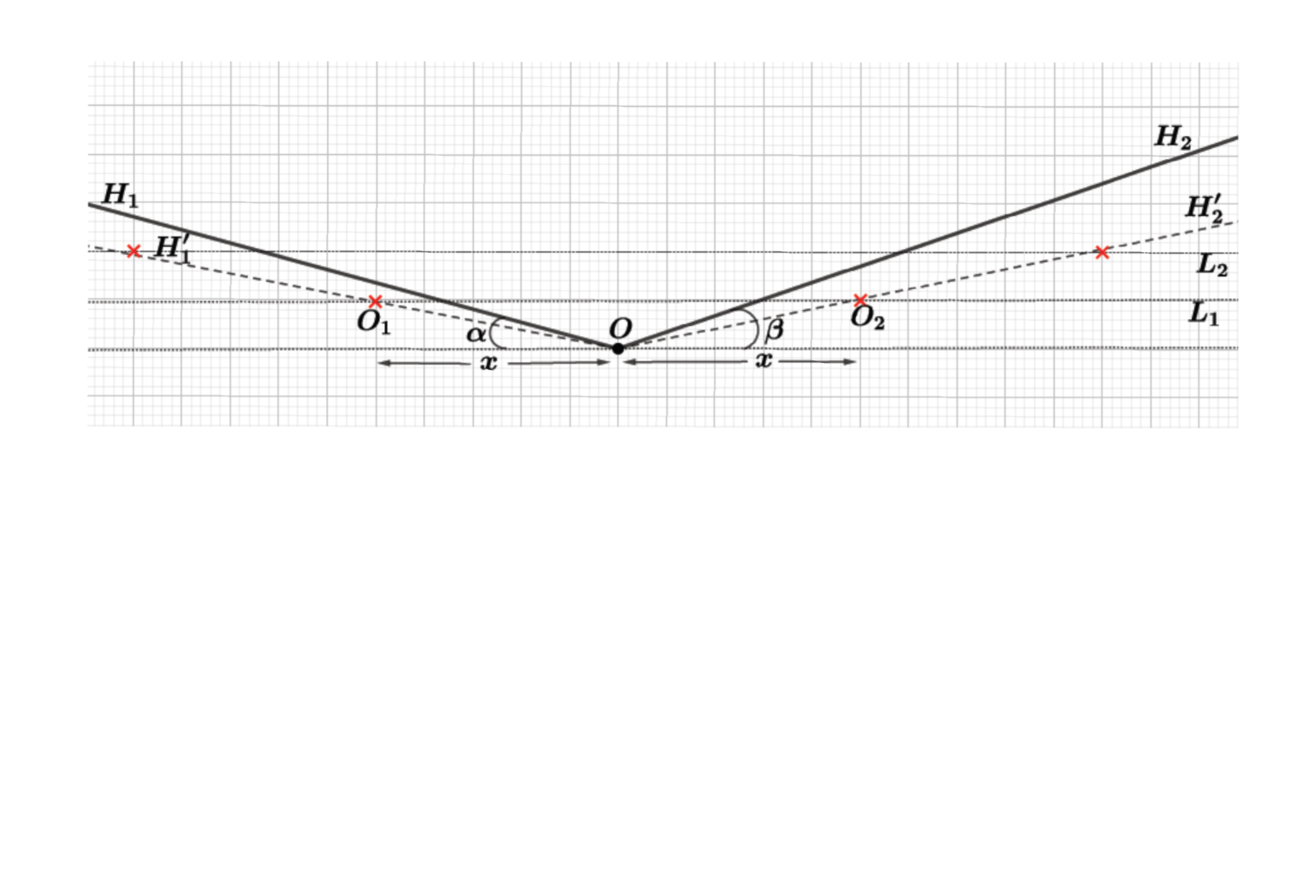}
   \vspace*{-1.6in}
    \caption{Small free wedge with boundaries in adjacent quadrants}
    \label{small free wedge}
\end{figure}

Let $L_i$, for $i\geq 1$, be consecutive horizontal lines at distance $i$ from $O$.
The high-level idea of  Algorithm {\tt Explore Small Free -- Adjacent Quadrants} is to first place the pebbles on line $L_1$ at nodes $O_1$ and $O_2$, and then, in consecutive phases $i>1$, place them on line $L_i$ in such a way that the current location of the left pebble is one step North and $x$ steps West from its previous location, and the current location of the right pebble is one step North and $x$ steps East from its previous location.
Thus the left pebble is on line $H'_1$ and the right pebble is on line $H'_2$. This is done in such a way that in each phase the horizontal segment between the two pebbles is explored.

The algorithm is formulated as follows.

\vspace*{0.5cm}

{\bf Algorithm} {\tt Explore Small Free -- Adjacent Quadrants}

\hspace*{1cm} take port $N$; 

\hspace*{1cm} {\bf for} $i:=1$ {\bf to} $x$ {\bf do} 

\hspace*{2cm} take port $W$; 

\hspace*{1cm} drop a pebble;

\hspace*{1cm} {\bf for} $i:=1$ {\bf to} $2x$ {\bf do} 

\hspace*{2cm} take port $E$; 

\hspace*{1cm} drop a pebble; 

\hspace*{1cm} {\bf repeat forever}

\hspace*{2cm} pick a pebble; 

\hspace*{2cm} take port $N$; 

\hspace*{2cm} {\bf for} $i:=1$ {\bf to} $x$ {\bf do} 

\hspace*{3cm} take port $E$; 

\hspace*{2cm} drop a pebble; 

\hspace*{2cm} take port $S$; 

\hspace*{2cm} {\bf while} no pebble at current node {\bf do} 

\hspace*{3cm} take port $W$;

\hspace*{2cm} pick a pebble; 

\hspace*{2cm} take port $N$; 

\hspace*{2cm} {\bf for} $i:=1$ {\bf to} $x$ {\bf do} 

\hspace*{3cm} take port $W$; 

\hspace*{2cm} drop a pebble;

\hspace*{2cm} take port $E$;

\hspace*{2cm} {\bf while} no pebble at current node {\bf do} 

\hspace*{3cm} take port $E$;

 \vspace*{0.5cm}
 
 \begin{lemma}\label{free1}
 For any small free wedge with boundaries in adjacent quadrants there exists an automaton that explores it with two pebbles.
 \end{lemma}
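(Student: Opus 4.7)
The plan is to show that Algorithm {\tt Explore Small Free -- Adjacent Quadrants} visits every grid node of the wider wedge $W'$ bounded by $H'_1$ and $H'_2$; since $W\subseteq W'$, this will suffice. I would place coordinates with $O=(0,0)$, East as the positive first coordinate and North as the positive second, so that for the horizontal line $L_i$ at height $i\geq 1$ the set $W'\cap L_i$ consists precisely of the grid nodes $(a,i)$ with $-ix\leq a\leq ix$, while $W'\cap L_0=\{O\}$, $(-ix,i)\in H'_1$, and $(ix,i)\in H'_2$.

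First I would trace the six lines preceding the ``repeat forever'' loop and verify that they leave the left pebble at $O_1=(-x,1)$, the right pebble at $O_2=(x,1)$, the agent at $O_2$, and every grid node of $W'$ on $L_0\cup L_1$ visited. I would then establish by induction on $k\geq 0$ the invariant $Inv(k)$: after $k$ iterations of the loop, the left pebble lies at $(-(k+1)x,k+1)\in H'_1$, the right pebble lies at $((k+1)x,k+1)\in H'_2$, the agent sits on the right pebble, and every grid node of $W'$ on $L_0\cup L_1\cup\cdots\cup L_{k+1}$ has been visited. The inductive step is a direct trace of one iteration: after picking the right pebble, moving N and $x$ steps E, the new right pebble is dropped at $((k+2)x,k+2)\in H'_2$; the S move followed by the ``walk W until pebble'' loop sweeps $L_{k+1}$ back to the left pebble, re-exploring all of $W'\cap L_{k+1}$; after picking the left pebble, moving N and $x$ steps W, the new left pebble is dropped at $(-(k+2)x,k+2)\in H'_1$; finally the ``walk E until pebble'' loop sweeps $L_{k+2}$ forward to the right pebble, covering every node of $W'\cap L_{k+2}$.

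The main obstacle I anticipate is a clean justification that the two pebble-search loops terminate and sweep exactly the right horizontal segments. Termination is guaranteed because the pebble just dropped sits at a predictable coordinate along the current horizontal line; correctness uses that the wedge is free, so the agent can traverse unobstructed through the (at most $x$) grid columns outside $W'$ it may cross before reaching its pebble. Once $Inv(k)$ is verified for every $k$, every grid node of $W$ is visited since each such node lies on some $L_i$ and $W\subseteq W'$. All counters appearing in the algorithm are bounded by the fixed constant $2x$ determined by $W$, and pebble detection is a single input bit, so the procedure is executable by a finite deterministic Mealy automaton carrying two pebbles, which completes the proof of the lemma.
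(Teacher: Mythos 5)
Your proposal is correct and follows essentially the same route as the paper's proof: it traces the initialization and then argues by induction on the iterations of the ``repeat forever'' loop that the segment of $W'$ on each successive horizontal line $L_i$ (between $H'_1$ and $H'_2$) is swept between the two pebbles, so the whole of $W'\supseteq W$ is explored, and the algorithm uses only the precomputable constant $x$ and two pebbles, hence is realizable by a finite automaton. Your version merely makes the coordinates, pebble positions, and invariant explicit where the paper states them more briefly.
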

 
 \begin{proof}
 Before the first execution of the ``repeat forever'' loop the segment of $L_1$ between the intersection points of it with lines $H'_1$ and $H'_2$ is explored. During the $i$-th execution of this loop, the 
 segment of $L_{i+1}$ between the intersection points of it with lines $H'_1$ and $H'_2$ is explored. Since the wedge $W'$ is the union of these segments, the entire wedge $W'$ is eventually explored. Since the integer $x$ can be precomputed knowing the slopes of $H_1$ and $H_2$, the algorithm can be executed by an automaton. Since only two pebbles are used, this proves the lemma.
 \end{proof}
 
 \vspace*{0.5cm}

{\bf Boundaries in opposite quadrants}

 \vspace*{0.5cm}

Now we consider the more difficult case when the boundaries are in opposite quadrants (we include the limit case when one of the boundaries is either vertical or horizontal).
Consider a wedge with boundaries $H_1$ and $H_2$. Without loss of generality assume that the vector determining $H_1$ has both components negative and the vector determining $H_2$ has both components positive. The algorithm for other cases is similar. 

The high-level idea of the algorithm is the following. Let $O^*$ be the node at grid distance 2 from $O$, North-West from it. Let $L_0$ be a rational cutting line of the wedge, passing through $O^*$.
Suppose that it is a $(x,y)$-line. Let $L_i$, for $i\geq 1$, be the line parallel to $ L_0$, one step West of $L_{i-1}$. Let  $T_i$, for $i\geq 0$, be the triangle bounded by the boundaries of the wedge and the line $L_i$. The triangle $T_0$ is uniquely determined and can be precomputed. 

First the agent goes to $O^*$ and explores all grid nodes of $T_0$, getting back to $O^*$.  Then it uses a staircase of $L_0$ to put one pebble on $L_0$, East of or on the boundary $H_2$, and to put the second pebble 
on $L_0$, South of or on the boundary $H_1$. Next the agent works in phases $i=1,2,\dots$ keeping the following invariant: at the beginning of phase $i$, all grid nodes of the triangle $T_{i-1}$ are explored, the pebbles are on line $L_{i-1}$, respectively East of or on the boundary $H_2$ and South of or on the boundary $H_1$. Moreover, the agent is at the lower pebble. During phase $i\geq 1$ the agent uses a staircase of line $L_{i-1}$ to reach the upper pebble, it picks the pebble and goes to a precomputed node $(c',d')$ which is on line $L_{i}$,  East of or on the boundary $H_2$. 
Then the agent drops the pebble at $(c',d')$, goes to the previous location of the pebble, explores the chain of boxes of $L_{i-1}$ and  reaches the lower pebble.  Finally, it moves the lower pebble to a new location $(c'',d'')$ on line $L_i$, South of or on the boundary $H_1$. 

In order to implement the above idea we define a number of procedures. 
The aim of the first procedure is to get to node $O^*$ and then explore all grid nodes of the triangle $T_0$. Since this triangle is known, this can be done using a simplified version of procedure {\tt Bottom Triangle} from the previous section. The procedure starts at node $O$ and ends at node $O^*$. Let $\Sigma$ be the set of grid nodes in $T_0$.

 \vspace*{0.5cm}
 
 {\bf Procedure} {\tt Explore Free Bottom Triangle}

\hspace*{1cm} take port $N$;

\hspace*{1cm} take port $W$; 

\hspace*{1cm} let $O^*$ be the current node;

\hspace*{1cm} {\tt Visit} $(O^*, \Sigma)$;

\vspace*{0.5cm}

The aim of the next procedure is to place the pebbles at their initial positions on line $L_0$, one pebble East of or on the boundary $H_2$, and the second pebble 
South of or on the boundary $H_1$. The procedure starts at $O^*$ and ends at the position of the second (lower) pebble. Let $z$ be the length of the segment of $L_0$ between the intersection points with boundaries of the wedge. Let $r=\lceil z/y\rceil$. Hence $r$ steps of the staircase up from $O^*$ will get the agent to a node on $L_0$, East of or on the boundary $H_2$, and then $2r$ steps of the staircase of $L_0$ down from that node will get the agent to a node on $L_0$, South of or on the boundary $H_1$.

\vspace*{0.5cm}

{\bf Procedure} {\tt Initialize Both Pebbles}

\hspace*{1cm} {\bf for} $j:=1$ {\bf to} $r$ {\bf do}  

\hspace*{2cm} {\bf for} $i:=1$ {\bf to} $y$ {\bf do} 

\hspace*{3cm} take port $N$;

\hspace*{2cm} {\bf for} $l:=1$ {\bf to} $x$ {\bf do}  

\hspace*{3cm} take port $E$;

\hspace*{1cm} drop a pebble;

\hspace*{1cm} {\bf for} $j:=1$ {\bf to} $2r$ {\bf do}  

\hspace*{2cm} {\bf for} $i:=1$ {\bf to} $x$ {\bf do} 

\hspace*{3cm} take port $W$;

\hspace*{2cm} {\bf for} $l:=1$ {\bf to} $y$ {\bf do}  

\hspace*{3cm} take port $S$;

\hspace*{1cm} drop a pebble;

\vspace*{0.5cm}

The next five procedures are executed one after another in each phase. The aim of the first procedure is to move the agent to the upper pebble.

\vspace*{0.5cm}

{\bf Procedure} {\tt Staircase  Up to Pebble}

\hspace*{1cm} {\bf for} $i:=1$ {\bf to} $y$ {\bf do} 

\hspace*{2cm} take port $N$;

\hspace*{1cm} {\bf for} $j:=1$ {\bf to} $x$ {\bf do}  

\hspace*{2cm} take port $E$;

\hspace*{1cm} {\bf while} no pebble on current node {\bf do}

\hspace*{2cm} {\bf for} $i:=1$ {\bf to} $y$ {\bf do} 

\hspace*{3cm} take port $N$;

\hspace*{2cm} {\bf for} $j:=1$ {\bf to} $x$ {\bf do}  

\hspace*{3cm} take port $E$;

\vspace*{0.5cm}

The next procedure finds a new place for the upper pebble, on line $L_i$ (if the current phase is $i$) East of or on the boundary $H_2$.
It is exactly the procedure {\tt Move Pebble} from the previous section. We now call it procedure {\tt Move Upper Pebble}. At the beginning of it the agent picks the (upper) pebble from the current node, and at the end of it the agent drops this  pebble at the new location. 

The aim of the next procedure is to go back to the node where procedure {\tt Move Upper Pebble} started. The agent does this by reversing the path followed during procedure {\tt Move Upper Pebble}.
We call this procedure {\tt Go Back to Previous Location}.

The aim of the next procedure is to explore the chain of boxes between the current node and the lower pebble. The procedure ends at the lower pebble.

\vspace*{0.5cm}

{\bf Procedure} {\tt Explore Chain to Pebble} 

\hspace*{1cm} {\tt Explore Box}

\hspace*{1cm} {\bf while} no pebble on current node {\bf do} 

\hspace*{2cm} {\tt Explore Box} 

\vspace*{0.5cm}

The aim of the final procedure is to find a new place for the lower pebble, on line $L_i$ (if the current phase is $i$) South of or on the boundary $H_1$.
This procedure is ``symmetric'' to the procedure {\tt Move Upper Pebble}. We  call it procedure {\tt Move Lower Pebble} and we skip the details. At the beginning of it the agent picks the (lower) pebble from the current node, and at the end it drops it at the new location. This ends phase $i$ and the next phase $i+1$ starts.

Now the algorithm can be succinctly formulated as follows.

\vspace*{0.5cm}

{\bf Algorithm} {\tt Explore Small Free -- Opposite Quadrants}

\hspace*{1cm} {\tt Explore Free Bottom Triangle}

\hspace*{1cm} {\tt Initialize Both Pebbles}

\hspace*{1cm} {\bf repeat forever}

\hspace*{2cm} {\tt Staircase  Up to Pebble}

\hspace*{2cm} {\tt Move  Upper Pebble}

\hspace*{2cm} {\tt Go Back to Previous Location}

\hspace*{2cm} {\tt Explore Chain to Pebble} 

\hspace*{2cm} {\tt Move Lower Pebble}

\begin{lemma}\label{free2}
For any small free wedge with boundaries in opposite quadrants there exists an automaton that explores it with two pebbles.
\end{lemma}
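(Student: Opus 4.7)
The plan is to mirror the induction-with-invariant argument that proved Lemma \ref{opposite}. For each $i\geq 0$ let $T_i$ denote the (closed) triangle bounded by $H_1$, $H_2$ and the line $L_i$. Introduce the invariant $Inv(i)$, intended to hold at the start of phase $i\geq 1$ of the ``repeat forever'' loop of Algorithm {\tt Explore Small Free -- Opposite Quadrants}: all grid nodes of $T_{i-1}$ have been visited, the upper pebble lies on $L_{i-1}$ East of or on $H_2$, the lower pebble lies on $L_{i-1}$ South of or on $H_1$, and the agent stands at the lower pebble.

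For the base case $Inv(1)$: procedure {\tt Explore Free Bottom Triangle} calls {\tt Visit}$(O^*,\Sigma)$ and therefore visits every node of $T_0$, ending at $O^*$. In {\tt Initialize Both Pebbles} the choice $r=\lceil z/y\rceil$ guarantees, by the same calculation as in the analysis of {\tt Initialize Pebble} in the previous section, that $r$ staircase steps North from $O^*$ bring the agent to a node of $L_0$ East of or on $H_2$, while $2r$ subsequent staircase steps South bring it to a node of $L_0$ South of or on $H_1$. Dropping the pebbles at the two endpoints of this walk delivers exactly $Inv(1)$.

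For the inductive step, assuming $Inv(i)$, I would analyze each of the five sub-procedures of phase $i$. Procedure {\tt Staircase Up to Pebble} climbs the staircase of $L_{i-1}$ from the lower pebble, and since the upper pebble is another cardinal node of the same staircase, the agent reaches it after finitely many iterations of the outer loop. Procedure {\tt Move Upper Pebble} is the procedure {\tt Move Pebble} from the preceding section verbatim, so the analysis there relocates the upper pebble to a node of $L_i$ East of or on $H_2$. Procedure {\tt Go Back to Previous Location} retraces the same trajectory in reverse, returning the agent to the just-vacated position on $L_{i-1}$. Procedure {\tt Explore Chain to Pebble} then walks the chain of boxes of $L_{i-1}$ downward, box after box, until the lower pebble is detected. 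Finally, {\tt Move Lower Pebble} is the symmetric counterpart of {\tt Move Upper Pebble} and installs the lower pebble on $L_i$ South of or on $H_1$, yielding $Inv(i+1)$.

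The main obstacle is verifying that during {\tt Explore Chain to Pebble} every grid node of the wedge strictly between $L_{i-1}$ and $L_i$ is visited. Here I would argue as follows: by the definition of the chain of boxes from the previous section, the union of its $x\times y$ boxes along $L_{i-1}$ covers exactly the unit-wide strip between the two parallel cutting lines $L_{i-1}$ and $L_i$; since the upper and lower pebbles bracket this strip (respectively East of $H_2$ and South of $H_1$ on $L_{i-1}$), the sub-chain traversed between them spans the full intersection of the strip with the wedge. Because both boundaries are free, the agent may freely explore the boxes lying partially outside the wedge until it detects the lower pebble, so no portion of the wedge in the strip is missed and no analogue of {\tt Below the Chain} is needed. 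Combined with $Inv(i)$, this gives $Inv(i+1)$, and induction together with $W=\bigcup_{i\geq 0}T_i$ yields complete exploration. Since every constant used---$x$, $y$, $r$, $m$, the node $O^*$, and the finite set $\Sigma$---can be precomputed from the unit vectors of $H_1$ and $H_2$, the algorithm is executable by a fixed deterministic automaton with two pebbles, which concludes the proof.
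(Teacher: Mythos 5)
Your overall plan (induction on phases with an invariant about the triangles $T_i$ and the pebble positions) is the same as the paper's, but the key coverage step in your induction is not correct as stated, and this is precisely the point where the paper argues differently. A box of a cardinal node $w$ of $L_{i-1}$ is an $x\times y$ rectangle whose \emph{diagonal} lies on $L_{i-1}$, so the chain of boxes of $L_{i-1}$ is a band of horizontal width $x$ straddling that line --- it does not ``cover exactly the unit-wide strip between $L_{i-1}$ and $L_i$.'' More importantly, the portion of the chain actually traversed by {\tt Explore Chain to Pebble} in phase $i$ is only the part bracketed by the two pebbles on $L_{i-1}$, and these are merely guaranteed to lie East of or on $H_2$, respectively South of or on $H_1$, \emph{relative to $L_{i-1}$}. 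The wedge part of the strip between $L_{i-1}$ and $L_i$, however, extends up to the intersection of $L_i$ with $H_2$ and down to the intersection of $L_i$ with $H_1$, i.e., strictly beyond that bracket. For instance, if the upper pebble sits essentially at $L_{i-1}\cap H_2$, then wedge nodes of the strip lying above its height are not contained in any box explored in phase $i$ (all those boxes hang South-West of cardinal nodes at or below the pebble), and the staircase walks of {\tt Move Upper Pebble} visit only a single path there, not the whole region. Hence your invariant ``all of $T_i$ is explored by the end of phase $i$'' is too strong and its justification fails.

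The paper resolves exactly this difficulty by using a \emph{lagged} invariant: after the $i$-th execution of the ``repeat forever'' loop, all grid nodes of $T_{i-1}$ (not $T_i$) are explored. Each strip between $L_{i-1}$ and $L_i$ is covered only by the union of two consecutive chain explorations --- the chain of $L_{i-1}$ in phase $i$ and the chain of $L_i$ in phase $i+1$, whose boxes dip East of $L_i$ below their diagonals and thus pick up the leftover pieces near both free boundaries --- and accordingly the base case is $Inv(2)$, established from the first two loop executions together. Your observation that no analogue of {\tt Below the Chain} is needed is true in the paper's scheme, but the reason is this two-phase overlap near the free boundaries, not that a single phase's chain already covers the whole strip. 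To repair your proof you would either have to weaken your invariant to the paper's lagged form, or strengthen the pebble-placement guarantees (pushing the pebbles far enough past the intersections of $L_i$ with the boundaries) and verify that the explored sub-chain then reaches all strip nodes, which your current argument does not do.
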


\begin{figure}[h]
   \vspace*{-.194in}
     \centering
    \includegraphics[scale = .5]{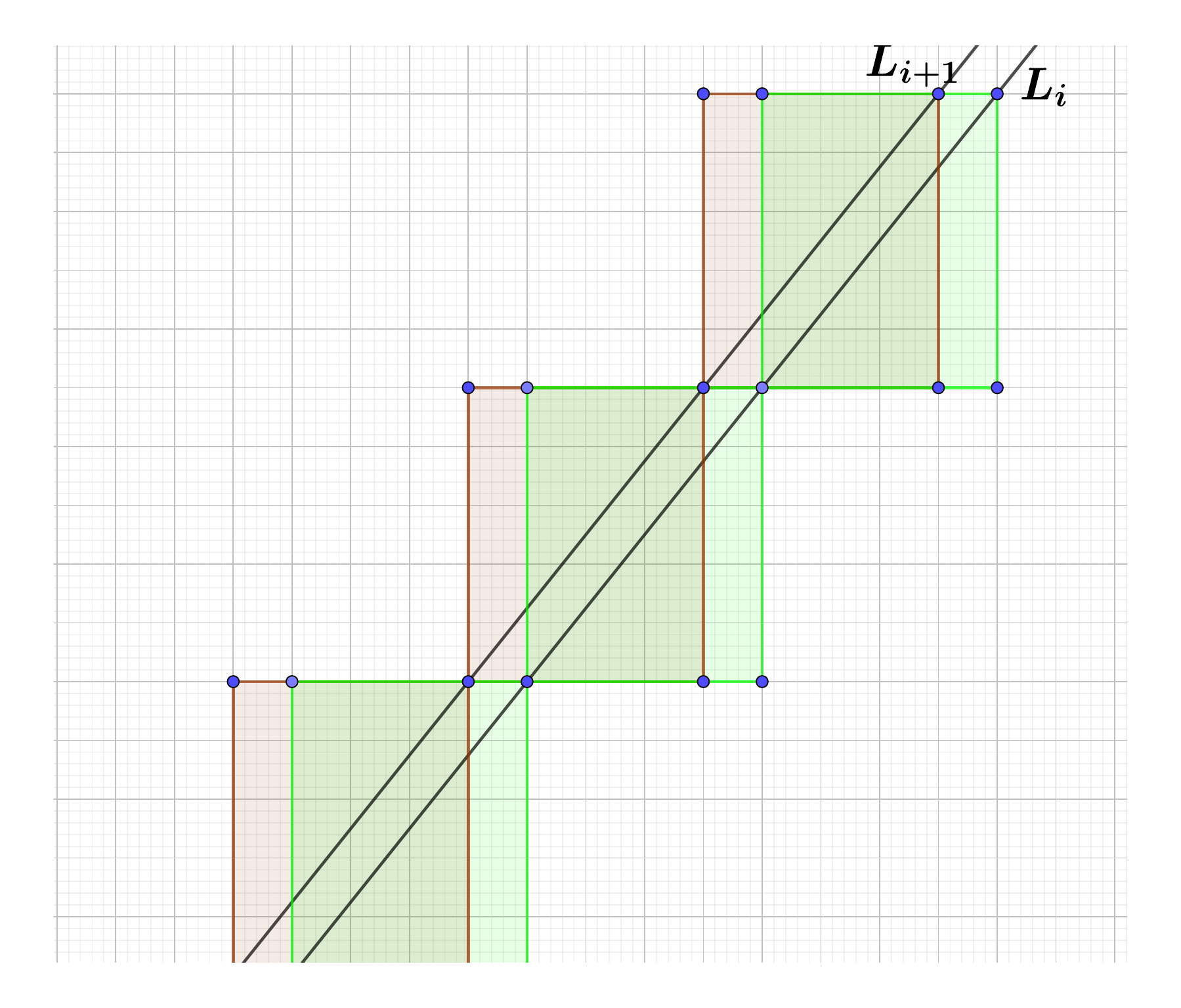}
   \vspace*{-.2in}
    \caption{The chain of boxes of line $L_i$ is explored during the $(i+1)$-th execution of the ``repeat forever'' loop, and the chain of boxes of line $L_{i+1}$ is explored during the $(i+2)$-th execution of the ``repeat forever'' loop}
    \label{free}
\end{figure}

\begin{proof}
Let $Inv(i)$, for $i\geq 2$, be the following invariant: after the $i$-th execution of the ``repeat forever'' loop, all grid nodes in the triangle $T_{i-1}$ are explored. 
By the description of procedure {\tt Explore Free Bottom Triangle}, all grid nodes in $T_0$ are visited before the first execution of the ``repeat forever'' loop.
During the first execution of this loop the chain of boxes of $L_0$ is explored and during the second execution of this loop the chain of boxes of $L_1$ is explored. Hence all grid nodes of the wedge
between lines $L_0$ and $L_1$ are explored by the end of the second execution of the loop. Hence
$Inv(2)$ holds. 
Assuming that $Inv(i)$ holds, the descriptions of the five procedures executed in each turn of the loop implies that $Inv(i+1)$ holds.
Hence  $Inv(i)$ holds for all $i\geq 2$, by induction.
Since the wedge is the union of all sets of grid nodes in triangles $T_i$, for $i\geq 2$, we conclude that Algorithm {\tt Explore Small Free -- Opposite Quadrants} explores the entire wedge.
Since all integer parameters used in the procedures called by the algorithm can be precomputed knowing the slopes of $H_1$ and $H_2$, the algorithm can be executed by an automaton. Since only two pebbles are used, this proves the lemma.
\end{proof}

Lemmas \ref{free1} and \ref{free2} imply the main positive result of this section.

\begin{theorem}\label{th-small-free}
For any small free wedge there exists an automaton that explores it with two pebbles.
\end{theorem}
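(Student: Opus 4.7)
The plan is to observe that Theorem \ref{th-small-free} is immediate from the two lemmas just established, and the only thing to verify is that the case split used in the section is exhaustive. Given any small free wedge with boundaries $H_1$ and $H_2$ (an $\alpha$-wedge with $\alpha<\pi$), the two half-lines either lie in adjacent quadrants or lie in opposite quadrants (including in both cases the limit sub-cases where one of the half-lines is horizontal or vertical). Since $\alpha<\pi$, these two situations are clearly the only ones possible.

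Next I would apply Lemma \ref{free1} in the adjacent-quadrants case and Lemma \ref{free2} in the opposite-quadrants case. Each lemma yields an automaton that, using exactly two pebbles and starting at the origin $O$, explores the wedge. The parameters of these automata (the slopes of the boundaries, the integers $x$, $y$, $r$, and the auxiliary staircase quantities) are all computed from the input unit vectors describing $H_1$ and $H_2$ at construction time, so for a given wedge we can pick at construction time which of the two algorithms to instantiate.

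Hence, given the two unit vectors defining $H_1$ and $H_2$, we determine which of the two cases applies and produce the corresponding finite automaton with two pebbles guaranteed by Lemma \ref{free1} or Lemma \ref{free2}. This automaton explores the given small free wedge, proving the theorem. There is no non-routine step in the argument, since the geometric and algorithmic content is entirely absorbed into the two preceding lemmas; the theorem is essentially a packaging statement.
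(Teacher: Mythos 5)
Your overall route is the same as the paper's: Theorem \ref{th-small-free} is indeed just the combination of Lemma \ref{free1} and Lemma \ref{free2}, with the automaton chosen at construction time from the input vectors. However, your exhaustiveness claim is not correct as stated. For a small wedge the angle $\alpha$ only satisfies $\alpha<\pi$, so both boundary half-lines may lie in the \emph{same} quadrant (e.g., two half-lines of slopes $\tan 30^\circ$ and $\tan 60^\circ$ in the first quadrant, giving $\alpha=\pi/6$). Such a wedge is covered by neither lemma, since Lemma \ref{free1} assumes adjacent quadrants and Lemma \ref{free2} assumes opposite quadrants (or a horizontal/vertical boundary). So the sentence ``these two situations are clearly the only ones possible'' fails precisely for wedges narrower than $\pi/2$.

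The missing idea is the widening argument the paper uses at the start of the subsection: because both boundaries of a free wedge are free (they do not block the agent), it suffices to explore any larger wedge containing the given one, so one may first replace the wedge by a wider wedge with the same apex whose angle is at least $\pi/2$. After this replacement the boundaries necessarily fall into adjacent quadrants or opposite quadrants (including the horizontal/vertical limit cases), and your appeal to Lemmas \ref{free1} and \ref{free2} goes through. With that one sentence added, your proof is complete and coincides with the paper's.
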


\subsubsection{Impossibility of exploration with one pebble}

The next theorem shows that the number of pebbles in Theorem \ref{th-small-free} cannot be decreased. The theorem follows from the proof of Theorem 4 in \cite{ELSUW} where the authors show that 
two automata in a grid can only explore nodes in a band of bounded width. Since an automaton can simulate a pebble and no wedge is contained in such a band, this proves the following result.

\begin{theorem}\label{lb small free}
Let $W$ be any free wedge. Then $W$ cannot be explored by an automaton with one pebble.
\end{theorem}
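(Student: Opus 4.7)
The plan is to reduce the problem to the result of Emek et al.\ cited from \cite{ELSUW}, which says that any two deterministic finite automata cooperating on the grid can only visit nodes lying in a band of bounded width (the bound depending only on the automata). Granted such a reduction, the conclusion is immediate: a free wedge, being spanned by all grid nodes inside an angular region of positive measure with apex at the starting node, contains grid nodes arbitrarily far from any given straight line, and in particular is never contained in any band of bounded width, so some grid node of the wedge will be missed.

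First I would formalise the simulation of a one-pebble automaton by two cooperating automata. Let $\mathcal A$ be a one-pebble automaton exploring the free wedge $W$. Build automata $\mathcal A_{\text{ag}}$ and $\mathcal A_{\text{peb}}$ as follows. When $\mathcal A$ carries the pebble, both simulating automata occupy the same node and move together, each tracking the state of $\mathcal A$; this is possible because, in the model of \cite{ELSUW}, two co-located automata recognise each other. When $\mathcal A$ drops the pebble, $\mathcal A_{\text{peb}}$ stays put at that node while $\mathcal A_{\text{ag}}$ continues the simulated walk of $\mathcal A$, tracking which node of the grid currently has ``a pebble'' by the position of $\mathcal A_{\text{peb}}$. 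The pebble-sensing primitive of $\mathcal A$ (``is there a pebble at the current node?'') is simulated by the co-location detection of the two automata. When $\mathcal A_{\text{ag}}$ meets $\mathcal A_{\text{peb}}$ and $\mathcal A$'s code picks the pebble up, the two automata resume moving together. In this way the set of nodes visited by $\mathcal A_{\text{ag}}$ is exactly the set of nodes visited by $\mathcal A$.

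Next, I would invoke the bounded-band lemma underlying the proof of Theorem~4 of \cite{ELSUW}: the joint trajectory of any fixed pair $(\mathcal A_{\text{ag}}, \mathcal A_{\text{peb}})$ of deterministic finite automata in the oriented grid is contained in some horizontal or vertical strip (or, more generally, an affine band) of width bounded by a constant depending only on the two automata. In particular, the set of nodes that $\mathcal A_{\text{ag}}$ visits lies in a band $B$ of bounded width, and hence so does the set of nodes visited by $\mathcal A$.

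Finally, I would observe that no free wedge $W$ can be contained in such a band $B$. Indeed, $W$ is the set of grid nodes in a closed angular region with apex $O$ and positive opening angle $\alpha>0$; choosing a direction $\vec u$ strictly inside this angle, the ray $O+t\vec u$ lies in the wedge for all $t\ge 0$, so for any band $B$ of width $w$ there are grid nodes of $W$ at perpendicular distance greater than $w$ from the central line of $B$, hence outside $B$. Thus $\mathcal A$ misses some node of $W$, contradicting exploration. The main obstacle in this argument is the first step: carefully justifying the simulation (in particular that the pebble-movement capability of $\mathcal A$ can indeed be mimicked within the two-automata model of \cite{ELSUW}, including issues of scheduling and co-location sensing) so that the bounded-band lemma applies verbatim; once this is in place, the geometric step is straightforward.
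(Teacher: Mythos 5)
Your proposal is correct and follows essentially the same route as the paper: the paper also reduces to the bounded-band argument from the proof of Theorem~4 in \cite{ELSUW}, noting that an automaton can simulate the pebble and that no wedge fits in a band of bounded width. Your write-up merely spells out the two-automata simulation in more detail than the paper does.
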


\begin{corollary}
The minimum number of pebbles sufficient to explore any small free wedge is two.
\end{corollary}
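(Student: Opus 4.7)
The plan is to derive this corollary as an immediate consequence of the two preceding theorems in the section, which together pin down the minimum from both sides. Nothing new needs to be proved: the work has already been done in the positive and negative subsections.

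First I would invoke Theorem \ref{th-small-free} to obtain the upper bound. That theorem asserts that for any small free wedge there exists an automaton that explores it with two pebbles, so two pebbles always suffice. This shows that the minimum number of pebbles needed is at most $2$.

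Next I would invoke Theorem \ref{lb small free} to obtain the matching lower bound. That theorem asserts that no free wedge (small or large) can be explored by an automaton with only one pebble, in particular no small free wedge can. Hence the minimum number of pebbles needed is strictly greater than $1$, i.e., at least $2$.

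Combining the two bounds, the minimum number of pebbles sufficient to explore any small free wedge equals exactly two. The only thing to verify is that the quantifiers match: Theorem \ref{th-small-free} is a universal statement over all small free wedges (every such wedge is explorable with $2$ pebbles), while Theorem \ref{lb small free} is also universal (every free wedge requires more than $1$ pebble). Since both statements hold for every small free wedge individually, the conclusion holds for every small free wedge, which is exactly what the corollary claims. The one (trivial) subtlety is that the statement of the corollary is about the minimum for any particular small free wedge, so no further inspection of the dependence on the boundaries $H_1,H_2$ is required beyond what is already built into the cited theorems.
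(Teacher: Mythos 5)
Your proposal is correct and matches the paper exactly: the corollary is stated as an immediate consequence of Theorem \ref{th-small-free} (two pebbles suffice) and Theorem \ref{lb small free} (one pebble is not enough for any free wedge), with no further argument needed. Your check that both theorems quantify universally over the relevant wedges is the only point worth noting, and you handled it.
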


\section{Large wedges}

In this section we assume that the clockwise angle between boundaries $H_1$ and $H_2$ of the wedge is $\alpha \geq \pi$.

\subsection{Walled wedges}

\subsubsection{Exploration with one pebble}

In this section we show that for any large walled wedge there exists an automaton that explores it with one pebble.
Without loss of generality assume that the vectors determining both boundaries have both components positive. The algorithm for the other cases, including the limit cases when boundaries are horizontal or vertical, is similar.

The high-level idea of the algorithm is the following. We choose a rational $(x,y)$-line $L$ containing the initial node of the agent such that the slope of this line is between the slopes of $H_1$ and $H_2$. 
Hence, for any node $v$ of the wedge, the line parallel to $L$ and containing $v$ intersects one of the boundaries $H_1$ or $H_2$. First the agent hits one of the walls moving along a staircase of $L$. Then it slides down to a node $O^*$ close to the origin $O$, and places the pebble at this node. The pebble will be moved along the line $L^*$ parallel to $L$ and passing through $O^*$, using a staircase of this line.
Two other rational lines $L_1$ and $L_2$ are chosen in such a way that the slope of $L_1$ is between the slopes of $H_1$ and $L^*$, and the slope of $L_2$ is between the slopes of $H_2$ and $L^*$.
Let $L_1$ be an $(x_1,y_1)$-line and let $L_2$ be an $(x_2,y_2)$-line. By definition, for any position $P$ of the pebble on line $L^*$ below the node $O^*$, the line passing through $P$ and parallel to $L_1$ must intersect the wall $H_1$, and the line passing through $w$ and parallel to $L_2$ must intersect the wall $H_2$ (see Fig. \ref{region}).

\begin{figure}[h]
   \vspace*{-.194in}
     \centering
    \includegraphics[scale = .4]{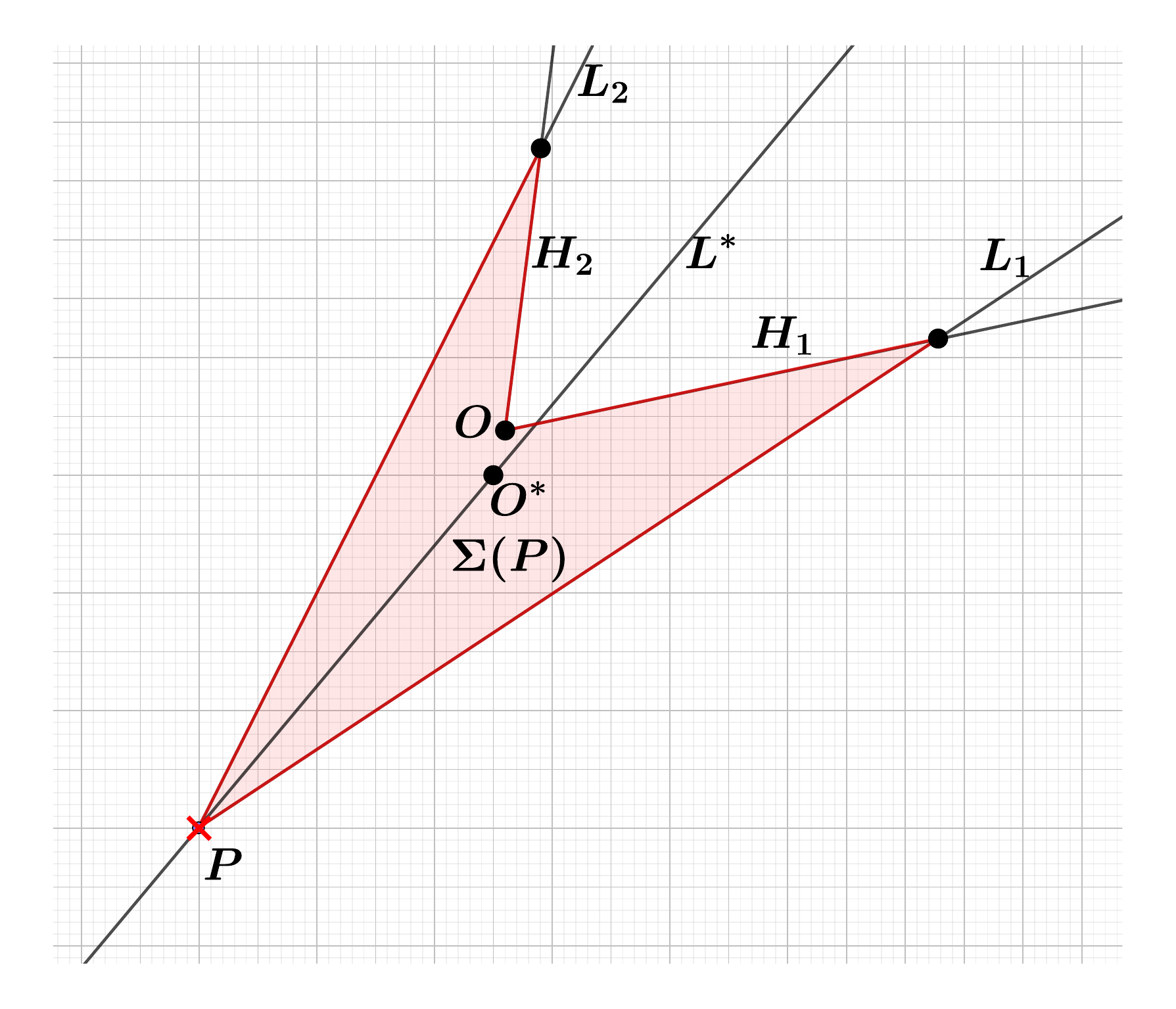}
   \vspace*{-.2in}
    \caption{The region $\Sigma(P)$ between lines $L_1$, $L_2$ and the walls $H_1$ and $H_2$ of the wedge. $P$ is the current position of the pebble.}
    \label{region}
\end{figure}

The rest of the algorithm works in phases executed by a ``repeat forever'' loop. In each phase, 
the agent starts at the pebble, goes up a staircase of line $L_2$ to hit  the wall $H_2$ and explores a chain of boxes of this line to return to the pebble. Then a similar procedure is executed along line $L_1$, the agent gets back to the pebble which is then moved one  $(x,y)$-step down along a staircase of $L^*$ and a new phase starts.
We keep the invariant that after each phase, the part of the wedge between the walls and the lines parallel to $L_1$ and $L_2$ passing through the previous position of the pebble is explored.

In order to implement the above idea we formulate several procedures.
The first of them starts at the starting node of the agent and aims at hitting a wall. It uses a staircase of line $L$. Upon its completion, the agent may either hit the right wall, i.e., boundary $H_1$ (which is indicated by the variable $right-wall$ becoming true, or it may hit the left wall, i .e., boundary $H_2$ (which is indicated by the variable $left-wall$ becoming true). The agent can infer which of the walls it hits
by seeing which port is blocked.

\vspace*{0.5cm}

{\bf Procedure} {\tt Staircase along $L$ up to Wall}

\hspace*{1cm} $hit :=false$;

\hspace*{1cm} {\bf while} $hit=false$ {\bf do}

\hspace*{2cm} $i:=0$;

\hspace*{2cm} {\bf while} $N$ is free {\bf and} $i<y$ {\bf do}

\hspace*{3cm} take port $N$;

\hspace*{3cm} $i:=i+1$;

\hspace*{2cm} {\bf if} $N$ is blocked {\bf then}

\hspace*{3cm} $hit:=true$;

\hspace*{3cm} $right-wall:=true$;

\hspace*{2cm} $j:=0$;

\hspace*{2cm} {\bf while} $hit=false$ {\bf and} $E$ is free {\bf and} $j<x$ {\bf do}

\hspace*{3cm} take port $E$;

\hspace*{3cm} $j:=j+1$;

\hspace*{2cm} {\bf if} $E$ is blocked {\bf then}

\hspace*{3cm} $hit:=true$;

\hspace*{3cm} $left-wall:=true$;

\vspace*{0.5cm}

The next procedure starts where the previous one ended and it has two variations, depending on which wall was hit during the execution of procedure  {\tt Staircase along $L$ up to Wall}. Its aim is to bring the agent
to the horizontal level of the origin $O$ of the wedge, or South of it. It is somewhat similar to procedure {\tt Go to Origin}. We only show the variation when the left wall was hit: the other variation is similar.
The procedure ends at a node $O^*$ which is either the origin $O$ in case when $O$ is a grid node, or otherwise it is the South-West corner of the elementary square of the grid containing $O$. The agent drops the pebble at $O^*$.

\vspace*{0.5cm}

{\bf Procedure} {\tt Slide Down to Origin - Left}

\hspace*{1cm} {\bf while} $E$ is blocked {\bf do}

\hspace*{2cm} {\bf if} $S$ is free {\bf then}

\hspace*{3cm} take port $S$;

\hspace*{2cm} {\bf else}

\hspace*{3cm} take port $W$;

\hspace*{3cm} take port $S$;

\hspace*{1cm} drop a pebble;

\vspace*{0.5cm}

We will need two other procedures:  {\tt Staircase along $L_1$ up to Wall} and  {\tt Staircase along $L_2$ up to Wall}. Their aim is to start at the current position $p$ of the pebble and go up along a staircase
of the line $L'_1$ passing through $p$ and parallel to $L_1$ (resp. of the line $L'_2$ passing through $p$ and parallel to $L_2$) until hitting wall $H_1$ (resp. wall $H_2$). Procedure {\tt Staircase along $L_2$ up to Wall} aims at hitting the left wall $H_2$ and procedure {\tt Staircase along $L_1$ up to Wall} aims at hitting the right wall $H_1$. Procedure {\tt Staircase along $L_2$ up to Wall}  is obtained from procedure 
{\tt Staircase along $L$ up to Wall} by the following changes: remove variables $right-wall$ and $left-wall$ and replace $x$ by $x_2$ and $y$ by $y_2$ to reflect the change of line $L$ to line $L_2$.
Procedure {\tt Staircase along $L_1$ up to Wall} is symmetric to procedure  {\tt Staircase along $L_2$ up to Wall} and uses $x_1$ instead of $x_2$ and $y_1$ instead of $y_2$, to reflect the change to line $L_1$.

The next two procedures explore the accessible part of a box of side sizes $x\cdot x_i$ and $x\cdot y_i$ of a line $L'_i$, for $i=1,2$, parallel to $L_i$. They both start at the North-East corner of a box, explore the part of it contained in the wedge and end at the South-West corner of the box. We only describe the variation for line $L_2$ (i.e., the left variation). The variation for line $L_1$ is similar. The reason to explore such large boxes, instead of boxes of side sizes $x_i$ and $y_i$, is to guarantee that all grid nodes between consecutive lines  parallel to $L_i$  are visited in two consecutive phases. These consecutive lines are at horizontal distance smaller than $x$ from each other, due to the moves of the pebble along $L^*$.

\vspace*{0.5cm}

{\bf Procedure} {\tt Conditional Box Exploration - Left}

\hspace*{1cm} {\bf for} $i:=1$ {\bf to} $x\cdot x_2$ {\bf do}

\hspace*{2cm} take port $W$;

\hspace*{1cm} {\bf for} $j=1$ {\bf to} $x\cdot y_2$ {\bf do}

\hspace*{2cm} take port $S$;

\hspace*{2cm} $k:=0$;

\hspace*{2cm} {\bf while} $E$ is free {\bf and} $k<x\cdot x_2$ {\bf do}

\hspace*{3cm} Take port $E$;

\hspace*{3cm} $k:=k+1$;

\hspace*{2cm} {\bf for} $m:=1$ {\bf to} $k$ {\bf do} 

\hspace*{3cm} take port $W$;

\vspace*{0.5cm}

The next two procedures explore a chain of boxes and  end at the current position of the pebble. We only formulate the left variation. The right variation is similar.

\vspace*{0.5cm}

{\bf Procedure} {\tt Box Chain to Pebble - Left}

\hspace*{1cm} {\bf while} no pebble at current node {\bf do}

\hspace*{2cm} {\tt Conditional Box Exploration - Left}

\vspace*{0.5cm}

Finally we formulate the procedure that moves the pebble down along $L^*$.

\vspace*{0.5cm}

{\bf Procedure} {\tt Move Pebble}
 
 \hspace*{1cm} pick a pebble;
 
  \hspace*{1cm} {\bf for} $i:=1$ {\bf to} $x$ {\bf do}
  
  \hspace*{2cm} take port $W$;

 \hspace*{1cm} {\bf for} $j:=1$ {\bf to} $y$ {\bf do}
  
 \hspace*{2cm} take port $S$;
  
 \hspace*{1cm} drop a pebble;

 \vspace*{0.5cm}
 
 We now formulate our algorithm exploring large walled wedges using one pebble.
 
  \vspace*{0.5cm}
  
  {\bf Algorithm} {\tt Explore Large Walled} 

\hspace*{1cm} {\tt Staircase along $L$ up to Wall}

\hspace*{1cm} {\bf if} $left-wall=true$ {\bf then}

\hspace*{2cm} {\tt Slide Down to Origin - Left}

\hspace*{1cm} {\bf else}

\hspace*{2cm} {\tt Slide Down to Origin - Right}

\hspace*{1cm} {\bf repeat forever}

\hspace*{2cm} {\tt Staircase along $L_2$ up to Wall}

\hspace*{2cm} {\tt Box Chain to Pebble - Left}

\hspace*{2cm} {\tt Staircase along $L_1$ up to Wall} 

\hspace*{2cm} {\tt Box Chain to Pebble - Right} 

\hspace*{2cm} {\tt Move Pebble}

\vspace*{0.5cm}

The following is the main positive result of this section.

\begin{theorem}\label{th large walled}
For any large walled wedge there exists an automaton that explores it with one pebble.
\end{theorem}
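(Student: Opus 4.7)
The plan is to prove correctness of Algorithm {\tt Explore Large Walled} by a standard initialization-plus-invariant argument, then confirm that the whole wedge is covered in the limit and that a finite automaton with one pebble suffices to implement it.

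First I would verify the initialization. Since the slope of $L$ lies strictly between the slopes of $H_1$ and $H_2$, any infinite staircase of $L$ started at a node of the wedge must eventually intersect one of the walls, so procedure {\tt Staircase along $L$ up to Wall} terminates and correctly sets exactly one of $right$-$wall$ or $left$-$wall$ to $true$. The appropriate variant of {\tt Slide Down to Origin} then brings the agent to the node $O^*$ (the grid node $O$ itself, or the South-West corner of the unit square containing $O$), and the pebble is dropped there. Let $L^*$ be the line parallel to $L$ through $O^*$; all subsequent pebble positions lie on $L^*$, shifted by integer multiples of the vector $(-x,-y)$, so they occupy the cardinal nodes of a staircase of $L^*$ descending from $O^*$.

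Next I would state the main invariant. Let $P_k$ be the position of the pebble at the beginning of the $k$-th iteration of the \textbf{repeat forever} loop (so $P_1=O^*$), and let $\Sigma(P_k)$ be the region bounded by the two walls, by the line through $P_k$ parallel to $L_1$, and by the line through $P_k$ parallel to $L_2$ (as in Fig.~\ref{region}). I claim by induction on $k$ that at the end of the $k$-th iteration every grid node of $\Sigma(P_k)$ has been visited and the pebble lies at $P_{k+1}$, one $(x,y)$-step South-West of $P_k$ on $L^*$. The base case follows because, before the first iteration, all nodes strictly above $O^*$ in the wedge are irrelevant to $\Sigma(P_1)$, and the two chain-of-boxes explorations of the first iteration fill in the part of $\Sigma(P_1)$ below $O^*$ between each wall and the corresponding line through $P_1$. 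The inductive step is immediate from the description of the procedures: {\tt Staircase along $L_i$ up to Wall} followed by {\tt Box Chain to Pebble} explores a chain of boxes of side $x\cdot x_i$ by $x\cdot y_i$ along the line through $P_k$ parallel to $L_i$, up to the wall, and {\tt Move Pebble} then shifts the pebble to $P_{k+1}$.

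The most delicate verification, and what I expect to be the main obstacle, is to show that between phases no grid nodes are missed. The horizontal distance between the lines through $P_k$ and $P_{k+1}$ parallel to $L_i$ is less than $x$, so the thin wedge-shaped region $\Sigma(P_{k+1})\setminus \Sigma(P_k)$ is sandwiched between two nearly parallel lines at distance less than $x$. This is precisely why the boxes are chosen of width $x\cdot x_i$ rather than $x_i$: a box of this enlarged width, swept along the staircase of the $L_i$-parallel line through the current pebble, is wide enough to cover every grid column that could possibly contain a node of $\Sigma(P_{k+1})\setminus\Sigma(P_k)$ not already visited during phase $k$. A short geometric lemma, comparing the horizontal shift $x$ of the pebble against the horizontal period $x_i$ of the $L_i$-staircase, establishes that the union $\bigcup_{k\ge 1}\Sigma(P_k)$ is exactly the set of grid nodes in the wedge that lie South of (or on the level of) the topmost line through $O^*$ parallel to $L_1$ or $L_2$; combined with the part of the wedge visited during the initial staircase ascent, this covers the entire large wedge, since the lines $L^*$, $L_1$, $L_2$ all descend indefinitely within the wedge.

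Finally, I would note that the integers $x,y,x_1,y_1,x_2,y_2$ depend only on the slopes of $H_1$ and $H_2$ and can be precomputed from the input, so the procedures are all realizable by a finite-state automaton, and the algorithm uses at most one pebble throughout (one pebble is carried during {\tt Move Pebble}, and otherwise the pebble rests on the grid). This yields the theorem.
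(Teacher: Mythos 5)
Your overall architecture is the paper's: descend a staircase of $L$ to a wall, slide to $O^*$, drop the pebble, and then argue by induction over the phases that after each phase the region $\Sigma(P)$ (bounded by the two walls and the lines through the current pebble position parallel to $L_1$ and $L_2$) is entirely visited, the crucial point being exactly the one you isolate: consecutive pebble positions shift the $L_i$-parallel lines horizontally by less than $x$, while the boxes have width $x\cdot x_i>x$, so two consecutive phases jointly cover the strip between consecutive lines. This matches the paper's proof of Theorem \ref{th large walled} (there the strips are called $R_i$ and $S_i$), as does your closing remark about precomputability of $x,y,x_1,y_1,x_2,y_2$.

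There is, however, a concrete error in your final coverage step. You claim that $\bigcup_{k\ge 1}\Sigma(P_k)$ consists only of the wedge nodes lying South of (or on) a line through $O^*$ parallel to $L_1$ or $L_2$, and you invoke the initial staircase ascent to cover the rest. That cannot work: the initial ascent visits only a single staircase path of bounded width, so it could never account for the infinite portion of the wedge lying North of such a line (e.g., nodes beyond the walls, outside the cone between $H_1$ and $H_2$). Fortunately the premise is false and the correct statement is stronger and simpler, and it is what the paper uses: each $\Sigma(P)$ is the intersection of the wedge with a cone of fixed aperture (between the directions of $L_1$ and $L_2$) whose apex $P$ recedes to infinity along $L^*$, a direction strictly interior to that aperture; hence every node of the wedge, including those North of $O^*$, lies in $\Sigma(P_k)$ for all sufficiently large $k$, and $\bigcup_k\Sigma(P_k)$ is the entire wedge with no help needed from the initial ascent. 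With that correction your argument coincides with the paper's; as written, the coverage claim is the one step that would not survive scrutiny.
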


\begin{proof}
For any position $P$ of the pebble on line $L^*$ denote by $\Sigma(P)$ the region between the two walls and lines $L'_1$ passing through $P$ and parallel to $L_1$ and  $L'_2$ passing through $P$ and parallel to $L_2$ (see Fig. \ref{region}).
When the pebble is first placed at node $O^*$ then the set of grid nodes of the region $\Sigma(O^*)$ consists of the node  $O^*$ which is visited. 
Let phase $i$, for $i=1,2,\dots$, be the part of the algorithm executed during the $i$-th turn of the ``repeat forever'' loop. As the union of all regions $\Sigma(P)$, for all positions $P$ of the pebble, contains all nodes of the wedge, it is enough to show that after phase $i$ all nodes of $\Sigma(P')$ where $P'$ is the previous position of the pebble, are visited. (Recall that the move of the pebble is at the very end of each phase).

In order to prove this, define $P_i$ to be the position of the pebble before its move in the $i$-th phase. Define the line $A_i$ to be the line parallel to $L_2$ passing through $P_i$, and define the line $B_i $ to be the line parallel to $L_1$ passing through $P_i$. Denote by $R_i$ the region between lines $L^*$, $A_{i-1}$, $A_i$ and $H_2$ (see Fig. \ref{region R_i}).

\begin{figure}[h]
   \vspace*{-.194in}
     \centering
    \includegraphics[scale = .5]{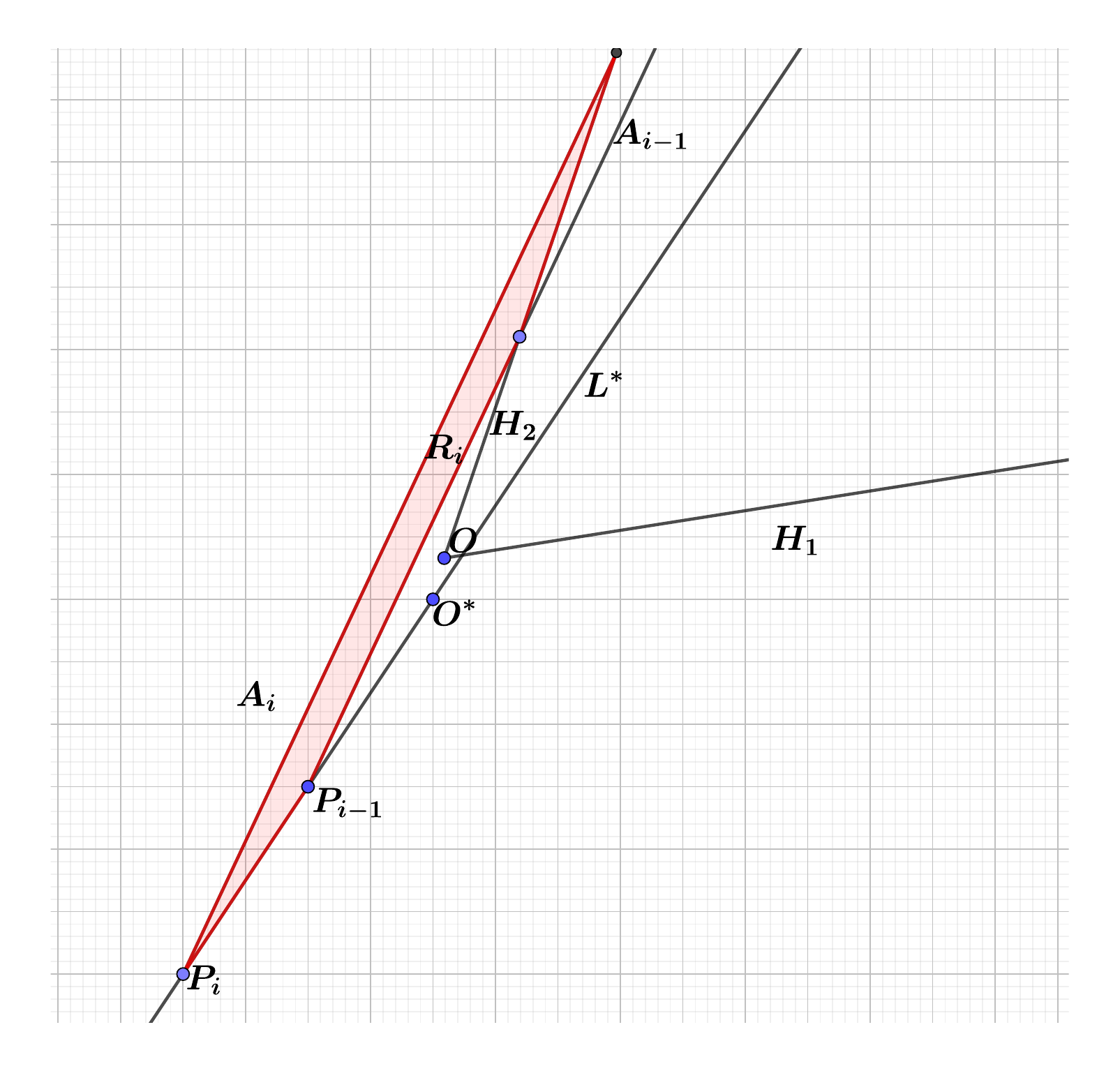}
   \vspace*{-.2in}
    \caption{The region $R_i$}
    \label{region R_i}
\end{figure}

Similarly, denote by $S_i$ the region between lines $L^*$, $B_{i-1}$, $B_i$ and $H_1$. Since the horizontal distance between lines $A_{i-1}$, and $A_i$ is less than $x$, and the width of boxes used in procedure {\tt Box Chain to Pebble - Left} is larger than $x$, it follows that during the execution of phases $i-1$ and $i$, all grid nodes of the region $R_i$ are visited. Similarly, during the execution of phases $i-1$ and $i$, all grid nodes of the region $S_i$ are visited. Hence, after phase $i$, all nodes of $\Sigma(P')$, where $P'$ is the previous position of the pebble, are visited, as desired.
\end{proof}

\subsubsection{Impossibility of exploration without a pebble}

In this section we show that the number of pebbles in Theorem \ref{th large walled} cannot be decreased.

\begin{theorem}
No large walled wedge can be explored by an automaton without any pebble.
\end{theorem}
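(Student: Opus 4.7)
Proof plan: by contradiction, adapting the strategy of the proof of Theorem~\ref{th-lb-semi-walled}. Suppose an automaton $\mathcal{A}$ with $s$ states explores a large walled wedge $W$ (walls $H_1$ and $H_2$, clockwise angle $\alpha\geq\pi$) without pebbles, producing an infinite trajectory $T$. After rotating coordinates, assume $H_1$ is the vertical half-line going North from $O$. By Lemma~\ref{infinitely}, every node of $W$ is visited only finitely many times; in particular $O$ is visited at some first time $t^O$, since $O\in W$ must be explored.

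I split into the sub-cases $\alpha=\pi$ and $\alpha>\pi$. When $\alpha=\pi$, $H_1\cup H_2$ is a full vertical line which is entirely walled and $W$ is the East closed half-plane $\{x\ge0\}$. Vertical translation is then a global symmetry of the wall structure, so the signature of every node of $W$ is invariant under vertical shifts. Since $H_1$ has infinitely many grid nodes in $W$, each visited only finitely many times, the agent makes infinitely many $H_1$-visits, and pigeonhole yields two such visits $w',w''$ in the same state $S$ at times $t'<t''$ with $w''$ strictly north of $w'$. The standard shift argument (as in the proof of Theorem~\ref{th-lb-semi-walled}) then shows that the trajectory after step $t'$ is obtained by iteratively shifting the excursion between $w'$ and $w''$ by multiples of $\vec v=w''-w'$, and hence lies in a vertical strip $\{x_{\min}\le x\le x_{\max}\}$. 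Because $W$ contains grid nodes of arbitrarily large $x$-coordinate, infinitely many of them remain unvisited — contradiction.

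When $\alpha>\pi$, I first establish an auxiliary claim strictly analogous to the Claim in the proof of Theorem~\ref{th-lb-semi-walled}: \emph{after visiting $O$, the agent can never reach the horizontal line $P_{s+1}$ at distance $s+1$ South of $O$}. Its proof mirrors the semi-walled one — pigeonhole on successive visits to the $s+1$ lines $P_1,\dots,P_{s+1}$ produces two same-state visits $v',v''$ on different $P_i$'s with $v''$ south of $v'$, and the shift argument confines the trajectory after $v'$ to a strip parallel to the southward vector $v''-v'$; but $H_1$ has infinitely many grid nodes arbitrarily far North of $O$ (all in $W$), which lie outside any such southward strip, yielding the contradiction that proves the claim. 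Now, because $\alpha\ge\pi$, the wedge $W$ contains the entire East closed half-plane, hence infinitely many grid nodes with $y\le-(s+1)$ (for instance all $(x,y)$ with $x\ge0$ and $y\le-(s+1)$), every one of which must be visited. The claim forces the $y$-coordinate of the agent to satisfy $y\ge -s$ at every time $t>t^O$, while before $t^O$ the agent has visited at most $t^O$ distinct nodes in total. Therefore only finitely many of the infinitely many required $y\le-(s+1)$ nodes can ever be visited, the desired contradiction.

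The main obstacle lies in the verification of the shift in the auxiliary claim for $\alpha>\pi$: the intermediate trajectory between $v'$ and $v''$ might pass through nodes adjacent to the second wall $H_2$, where translation by $v''-v'$ could in principle break signature consistency. This is handled exactly as in the semi-walled proof: the choice of the $w_i$'s by maximality of the $t_i$'s keeps the relevant portion of the trajectory in a region whose wall structure is invariant under the shift, so the inductive step of the shift argument goes through verbatim, and the remainder of the argument is a direct geometric count using that $W$ is large (so the East half-plane sits inside $W$ and supplies the infinite family of unreachable southern nodes).
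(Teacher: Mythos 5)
Your case $\alpha=\pi$ is essentially sound (vertical translation invariance of a fully walled axis-parallel line makes the pumping step work), but the heart of your argument for $\alpha>\pi$ has a genuine gap. Under your normalization ($H_1$ vertical going North), whenever $\alpha\in(\pi,3\pi/2)$ the second wall $H_2$ points South-West and therefore crosses every horizontal line $P_1,\dots,P_{s+1}$ South of $O$. So the region in which your auxiliary claim's trajectory segment between the two same-state visits $v',v''$ lives is \emph{not} wall-free, unlike in the proof of Theorem~\ref{th-lb-semi-walled}, where the only wall points North and everything strictly South of $P_1$ has all ports free --- that triviality is exactly what makes the inductive shift by $v''-v'$ input-consistent there. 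Your one-sentence fix (``the choice of the $w_i$'s by maximality of the $t_i$'s keeps the relevant portion of the trajectory in a region whose wall structure is invariant under the shift'') is not what that choice achieves: it only keeps the agent South of $P_1$, and nothing prevents the segment from $v'$ to $v''$ from running along $H_2$, in which case its translates by multiples of $v''-v'$ see different blocked-port patterns and the induction (hence the confinement to a strip) collapses. This is precisely the difficulty the paper's proof avoids: it does not use horizontal lines South of $O$ at all, but a family of parallel lines $P_i$ chosen \emph{inside} the wedge, pointing away from both walls; after the agent leaves the first belt it is provably in a region where every port is free, so the shift argument needs no wall invariance, and the final contradiction is obtained from the belts themselves. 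Note also that the paper's proof works for walls of arbitrary (possibly irrational) slope, whereas your opening step ``after rotating coordinates, assume $H_1$ is vertical'' is not a legitimate reduction on the grid, and your argument genuinely uses it (exact vertical periodicity of the wall in the $\alpha=\pi$ case --- for an irrational-slope bounding line there may be no grid nodes on the wall at all and no exact translation symmetry).

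Two smaller slips: pigeonhole does not let you insist that the later same-state wall visit $w''$ is the more northern one (immaterial in your $\alpha=\pi$ case, but stated as if it were free); and in the $\alpha>\pi$ case your contradiction ``$H_1$ has infinitely many nodes far North outside any southward strip'' fails when $v''-v'$ happens to be exactly vertical and the strip contains the column of $H_1$ --- the correct (and paper's) conclusion is simply that the wedge, containing a half-plane, has infinitely many nodes outside any strip of bounded width. These are fixable, but the input-consistency of the pumping step near $H_2$ is a missing idea, not a routine verification, and as written the proposal does not establish the theorem.
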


\begin{proof}
%
%
%

Consider any large walled wedge $W$ with walls $H_1$ and $H_2$, and any automaton with $s$ states exploring it without any pebble.
Let $T$ be the (infinite) trajectory of this automaton starting at any node of $W$.
Let $P_i$, for $i=1,2,\dots$, be any family of parallel lines included in the part of the plane clockwise between $H_1$ and $H_2$, such that $P_i$ is at distance $i$ from the origin $O$ of the wedge.
Let $B_i$, for $i=1,2,\dots$, be the set of grid nodes situated between lines $P_i$ and $P_{i+1}$, including $P_i$ and excluding $P_{i+1}$.  Call any $B_i$ a {\em belt}. 
Consider the part $T'$ of the trajectory $T$ after the first visit of $O$. 

\vspace*{0.5cm}

\noindent
{\bf Claim.}
No grid node in any belt $B_i$, for $i>s$, can be visited during trajectory $T'$.

We prove the claim by contradiction. 
Suppose that the agent visits node $O$ for the first time in some step $t$, and then reaches the belt $B_{i}$, for $i>s$. 
Then there exist nodes $w_i$ in $B_i$ visited in step $t_i$, for any $i\geq 1$, such that $t_j>t_i$, for $j>i$, and the agent never visits a node of $B_1$ after step $t_1$.
It follows that there are grid nodes $v'=w_i$  and $v''=w_j$ for some $j>i$, such that the agent is in the same state $S$ when visiting node $v'$ in step $t'=t_i$ and node $v''$ in step $t''=t_j$.

 Let $v_0,v_1,\dots, v_r$, with $v_0=v'$ and $v_r=v''$,  be the sequence of nodes visited between steps $t'$ and $t''$. Hence $t''=t'+r$.  Let $\alpha$ be the vector $(v',v'')$. Hence in step $t'+kr+j$, for any natural $k$ and any $j<r$, the automaton will be in the same state $S_j$ at node $v_j+k\alpha$. (Intuitively, the trajectory $(v_0,v_1,\dots, v_r)$ is shifted infinitely many times, from step $t'$ on, by multiples of the vector $\alpha$). Let $L_1$ and $L_2$ be two lines parallel to vector $\alpha$, such that all nodes $v_0,v_1,\dots, v_r$ are between these lines. It follows that all nodes of the part of the trajectory $T$ after step $t'$ are between lines $L_1$ and $L_2$. Consequently, at most $t'$ nodes of the trajectory $T$ are not between these lines. Since infinitely many nodes of $W$ are not between lines $L_1$ and $L_2$, there exist nodes of $W$ which will never be visited. This contradiction proves the claim.
 
Let $B$ be the union of all belts $B_i$, for $i>s$. By the claim, the agent cannot visit any grid node of $B$ after step $t$. Since $B$ is an infinite set included in the wedge $W$, and at most $t$ nodes of this set can be visited, it follows that there is a node of $W$ that the agent never visits. This contradiction proves the theorem.
\end{proof}

\subsection{Semi-walled wedges}

In this section we consider large semi-walled wedges. We will show that every large semi-walled wedge can be explored by an automaton with two pebbles and that one pebble is not enough to explore any large semi-walled wedge.

\subsubsection{Exploration with two pebbles}

We first show how to explore any large semi-walled wedge using two pebbles. It is enough to consider the hardest case, when the wedge is the entire grid with one half-line $H$ being the single wall. Without loss of generality assume that the vector determining the wall has both components positive. The algorithm for the other cases is similar.

The high-level idea of Algorithm {\tt Explore Large Semi-Walled} is the following. First use a modified version of Algorithm {\tt Explore Small Free - Adjacent Quadrants} until the agent hits the wall. The small free wedge used in this algorithm is the wedge with boundary $H_1$ defined by the vector $(-1,1)$ and with boundary $H_2$ having positive slope smaller than that of the wall of the input wedge. It is easy to see that, regardless of the initial position of the agent, the part of the plane clockwise between boundaries $H_1$ and $H_2$ must intersect the wall $H$. Since Algorithm {\tt Explore Small Free - Adjacent Quadrants} explores the entire wedge with these boundaries, the agent must hit the wall.

The modified procedure makes all moves of Algorithm {\tt Explore Small Free - Adjacent Quadrants} conditional on the fact that the currently used port is free. Once the agent hits the wall at some node $w$, it learns whether it is below or above the wall, as this is indicated by which port at $w$ is blocked. This is marked in the modified procedure by switching the variable $below$ to $true$ in the first case and leaving it $false$ otherwise. Next the agent picks both pebbles and gets back to node $w$.
We skip the details of this procedure and call it {\tt Modified Small Free Exploration}.

Then the agent ``slides down'' the wall to a node $O^*$ equal or close to $O$, at the horizontal level of $O$ or South of it.  The agent drops the two pebbles one East of  $O^*$ and the other West of $O^*$ and goes to the left pebble.
The rest of the algorithm is executed in phases, where each phase is one turn of a ``repeat forever '' loop. Each phase starts with the agent at the left pebble. We choose a rational $(x,y)$-line $L$ with slope smaller than the slope of the wall $H$. For any position $p$ of the left pebble the line parallel to $L$ and going through $p$ is guaranteed to intersect the wall, and for any position $q$ of the right pebble the vertical line containing $q$ intersects the wall.  In each phase, the agent goes from the left pebble up a staircase of the line parallel to $L$ and passing through the current position $p$ of the left pebble, until hitting the wall. Then it goes back to the left pebble exploring a chain of boxes. Next the agent goes to the right pebble, goes North from it until hitting the wall, and goes back to the right pebble. The phase is finished by moving the pebbles to new positions. The new position $q'$ of the right pebble is South and East from $q$ and the new position $p'$ of the left pebble is  South and West from $p$. While moving the pebbles the agent explores the region between the new and the old positions of the pebbles (see Fig. \ref{phase}). Then a new phase starts with the agent at the new position of the left pebble.

\begin{figure}[h]
   \vspace*{-.194in}
     \centering
    \includegraphics[scale = .5]{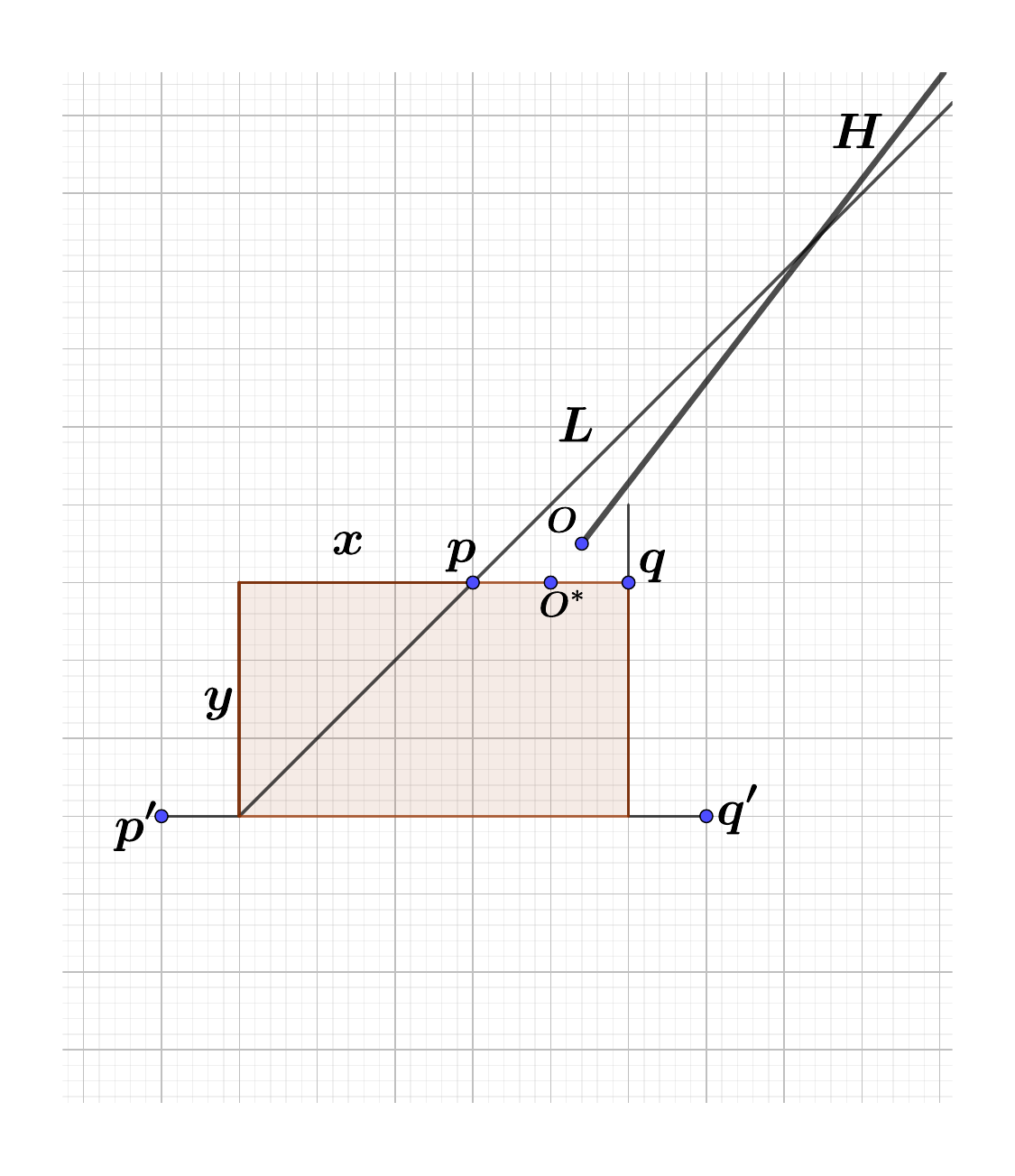}
   \vspace*{-.2in}
    \caption{Illustration of a phase of Algorithm {\tt Explore Large Semi-Walled}. The shaded rectangle is explored while moving the pebbles.  }
    \label{phase}
\end{figure}

In order to implement this idea we define several procedures.
The first of them starts where procedure {\tt Modified Small Free Exploration} ended
and it has two variations:  {\tt Slide Down to Origin - Above} when the wall was hit from above, and  {\tt Slide Down to Origin - Below} when the wall was hit from below.
Its aim is to bring the agent
to the horizontal level of the origin $O$ of the wedge, or South of it. It is somewhat similar to procedure {\tt Go to Origin}. We only show the variation when the wall was hit from above: the other variation is similar.

\vspace*{0.5cm}

{\bf Procedure} {\tt Slide Down to Origin - Above}

\hspace*{1cm} {\bf while} $E$ is blocked {\bf do}

\hspace*{2cm} {\bf if} $S$ is free {\bf then}

\hspace*{3cm} take port $S$;

\hspace*{2cm} {\bf else}

\hspace*{3cm} take port $W$;

\hspace*{3cm} take port $S$;

\vspace*{0.5cm}

The next procedure initializes pebbles in such a way that they are on the same horizontal line at distance 2 from each other, the left one left of $O$ and the right one right of $O$, and they are at the level of $O$ or South of this level. We only describe the variation corresponding to procedure  {\tt Slide Down to Origin - Above}. The other variation is similar. In both cases the agent ends up at the left pebble.

\vspace*{0.5cm}

{\bf Procedure} {\tt Initializing Pebbles - from Above}

\hspace*{1cm} {\tt Modified Small Free Exploration}

\hspace*{1cm} {\tt Slide Down to Origin - Above}

\hspace*{1cm} take port $E$;

\hspace*{1cm} drop a pebble;

\hspace*{1cm} take port $W$;

\hspace*{1cm} take port $W$;

\hspace*{1cm} drop a pebble;

\vspace*{0.5cm}

The next procedure starts at the current position $p$ of the left pebble and uses a staircase along the line parallel to $L$ and passing through $p$, in order to hit the wall. 

\vspace*{0.5cm}

{\bf Procedure} {\tt Staircase up to Wall}

\hspace*{1cm} $hit :=false$;

\hspace*{1cm} {\bf while} $hit=false$ {\bf do}

\hspace*{2cm} $i:=0$;

\hspace*{2cm} {\bf while} $N$ is free {\bf and} $i<y$ {\bf do}

\hspace*{3cm} take port $N$;

\hspace*{3cm} $i:=i+1$;

\hspace*{2cm} {\bf if} $N$ is blocked {\bf then}

\hspace*{3cm} $hit:=true$;

\hspace*{2cm} $j:=0$;

\hspace*{2cm} {\bf while} $hit=false$ {\bf and} $E$ is free {\bf and} $j<x$ {\bf do}

\hspace*{3cm} take port $E$;

\hspace*{3cm} $j:=j+1$;

\hspace*{2cm} {\bf if} $E$ is blocked {\bf then}

\hspace*{3cm} $hit:=true$;

\vspace*{0.5cm}

The next procedure explores a chain of boxes and ends at the current position of the left pebble.
It uses procedure  {\tt Conditional Box Exploration - Left} from the previous section.

\vspace*{0.5cm}

{\bf Procedure} {\tt Box Chain to Left Pebble}

\hspace*{1cm} {\bf while} no pebble at current node {\bf do}

\hspace*{2cm} {\tt Conditional Box Exploration - Left}

\vspace*{0.5cm}

The aim of the next procedure is to find new positions of the pebbles at the end of a phase. Both new positions  have to be South of the old ones, on the same horizontal line.
Suppose that the old position of the left pebble was on line $L'$ parallel to $L$ and the old position of the right pebble was on vertical line $V$.
 The new position of the left pebble has to be on the line $L^*$ parallel to $L$,  one step West of $L'$,  and the new position of the right pebble has to be on the vertical line $V^*$, one step East of $V$. This will ensure widening of the trajectories in consecutive phases. Since moving along the lines parallel to $L$ can be only done using staircases, the new positions of the pebbles will be $y$ steps South of the previous positions. This in turn requires visiting all nodes in a rectangle of height $y$ and width larger by $x$ than the distance between the old positions of pebbles (otherwise grid nodes of this rectangle would be missed). The following procedure starts at the left pebble and accomplishes all this. 
 
 \vspace*{0.5cm}
 
 {\bf Procedure} {\tt Move Pebbles}
 
 \hspace*{1cm} pick a pebble;
 
  \hspace*{1cm} {\bf for} $i:=1$ {\bf to} $x$ {\bf do}
  
  \hspace*{2cm} take port $W$;
  
  \hspace*{1cm} drop a pebble;
  
%
%
%
%
%
%
%
%
%

 \hspace*{1cm} {\bf for} $j:=1$ {\bf to} $y$ {\bf do}
 
 \hspace*{2cm} pick a pebble;
 
 \hspace*{2cm} take port $S$;
  
 \hspace*{2cm} drop a pebble;
 
 \hspace*{2cm} take port $N$;
 
 \hspace*{2cm} {\bf while} no pebble on current node {\bf do}
 
 \hspace*{3cm} take port $E$; 
 
 \hspace*{2cm} pick a pebble;
 
 \hspace*{2cm} take port $S$;
  
 \hspace*{2cm} drop a pebble;
 
 \hspace*{2cm} take port $W$;
 
 \hspace*{2cm} {\bf while} no pebble on current node {\bf do}

 \hspace*{3cm} take port $W$; 
 
 \hspace*{1cm} pick a pebble;
 
 \hspace*{1cm} take port $W$;
 
 \hspace*{1cm} drop a pebble;
 
 \hspace*{1cm} take port $E$;
 
 \hspace*{1cm} {\bf while} no pebble on current node {\bf do}
 
 \hspace*{2cm} take port $E$;
 
 \hspace*{1cm} pick a pebble;
 
\hspace*{1cm} take port $E$;

\hspace*{1cm} drop a pebble;

\hspace*{1cm} take port $W$;

\hspace*{1cm}  {\bf while} no pebble on current node {\bf do}

\hspace*{2cm} take port $W$;

 \vspace*{0.5cm}
 
 We are now able to formulate our exploration algorithm. In addition to the above formulated procedures, it will use procedure {\tt Slide Down to Origin - Below} which is the counterpart of 
  procedure {\tt Slide Down to Origin - Above},  in the case where the wall was hit from below during the execution of  procedure {\tt Modified Small Free Exploration}. Procedure {\tt Initializing Pebbles - from Below} will be the counterpart of procedure {\tt Initializing Pebbles - from Above} with  procedure {\tt Slide Down to Origin - Below} replacing procedure {\tt Slide Down to Origin - Above}.
 
 In each phase executed by  a turn of the ``repeat forever'' loop, the agent that is at the left pebble uses procedure {\tt Staircase up to Wall}  to hit the wall, then it uses procedure {\tt Box Chain to Left Pebble} to do the left part of the exploration. Upon completion of it, the agent is at the left pebble. Then it moves to the right pebble. Here the situation is easier because the agent may move vertically.
 It uses the following subroutine.
 
  \vspace*{0.5cm}
 
 {\bf Procedure} {\tt Vertical Trip} 
 
 \hspace*{1cm} {\bf while} $N$ is free {\bf do}
 
 \hspace*{2cm} take port $N$;
 
 \hspace*{1cm} {\bf while} no pebble on current node {\bf do}
 
 \hspace*{2cm} take port $S$;
 
 \vspace*{0.5cm}

 After completion of it the agent is back at the right pebble. 
Finally, it goes to the left pebble and it uses procedure {\tt Move Pebbles} to place the pebbles on the new positions and to perform the bottom part of the exploration.
One phase of the algorithm is formulated as follows.

\vspace*{0.5cm}

{\bf Procedure} {\tt Phase}

\hspace*{1cm} {\tt Staircase up to Wall} 

\hspace*{1cm} {\tt Box Chain to Left Pebble}

\hspace*{1cm} take port $E$;

\hspace*{1cm}  {\bf while} no pebble on current node {\bf do}

\hspace*{2cm} take port $E$;

\hspace*{1cm} {\tt Vertical Trip}

\hspace*{1cm} take port $W$;

\hspace*{1cm}  {\bf while} no pebble on current node {\bf do}

\hspace*{2cm} take port $W$;

\hspace*{1cm} {\tt Move Pebbles}

\vspace*{0.5cm}

Now the algorithm can be succinctly formulated as follows.

\vspace*{0.5cm}

{\bf Algorithm} {\tt Explore Large Semi-Walled} 

\hspace*{1cm} {\tt Modified Small Free Exploration}

\hspace*{1cm} {\bf if} $below=false$ {\bf then}

\hspace*{2cm} {\tt Slide Down to Origin - Above}

\hspace*{2cm} {\tt Initializing Pebbles - from Above}

\hspace*{1cm} {\bf else}

\hspace*{2cm} {\tt Slide Down to Origin - Below}

\hspace*{2cm} {\tt Initializing Pebbles - from Below}

\hspace*{1cm} {\bf repeat forever}

\hspace*{2cm} {\tt Phase}

\vspace*{0.5cm}

\vspace*{0.5cm}

Since procedure {\tt Modified Small Free Exploration} guarantees hitting the wall and the rest of the algorithm guarantees the exploration of the entire grid in the presence of the wall, we conclude that the algorithm correctly explores all large semi-walled wedges. All parameters used in the procedures can be precomputed knowing the slope of the wall. Hence the algorithm can be executed by an automaton. Since two pebbles are used, we have the main positive result of this section.

\begin{theorem}\label{large semi-walled}
For any large semi-walled wedge there exists an automaton that explores it with two pebbles.
\end{theorem}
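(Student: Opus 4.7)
The plan is to verify that Algorithm {\tt Explore Large Semi-Walled} correctly explores the grid and can be implemented by a finite automaton with two pebbles, following three stages that mirror the structure of the algorithm: (1) initial wall-hitting via a modified small-free exploration, (2) sliding down and placing the two pebbles in appropriate initial positions near $O$, and (3) an invariant-based analysis of the ``repeat forever'' loop showing that every grid node is eventually visited.

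First I would argue that procedure {\tt Modified Small Free Exploration} necessarily causes the agent to hit the wall $H$. Consider the auxiliary small free wedge $W'$ with boundaries $H_1'$ of slope $-1$ and $H_2'$ of positive slope smaller than that of $H$. By the choice of slopes, for every starting position inside the large semi-walled wedge, the region swept by Algorithm {\tt Explore Small Free -- Adjacent Quadrants} on $W'$ intersects the half-line $H$; were the wall never hit, all moves would proceed as in the unmodified algorithm, and by Lemma~\ref{free1} infinitely many nodes of $W'$ lying beyond $H$ would be visited --- impossible since those ports would be blocked. Hence the wall is eventually hit, and by inspecting which port at the hit node is blocked the agent sets the boolean $below$ correctly; the agent then picks up both pebbles and retraces to the hit node.

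Second, I would verify that procedures {\tt Slide Down to Origin - Above} (or its ``Below'' counterpart) and {\tt Initializing Pebbles - from Above} (or its counterpart) bring the agent to a node $O^*$ on the horizontal row of $O$ or just below it, then drop the two pebbles at horizontal distance $2$ apart, straddling $O^*$, leaving the agent at the left pebble. This is a direct check from the procedure code, analogous to the argument used for procedure {\tt Go to Origin} in Lemma~\ref{adjacent}.

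Third, and this is the heart of the argument, I would analyse the ``repeat forever'' loop by induction on the phase number. Let $p_i$ and $q_i$ be the positions of the left and right pebbles at the start of phase $i$; write $L'_i$ for the line parallel to $L$ through $p_i$ and $V_i$ for the vertical line through $q_i$. The claimed invariant is: at the start of phase $i$, all grid nodes of the wedge clockwise between lines $L'_{i-1}$ and the wall that are above $L^*$ (the horizontal line through $p_i, q_i$), as well as all grid nodes right of $V_{i-1}$ and above $L^*$ up to the wall, and all grid nodes in the horizontal strip swept so far below, have been visited. In phase $i$, procedure {\tt Staircase up to Wall} combined with {\tt Box Chain to Left Pebble} explores a chain of boxes along the line parallel to $L$ through $p_i$ up to the wall (using Proposition-style arguments as in the analysis of {\tt Explore Chain} in Section~2.1.2); since consecutive lines $L'_{i-1}$ and $L'_i$ are at horizontal distance less than the box width $x$, the strip between them up to the wall is covered over two consecutive phases. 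The vertical trip from $q_i$ covers the vertical strip of $V_i$ up to the wall, and similarly consecutive vertical lines $V_{i-1}, V_i$ are at distance $1$, so the vertical half-strip is covered. Finally, procedure {\tt Move Pebbles} explores the $y\times(\text{distance}+x)$ rectangle just below the current pebble row while relocating the left pebble one step West on $L^*$-translate and the right pebble one step East on the $V$-translate, after which the agent returns to the (new) left pebble, preserving the invariant for phase $i+1$.

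The main obstacle will be the bookkeeping in the argument that {\tt Move Pebbles} does in fact visit every grid node of the rectangle between the old and the new pebble rows while carrying the two pebbles to their new locations; one must verify, step by step, that the interleaved pebble drops/picks and horizontal sweeps leave no column unvisited and that the agent ends at the new position of the left pebble. Given this, since $p_i$ drifts arbitrarily far West (at rate $x$ horizontally per phase along the $L$-staircase) and $q_i$ drifts arbitrarily far East and both drift arbitrarily far South, the union of regions covered exhausts the grid. All integer parameters ($x$, $y$, the slopes used by {\tt Modified Small Free Exploration}, the rectangle dimensions) can be precomputed from the slope of $H$, so the algorithm is executable by a finite Mealy automaton using only the two pebbles, yielding the theorem.
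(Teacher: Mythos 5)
Your plan takes essentially the same route as the paper: it verifies the already-described Algorithm {\tt Explore Large Semi-Walled} in the same three stages (forced wall-hitting via {\tt Modified Small Free Exploration}, pebble initialization near $O$, and a phase-by-phase invariant showing that the staircase/box-chain to the wall, the vertical trip from the right pebble, and the rectangle swept by {\tt Move Pebbles} jointly exhaust the grid, with all parameters precomputable from the slope of the wall), which is precisely the paper's (much terser) argument. The extra justifications you add, such as the contradiction via Lemma~\ref{free1} for wall-hitting and the two-consecutive-phase coverage argument in the style of the large-walled analysis, are consistent with and merely flesh out the paper's reasoning.
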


\subsubsection{Impossibility of exploration with one pebble}

In this section we prove that the number of pebbles from Theorem \ref{large semi-walled}  cannot be decreased.

\begin{theorem}
No large semi-walled wedge can be explored by an automaton with one pebble.
\end{theorem}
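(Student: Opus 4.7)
The plan is to assume for contradiction that some automaton $A$ with $s$ states and one pebble explores a large semi-walled wedge $W$, and to derive a contradiction by showing that the trajectory is ultimately trapped in a set too small to cover $W$. Without loss of generality assume the wall $H$ goes North from the origin $O$, so that $W$ contains the entire half-plane East of $H$. The overall strategy is to split the behavior of the trajectory into cases according to how the pebble is used over time, and in each case either reduce to the no-pebble impossibility of Theorem~\ref{th-lb-semi-walled} or extract a direct periodicity argument forcing the trajectory into a bounded-width band, which cannot cover a half-plane.

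First I would handle the case in which, from some step $t_0$ onward, the pebble is either always carried by the agent or always resting at one fixed node $p$. If it is always carried, then after $t_0$ every input has the form $(F,e,1)$, so the automaton is functionally a no-pebble automaton from $t_0$ on; the proof of Theorem~\ref{th-lb-semi-walled} depends only on the state space being finite and the trajectory being infinite, not on the initial state, and thus yields a node of $W$ that is never visited, a contradiction. If the pebble rests at $p$, then $p$ is effectively a single marked node and the proof of Theorem~\ref{th-lb-semi-walled} adapts with a minor pigeonhole inflation: wherever the original argument selects $s+1$ visits in distinct states, we now select $s+2$, since at most one of them can occur at the anomalous node $p$ with input $(F,f,0)$, leaving $s+1$ visits with uniform input to which the pigeonhole on states still applies, and the periodic-shift argument still traps the trajectory in a bounded-width band.

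The genuine case is that the pebble is picked up infinitely many times. If the set of positions at which the pebble has ever rested is bounded, then pick events take place at only finitely many pairs (pick node, internal state); by pigeonhole some such pair recurs at two steps, and since the full world configuration just after a pick (agent at the pick node, pebble carried, all other nodes empty, specified state) is identical at both steps, determinism makes the trajectory eventually periodic and hence visits only finitely many nodes. If the set of positions is unbounded, I would pigeonhole on the internal state $S_i$ recorded just after the $i$-th drop to fix some state $S$ occurring infinitely often in $(S_i)$. A ``cycle'' starting from such a drop in state $S$ at node $q$ consists of a pebbleless wander from $q$ back to $q$, a pick-up, and a carry to the next drop; as long as this cycle stays clear of the wall, it is determined by $S$ alone, so both the translation $q_{i+1}-q_i$ and the next drop-state $S_{i+1}$ are deterministic functions of $S$. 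Iterating, the sequence of drops eventually lies on a single line and the whole trajectory is confined to a bounded-width band around that line, missing infinitely many nodes of $W$. If instead infinitely many such cycles do touch the wall, the corresponding drops must be within a bounded distance from $H$, so the drop positions are confined to a vertical band hugging the wall, and the entire trajectory stays in that band; since $W$ contains the whole half-plane East of $H$, infinitely many nodes of $W$ lie outside this band and are never visited.

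The main obstacle I expect is this final subcase: one has to verify carefully that, for cycles avoiding the wall, both the transition $S\mapsto S_{i+1}$ and the translation $q_{i+1}-q_i$ are genuinely functions of $S$ alone, invariant under shifts of the starting node, and to control what happens when within one long subsequence some cycles touch the wall and others do not (likely by extracting a further sub-subsequence of one pure type). Once this is done, bounding the perpendicular width of the trapping band by the diameter of a single cycle is routine, and the ``band versus two-dimensional infinite wedge'' contradiction already used in the proof of Theorem~\ref{th-lb-semi-walled} closes each case.
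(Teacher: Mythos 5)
Your case analysis is sound up to the decisive subcase, but that subcase contains a genuine gap, and it is exactly the crux of the theorem. When the pebble is picked up infinitely often and its rest positions are unbounded, you split the cycles (drop, wander, pick, carry, drop) into those avoiding and those touching the wall, and you claim that if infinitely many cycles touch the wall then the corresponding drops lie within bounded distance of $H$, trapping the trajectory in a band hugging the wall. This claim is false: a cycle that touches the wall is not determined by the drop state alone, so its length and diameter are not bounded by any function of $s$; a finite automaton can walk a staircase from the pebble until it hits the wall and then reverse the staircase until it finds the pebble again, so wall-touching cycles of unbounded length, with drops arbitrarily far from $H$, are perfectly possible --- this is precisely how the paper's positive algorithms (e.g.\ procedures {\tt Staircase up to Wall} and {\tt Go Back to Pebble}) operate. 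Thus the regime ``infinitely many wall-touching cycles with the pebble far from the wall'' is exactly the regime in which a hypothetical one-pebble explorer would work, and your sketch offers no argument against it; extracting a sub-subsequence of cycles of one pure type does not help, because the dynamics is sequential and the intervening cycles cannot be discarded. A smaller flaw of the same kind occurs in your stationary-pebble subcase: inflating the pigeonhole from $s+1$ to $s+2$ only ensures that the two selected visits are not at the marked node $p$, whereas the shift-periodicity induction needs the whole connecting segment and all of its translates to receive the same inputs; a translate could land on $p$. (That one is repairable, e.g.\ by working after the last visit to $p$ and observing that a translate landing on $p$ would force an actual later visit to $p$, but your stated fix does not address it.)

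The paper avoids all of this with a different and much shorter argument that you never invoke: the adversary's freedom to choose the starting node. By the band-confinement result behind Theorem \ref{lb small free} (from the proof in \cite{ELSUW}), in the wall-free grid the trajectory of a one-pebble automaton started at $v$ stays between two parallel lines whose distance is independent of $v$; since the wedge is large, one can choose $v$ so that this band stays at distance greater than $1$ from $H$, whence the agent never detects the wall, behaves exactly as in the empty grid, and remains confined to a band missing infinitely many nodes of $W$. If you wish to salvage your direct approach, the wall-touching case is where the real work lies; it is not a routine obstacle.
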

 
\begin{proof}
Consider any automaton equipped with one pebble and let $T_v$ be its trajectory in the empty grid (without any walls) starting at node $v$.
By the argument from the proof of Theorem \ref{lb small free} there exist two parallel lines $L'(v)$ and $L''(v)$ such that the entire trajectory $T_v$ is contained between those two lines and the distance between them does not depend on $v$.
Now consider a large semi-walled wedge $W$ with wall $H$. Since $W$ is large,  there exists a node $v$ in $W$ such that  the part of the plane between lines $L'(v)$ and $L''(v)$ is at distance larger than 1 from any point of the half-line $H$. Consequently the agent starting at node $v$ will never hit the wall, and thus cannot explore the entire wedge $W$.
\end{proof}

\begin{corollary}
The minimum number of pebbles sufficient to explore any large semi-walled wedge is two.
\end{corollary}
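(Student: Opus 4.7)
The plan is to leverage the width-bounded-strip property of one-pebble automata that underlies Theorem~\ref{lb small free}, and to use the fact that a large wedge contains regions arbitrarily far from its wall. Intuitively, if the agent never gets close enough to the wall to notice it, its behavior inside the wedge is indistinguishable from its behavior in the obstacle-free grid, so it remains confined to a thin strip and cannot explore the wedge's infinite complement of that strip.

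First, I would recall from the proof of Theorem~\ref{lb small free} (which in turn invokes the result from \cite{ELSUW} on cooperating automata, viewing the pebble as a stationary second automaton) that for every automaton $\mathcal{A}$ equipped with one pebble there is a constant $c=c(\mathcal{A})$ such that, for any starting node $v$, the whole trajectory $T_v$ of $\mathcal{A}$ in the obstacle-free grid $\mathbb{Z}\times\mathbb{Z}$ is contained between two parallel lines $L'(v)$ and $L''(v)$ whose distance is at most $c$. This is precisely the input used in the paragraph preceding the statement.

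Next, let $W$ be any large semi-walled wedge with wall $H$. Since the clockwise angle between the two boundaries is at least $\pi$, the set of grid nodes of $W$ is not contained in any strip of bounded width parallel to $H$; in particular, for every width $c$ and every direction of strip, $W$ contains a node $v$ whose corresponding strip between $L'(v)$ and $L''(v)$ of width $c$ lies entirely at Euclidean distance greater than $1$ from every point of $H$. I would fix such a $v$.

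Starting the agent at $v$ inside $W$, I would then argue by induction on the number of steps that the trajectory in $W$ coincides with $T_v$. At each step, the agent is at a node whose distance to $H$ is larger than $1$, so none of its incident edges crosses $H$; consequently every port is free, the input symbol received in $W$ equals the one received in the obstacle-free grid, and the next state, move, and pebble action are identical. Hence the trajectory in $W$ stays in the strip of width $c$ around $v$ forever. Because $W$ contains infinitely many nodes outside this strip, those nodes are never visited, so $\mathcal{A}$ fails to explore $W$. The main obstacle is the first step: making sure the confinement lemma is correctly imported as a black box for single-pebble automata, which follows from the simulation of a pebble by a stationary automaton used in the proof of Theorem~\ref{lb small free}.
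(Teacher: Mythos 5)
Your argument covers only half of the statement. The corollary asserts that the minimum number of pebbles is \emph{two}, which requires both (a) that one pebble is insufficient and (b) that two pebbles suffice. Everything you wrote addresses (a), and it does so essentially as the paper does in its impossibility theorem for large semi-walled wedges: import from the proof of Theorem~\ref{lb small free} (via the simulation of a pebble by an additional automaton and the band-confinement result of \cite{ELSUW}) that the one-pebble trajectory $T_v$ in the empty grid stays in a strip of width bounded independently of $v$, pick a starting node $v$ of the large wedge whose strip stays at distance greater than $1$ from the wall $H$, and conclude that the agent behaves exactly as in the obstacle-free grid and misses infinitely many nodes. Your explicit step-by-step induction showing that the trajectory in $W$ coincides with $T_v$ (because no incident edge crosses $H$, so all inputs agree) is a welcome elaboration of the paper's one-line ``consequently the agent will never hit the wall,'' and it is correct; note that the choice of $v$ works because, by translation invariance of the empty grid, the strip's direction and width do not depend on $v$, only its position does.

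What is missing is (b): you never invoke, or sketch, an automaton with two pebbles that explores every large semi-walled wedge. In the paper this is Theorem~\ref{large semi-walled} (Algorithm {\tt Explore Large Semi-Walled}), and the corollary is simply the conjunction of that theorem with the one-pebble impossibility result. Without citing or reproving the two-pebble upper bound, your argument only shows that the minimum is \emph{at least} two, not that it \emph{equals} two. The fix is a single sentence appealing to Theorem~\ref{large semi-walled}, but as written the proposal does not establish the stated corollary.
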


\subsection{Free wedges}

In this section we show that the minimum number of pebbles sufficient to explore a large free wedge is three.

\subsubsection{Exploration with three pebbles}

We first show how to explore any large free wedge using three pebbles. It is enough to consider the hardest case, when the wedge is the entire grid. The high-level idea of the algorithm is the following.
Pebbles are initialized so that  the first pebble is dropped at the starting node $v$ of the agent, the second pebble is one step North of $v$ and the third pebble is one step East of $v$. The agent is back at $v$. 
Then this elementary right triangle is expanded in consecutive phases, pushing and exploring each of its sides by one step. Suppose that the old positions of the three pebbles were $P_1,P_2,P_3$, with $P_1$ at the right angle and $P_2, P_3$ at distance $x$ steps North and East of it, respectively, and all grid nodes of this triangle were explored. Then the new position $P'_1$ of the first pebble is at distance 2 from  $P_1$, South-West from it, the new position $P'_2$ of the second pebble is at distance 3 from  $P_2$, North-North-West from it, and  the new position $P'_3$ of the third pebble is at distance 3 from  $P_3$, South-East-East from it. Vertical and horizontal sides of the triangle are explored going straight, and the hypothenuse side is explored using a staircase (see Fig. \ref{fig-6}). Hence all grid nodes of the triangle $P'_1,P'_2,P'_3$ are explored.

\begin{figure}[h]
   \vspace*{-.194in}
     \centering
    \includegraphics[scale = .6]{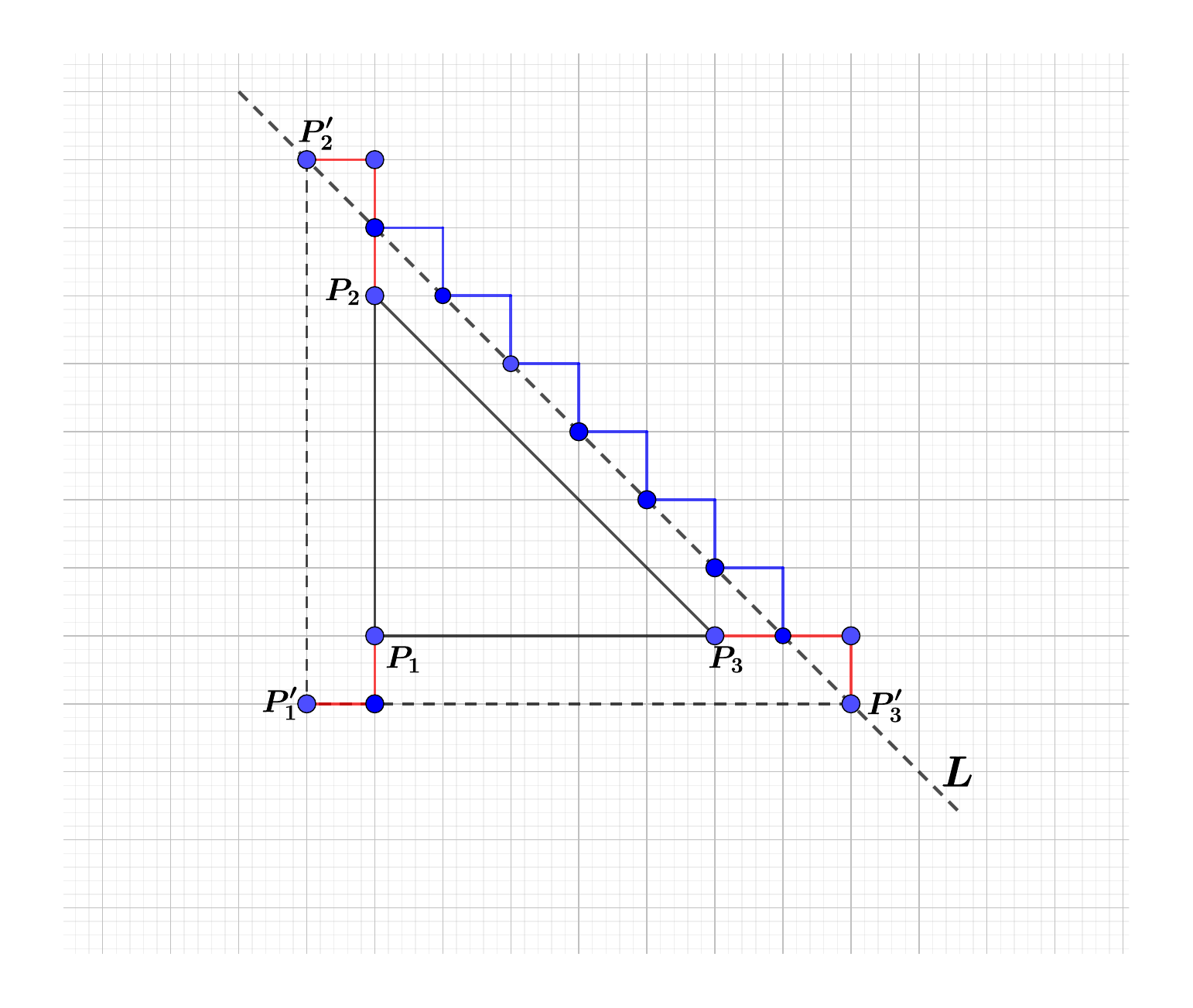}
   \vspace*{-.2in}
    \caption{A phase in the procedure {\tt Expand Triangle}. At the beginning of the phase the pebbles are at nodes $P_1,P_2,P_3$ and the agent is at $P_1$. 
    At the end of the phase the pebbles are at nodes $P'_1,P'_2,P'_3$, the agent is at node $P'_1$ and it has explored the triangle $P'_1,P'_2,P'_3$. }
    \label{fig-6}
\end{figure}

The following algorithm implements the above idea.

\pagebreak

{\bf Algorithm} {\tt Expand Triangle}

\hspace*{1cm} drop a pebble;

\hspace*{1cm} take port $N$;

\hspace*{1cm} drop a pebble;

\hspace*{1cm} take port $S$;

\hspace*{1cm} take port $E$;

\hspace*{1cm} drop a pebble;

\hspace*{1cm} take port $W$;

\hspace*{1cm} {\bf repeat forever}

\hspace*{2cm} pick a pebble;

\hspace*{2cm} take port $S$;

\hspace*{2cm}  take port $W$;

\hspace*{2cm} drop a pebble;

\hspace*{2cm} take port $E$;

\hspace*{2cm} {\bf while} no pebble on current node {\bf do}

\hspace*{3cm} take port $N$;

\hspace*{2cm} pick a pebble;

\hspace*{2cm} take port $N$;

\hspace*{2cm} take port $N$;

\hspace*{2cm}  take port $W$;

\hspace*{2cm} drop a pebble;

\hspace*{2cm}  take port $S$;

\hspace*{2cm} {\bf while} no pebble on current node {\bf do}

\hspace*{3cm}  take port $S$;

\hspace*{2cm} take port $E$;

\hspace*{2cm} take port $N$;

\hspace*{2cm} {\bf while} no pebble on current node {\bf do}

\hspace*{3cm} take port $E$;

\hspace*{2cm} pick a pebble;

\hspace*{2cm} take port $E$;

\hspace*{2cm} take port $E$;

\hspace*{2cm} take port $S$;

\hspace*{2cm} drop a pebble;

\hspace*{2cm}  take port $W$;

\hspace*{2cm} {\bf while} no pebble on current node {\bf do}

\hspace*{3cm} take port $W$;

\hspace*{2cm} take port $E$;

\hspace*{2cm} {\bf while} no pebble on current node {\bf do}

\hspace*{3cm} take port $E$;

\hspace*{2cm} take port $N$;

\hspace*{2cm} take port $W$;

\hspace*{2cm} {\bf while} no pebble on current node {\bf do}

\hspace*{3cm} take port $N$;

\hspace*{3cm} take port $W$;

\hspace*{2cm} take port $S$;

\hspace*{2cm}  {\bf while} no pebble on current node {\bf do}

\hspace*{3cm} take port $S$;

\vspace*{0.5cm}

The description of Algorithm  {\tt Expand Triangle} implies the following result.

\begin{theorem}\label{th-large-free}
For any large free wedge there exists an automaton that explores it with three pebbles.
\end{theorem}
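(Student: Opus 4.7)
The plan is to verify that Algorithm {\tt Expand Triangle} visits every node of the grid $\mathbb{Z}\times\mathbb{Z}$ when started at an arbitrary node $v$. Since a large free wedge is a subgraph of the grid and, by freeness, no port at any node is blocked, an algorithm that visits every grid node in particular visits every node of the wedge. So it suffices to analyze the algorithm with $v$ identified with the origin $(0,0)$.

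The proof hinges on a loop invariant. After the seven-line initialization block and after $k\geq 0$ subsequent passes through the \textbf{repeat forever} loop, I claim: (i) the three pebbles lie at
\[
P_1^{(k)}=(-k,-k),\qquad P_2^{(k)}=(-k,2k+1),\qquad P_3^{(k)}=(2k+1,-k);
\]
(ii) the agent stands at $P_1^{(k)}$; and (iii) every lattice point of the closed right triangle $T_k$ with these three vertices has been visited. Note that $T_k=\{(a,b):a\geq -k,\ b\geq -k,\ a+b\leq k+1\}$ and its two legs have length $3k+1$.

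The base case $k=0$ reduces to a direct inspection of the initialization, which drops pebbles at $(0,0)$, $(0,1)$, $(1,0)$ and returns the agent to $(0,0)$. For the inductive step, I would trace through a single iteration of the loop, matching each block of instructions to a sub-goal of the phase depicted in Figure \ref{fig-6}: (a) the agent picks $P_1^{(k)}$, steps S then W and drops it at $P_1^{(k+1)}$; (b) it returns north to $P_2^{(k)}$, picks it, steps N, N, W and drops it at $P_2^{(k+1)}$; (c) it slides south along the new west column back to $P_1^{(k+1)}$; (d) symmetrically it picks $P_3^{(k)}$ and drops it at $P_3^{(k+1)}$ after stepping E, E, S, then returns west along the new south row; and (e) a staircase of alternating N and W steps carries it from the east end of the new south row up to $P_2^{(k+1)}$, after which it slides south along the new west column to $P_1^{(k+1)}$. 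A bookkeeping check establishes that the union of the trajectories in (a)--(e) covers $T_{k+1}\setminus T_k$, which consists of the new west column $\{(-k-1,y):-k-1\leq y\leq 2k+3\}$, the new south row $\{(x,-k-1):-k-1\leq x\leq 2k+3\}$, and the diagonal band of lattice points with $a+b\in\{k+2,k+3\}$ lying strictly between the old and new hypotenuses.

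Once the invariant is established, correctness follows: every lattice point $(a,b)$ eventually lies in some $T_k$, namely for any $k\geq\max(-a,-b,a+b-1)$. Finally, the algorithm uses three pebbles and a bounded number of states (each ``\textbf{while} no pebble on current node'' loop is implemented by a fixed controller, and all other counting is done over constants that appear literally in the code), hence it is executed by an automaton with three pebbles. The main obstacle I expect is the bookkeeping for step (e): one must verify that the zig-zag of ``take port $N$; take port $W$'' pairs starting at $(2k+2,-k)$ traces out exactly the lattice points of the diagonal band $T_{k+1}\setminus T_k$ with $a+b\geq k+2$, that the while test halts the zig-zag precisely when the agent reaches the new $P_2$, and that the subsequent ``take port $S$'' loop stops at the new $P_1$. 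The remaining parts of the iteration are straight horizontal or vertical trips whose correctness follows immediately from the fact that a unique pebble is being searched for along the commanded cardinal direction.
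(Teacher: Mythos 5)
Your proposal is correct and follows essentially the same route as the paper: the paper's argument is exactly the phase-by-phase triangle-expansion invariant that you make explicit with coordinates, and your traced pebble positions $(-k,-k)$, $(-k,2k+1)$, $(2k+1,-k)$ and coverage of the new column, new row and new diagonal match the algorithm's behaviour. The only slip is harmless bookkeeping: $T_{k+1}\setminus T_k$ contains only the diagonal $a+b=k+2$ (the points with $a+b=k+3$ that the staircase also visits belong to later triangles), so your coverage claim holds a fortiori.
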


\subsubsection{Impossibility of exploration with two pebbles}

It remains to show that the number of pebbles in Theorem \ref{th-large-free} cannot be decreased. This follows 
from the result of \cite{BUW} which says that 3 (semi-synchronous) deterministic automata cannot explore the whole grid. In fact, the proof from \cite{BUW} shows more:  3 (semi-synchronous) deterministic automata cannot explore even all grid nodes in a half-plane, i.e., (in our terminology) cannot explore any large free wedge. Since automata can simulate pebbles, this proves that an automaton with two pebbles cannot explore any large free wedge.

\begin{corollary}
The minimum number of pebbles sufficient to explore any large free wedge is three.
\end{corollary}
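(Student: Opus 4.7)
The plan is to combine the two halves already on the table. The upper bound, that three pebbles suffice, is exactly the content of Theorem \ref{th-large-free}, so nothing new is needed there; I would simply invoke it. For the lower bound, I would argue that no automaton with two pebbles can explore any large free wedge, and then the corollary follows by combining both bounds.

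For the lower bound, I would first recall the simulation principle: an automaton equipped with $k$ movable pebbles can be simulated by $k+1$ cooperating (semi-synchronous) deterministic automata, where one of them plays the role of the ``real'' agent and each of the remaining $k$ stands in for a pebble by staying at the grid node where the corresponding pebble is dropped and only moving when its carrier returns to collect it. Since the agent's input at any step depends only on local information (free ports, presence of a pebble, number carried), this simulation is faithful: the set of nodes visited by the pebble-automaton coincides with the set visited by the team of three cooperating automata. Hence, if some automaton with two pebbles explored some large free wedge $W$, it would yield three cooperating deterministic automata that explore $W$.

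Next I would invoke the result of \cite{BUW} as strengthened in the paragraph preceding the corollary: three semi-synchronous deterministic automata cannot explore any large free wedge (equivalently, cannot explore all nodes of a half-plane, which is contained in every large free wedge). This directly contradicts the hypothetical exploration obtained by the simulation, so no automaton with two pebbles can explore any large free wedge.

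Combining the positive direction (Theorem \ref{th-large-free}) with this lower bound yields that the minimum number of pebbles sufficient to explore any large free wedge is exactly three, which is the corollary. I do not expect a genuine obstacle here, since both ingredients are already established; the only point requiring a little care is stating the pebble-to-automaton simulation precisely enough that the cited impossibility result of \cite{BUW} can be applied verbatim, and this is exactly the reduction already outlined right before the corollary in the paper.
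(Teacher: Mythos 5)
Your proposal is correct and follows essentially the same route as the paper: the upper bound is Theorem \ref{th-large-free}, and the lower bound is obtained by simulating an agent with two pebbles by three cooperating semi-synchronous deterministic automata and invoking the strengthened impossibility result of \cite{BUW} for large free wedges (half-planes). This matches the paper's own argument preceding the corollary.
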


\section{Conclusion}

We completely solved the problem of determining the minimum number of pebbles sufficient to explore any wedge of the oriented grid by an automaton, depending on the type and size of the wedge. These results yield several open problems. What is the minimum number of pebbles sufficient to explore other natural infinite subgraphs of the oriented grid by an automaton, for example the grid with finitely many obstacles? A natural generalization is to infinite subgraphs of multi-dimensional grids. In \cite{BS77} it was proved that no finite collection of automata can explore all connected subgraphs of the 3-dimensional grid, but the question is still open if an infinite subgraph of this grid is given as input, as we do in this paper.  The same question can be asked about infinite graphs other than subgraphs of a grid.



\end{document}